\setlist{nosep,topsep=0pt,leftmargin=*}
\setlist{nosep,topsep=0pt,leftmargin=*}
    \definecolor{myred}{HTML}{ea4335}
    \definecolor{mygreen}{HTML}{41a756}
    \definecolor{myblue}{HTML}{4285f4}
\renewcommand{\paragraph}[1]{\smallskip\noindent\textbf{#1.}}
\DeclareRobustCommand{\ubar}[1]{\underaccent{\bar}{#1}}
\newcommand{\OPT}{\textsf{OPT}\xspace}
\newcommand{\ALG}{\textsf{ALG}\xspace}
\newcommand{\OSDoS}{\textsf{OSDoS}\xspace}
\newcommand{\rDynamic}{\textsf{r-Dynamic}\xspace}
\newcommand{\rStatic}{\textsf{r-Static}\xspace}
\newcommand{\dDynamic}{\textsf{d-Dynamic}\xspace}
\newcommand{\crlb}{\alpha_{\mathcal{S}}^*(k)}
\newtheorem{theorem}{Theorem}
\newtheorem{corollary}{Corollary}
\newtheorem{lemma}{Lemma}
\newtheorem{remark}{Remark}
\newtheorem{proposition}{Proposition}
\newtheorem{assumption}{Assumption}
\theoremstyle{definition}
\newtheorem{definition}{Definition}
\newcommand{\citet}[1]{\textcite{#1}}
\title{Posted Price Mechanisms for Online Allocation with Diseconomies of Scale}
\author{
    Hossein Nekouyan Jazi\thanks{University of Alberta. Email: \texttt{nekouyan@ualberta.ca}}\\
    \and
    Bo Sun\thanks{University of Waterloo. Email:
    \texttt{bo.sun@uwaterloo.ca}}\\
    \and 
    Raouf Boutaba\thanks{University of Waterloo. Email:
    \texttt{rboutaba@uwaterloo.ca}}\\
    \and
    Xiaoqi Tan\thanks{University of Alberta. 
    Email: \texttt{xiaoqi.tan@ualberta.ca}}
}
\date{\vspace{-25pt}}
\begin{document}

\maketitle

\begin{abstract}
This paper addresses the online $k$-selection problem with diseconomies of scale (\OSDoS), where a seller seeks to maximize social welfare by optimally pricing items for sequentially arriving buyers, accounting for increasing marginal production costs. Previous studies have investigated deterministic dynamic pricing mechanisms for such settings. However, significant challenges remain, particularly in achieving optimality with small or finite inventories and developing effective randomized posted price mechanisms. To bridge this gap, we propose a novel randomized dynamic pricing mechanism for \OSDoS, providing a tighter lower bound on the competitive ratio compared to prior work. Our approach ensures optimal performance in small inventory settings (i.e., when $k$ is small) and surpasses existing online mechanisms in large inventory settings (i.e., when $k$ is large), leading to the best-known posted price mechanism for optimizing online selection and allocation with diseconomies of scale across varying inventory sizes.
\end{abstract}

\section{Introduction}
Online resource allocation has been widely studied in recent years and finds a broad range of applications in cloud computing~\cite{Zhang2017,XZhang_2015}, network routing~\cite{cao2022online,awerbuch1993throughput,buchbinder2009online}, and various other online, market-based Internet platforms. In this problem, most existing studies assume that the seller has a finite inventory of resources before a stream of online buyers arrives, with the goal of maximizing social welfare or profit from these resources. However, in real-world applications, sellers often face diseconomies of scale in providing resources—meaning they incur increasing marginal costs for supplying each additional unit of resource.
For instance, in cloud computing systems, the power cost of servers increases superlinearly as the utilization of computing resources grows~\cite{diseconomy_cost}. Similarly, in network routing, congestion costs (e.g., end-to-end delay) increase significantly with the rise in traffic intensity brought by users.  

In this work, we study online resource allocation with increasing marginal production costs. In particular, we frame it as an online $k$-selection with diseconomies of scale (\OSDoS) in a posted price mechanism: A seller offers a certain item to buyers arriving one at a time in an online manner. Each buyer has a private valuation $v_t$ for one unit of the item. The seller can produce $k$ units of the item in total; however, the marginal cost of producing each unit increases as more units are produced. When the $t$-th buyer arrives, the seller posts a price $p_t$ to the buyer, provided that fewer than $k$ units have already been produced and allocated. If the buyer’s valuation $v_t$ exceeds $p_t$, the buyer accepts the price and takes one unit of the item. The objective is to maximize social welfare, defined as the sum of the utilities of all the buyers and the revenue of the seller.

The incorporation of increasing marginal production costs in online resource allocation was first introduced by \cite{Blum_2011} and later studied by \cite{Huang_2019} in online combinatorial auctions. Variants of \OSDoS have since been explored, including online convex packing and covering \cite{Azar_2016_convex_packing_covering_FOCS}, online knapsack with packing costs \cite{Tan_ORA_2020}, and online selection with convex costs \cite{Tan2023}. A key challenge in these problems is balancing pricing strategies. Setting prices too low early on may allocate many items to low-value buyers, increasing production costs and lowering social welfare. Conversely, overly high prices can result in missed opportunities to sell. Thus, pricing for $k$ units must carefully account for early-stage decisions to avoid rapid growth in marginal production costs while maximizing efficiency.

To address this challenge, Huang et al. \cite{Huang_2019} developed optimal \textit{deterministic dynamic pricing} mechanisms for fractional online combinatorial auctions with production costs and infinite capacity ($k = \infty$). They extended this to the integral case using fractional pricing functions, achieving a competitive ratio close to the fractional setting but with a nonzero additive loss. However, as the competitive ratio approaches the fractional lower bound, the additive loss grows unbounded, which is undesirable. To overcome this, Tan et al. \cite{Tan2023} studied online selection with convex costs and limited supply ($k < \infty$), establishing a lower bound for the integral setting without additive loss. They further showed that the competitive ratio of their deterministic posted price mechanism asymptotically converges to the lower bound as k grows large. Recently, Sun et al.~\cite{sun2024static} proposed a \textit{randomized static pricing} algorithm, which samples a static price from a pre-determined distribution for \OSDoS. This randomization improves performance over the deterministic approach in small inventory settings but is not asymptotically optimal and fails to converge to the lower bound from \cite{Tan2023} as $k \to \infty$.

Despite previous efforts, two questions remain unresolved:  First, how to derive a tight lower bound for \OSDoS in small inventory settings?
Second, it remains an open question how to develop randomized algorithms to solve \OSDoS with tight guarantees, especially for settings when $k$ is small. 

In this paper, we address these questions by deriving a new tight lower bound for the \OSDoS problem, achieving the best-known results in both small and asymptotically large inventory settings. Building on this, we propose a novel \textit{randomized dynamic pricing} algorithm that uses up to $k$ randomized prices. We show that this algorithm is optimal for small inventories and outperforms existing designs from \cite{Tan2023} and \cite{sun2024static} in large inventory settings.

\subsection{Overview of Main Results and Techniques} 
The primary contribution of this paper is the development of novel posted price mechanisms using randomized dynamic pricing schemes that extend the results in \cite{Blum_2011, Huang_2019, Tan2023, sun2024static}. The proposed scheme, \rDynamic, sequentially updates the item’s price as new units are produced and sold. Specifically, as the marginal production cost increases with each additional unit, \rDynamic utilizes a different cumulative distribution function (CDF) to independently randomize the price for each unit. The main lower bound result is as follows:

\begin{theorem}[Informal Statement of Theorem \ref{lower-bound-main-theorem}]
\label{thm:informal-lb}
Assume that buyers' valuations are bounded within the range $[L, U]$ and the cumulative cost of production up to the $i$-th unit is given by $f(i)$. The seller can produce a total of $k$ units. For any given $k \geq 1$,  $U \geq L \geq 1$, and a cumulative production cost function $f$, no online algorithm can be $(\alpha_{\mathcal{S}}^*(k) - \epsilon)$-competitive for any $ \epsilon > 0 $, where $\mathcal{S}:=\{L,U,f\}$.
\end{theorem}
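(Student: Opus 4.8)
The plan is to exhibit a family of ``staircase'' instances on which a simple averaging argument shows that randomization cannot beat a \emph{fractional} deterministic allocation, because diseconomies of scale make the cumulative cost $f$ (discretely) convex; the competitive ratio of the best fractional allocation on this family is then shown to approach $\crlb$. Fix a fine grid $L=w_1<w_2<\dots<w_R=U$ and, for $r\in\{1,\dots,R\}$, let $I_r$ be the instance made of $r$ consecutive \emph{rounds}, round $j$ presenting $k$ buyers of valuation $w_j$. These instances are nested: $I_r$ is a prefix of $I_R$, so while an online algorithm is inside round $j$ it cannot tell $I_j$ from any $I_r$ with $r>j$. Since $I_r$ contains $k$ buyers of the top valuation $w_r$ and the seller produces at most $k$ units, $\mathrm{OPT}(I_r)=g(w_r)$ with $g(v):=\max_{0\le n\le k}(nv-f(n))$.

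Next, fix \emph{any} randomized online algorithm. Let $X_j$ be the number of units it sells during round $j$ and $S_r=\sum_{j\le r}X_j$; by the online/nested structure the joint distribution of $(X_1,\dots,X_r)$ is the same under $I_r$ and under $I_R$, so we may place all of $X_1,\dots,X_R$ on one probability space with $X_j\ge 0$, $\sum_j X_j\le k$. A buyer served in round $j$ adds $w_j$ to welfare, so $\mathrm{ALG}(I_r)=\sum_{j\le r}X_jw_j-f(S_r)$, and with $x_j:=\mathbb E[X_j]$ and $\bar f$ the convex extension of $f$ to $[0,k]$, Jensen's inequality gives, for every $r$ at once,
\[
\mathbb E[\mathrm{ALG}(I_r)]=\sum_{j\le r}x_jw_j-\mathbb E[f(S_r)]\ \le\ \sum_{j\le r}x_jw_j-\bar f\Big(\textstyle\sum_{j\le r}x_j\Big).
\]
Hence $\mathbb E[\mathrm{ALG}(I_r)]$ is dominated, for all $r$ simultaneously, by the value of a single feasible fractional schedule $x=(x_j)\ge 0$ with $\sum_j x_j\le k$, and therefore
\[
\max_{1\le r\le R}\frac{\mathrm{OPT}(I_r)}{\mathbb E[\mathrm{ALG}(I_r)]}\ \ge\ \Phi(R):=\min_{\substack{x\ge 0\\ \sum_j x_j\le k}}\ \max_{1\le r\le R}\ \frac{g(w_r)}{\sum_{j\le r}x_jw_j-\bar f(\sum_{j\le r}x_j)}.
\]
Consequently, once we show $\Phi(R)\to\crlb$ as the grid is refined, the theorem follows: given $\epsilon>0$, pick $R$ with $\Phi(R)>\crlb-\epsilon$; then some $I_r$ witnesses $\mathbb E[\mathrm{ALG}(I_r)]<\mathrm{OPT}(I_r)/(\crlb-\epsilon)$, so the algorithm is not $(\crlb-\epsilon)$-competitive.

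To evaluate $\lim_R\Phi(R)$: fixing a candidate ratio $\alpha$, the constraint ``$\sum_{j\le r}x_jw_j-\bar f(\sum_{j\le r}x_j)\ge g(w_r)/\alpha$ for all $r$'' is convex in $x$, so $\Phi(R)$ is the optimum of a convex program; strong duality produces an optimal adversary distribution $(q_r)_r$ over stopping rounds together with the tight selling schedule, and as the grid refines these converge to a continuous min--max over stopping levels $v\in[L,U]$ and nondecreasing fractional schedules, which is exactly $\crlb$ (and, along nested grids, $\Phi(R)$ is monotone, so the limit is approached from below). The hard part is precisely this last step: showing that the staircase family is asymptotically the worst case among \emph{all} adversary strategies (in particular that placing exactly $k$ identical buyers in each round loses nothing), and carrying out the duality/variational analysis identifying the limit with the specified $\crlb$. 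The rest is routine: uniform discretization estimates so that $g(w_r)$ and $\sum_{j\le r}x_jw_j-\bar f(\sum_{j\le r}x_j)$ track their continuous counterparts (using $w_{j+1}/w_j\to1$ and Lipschitzness of $\bar f$ on $[0,k]$); controlling the gap between $\bar f$ and $f$ near the endpoints of $[0,k]$ and at jumps of the offline optimizer in $g$; and dispatching degenerate instances with $\mathbb E[\mathrm{ALG}]\le 0$, where the ratio bound is immediate since $\mathrm{OPT}\ge 0$.
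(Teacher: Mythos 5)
Your setup matches the paper's quite closely: the staircase family $I_r$ is exactly the paper's $\{\mathcal{I}_v^{(\epsilon)}\}$, and the observation that any online algorithm must be competitive against every prefix $I_r$ simultaneously is the paper's Eq.~\eqref{lower-bound-system-kselection-cost}. Your Jensen-based reduction is a genuine simplification of the paper's bookkeeping: the paper tracks the $k$ allocation functions $\psi_i(v)=\Pr[S_r\ge i]$ and uses the identity $\mathbb{E}[f(S_r)]=\sum_i c_i\psi_i(v)$, whereas you collapse to the single quantity $\chi(v)=\mathbb{E}[S_r]$ and pay the piecewise-linear convex extension $\bar f(\chi(v))$. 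Both give the same continuous variational problem in the limit, because the paper's tight design (Lemma~\ref{property-1} and Eq.~\eqref{lower-bound-optimal-functions-design}) concentrates $S_r$ on two adjacent integers, precisely where Jensen is tight. That part of your argument is sound, and in fact slightly cleaner than the paper's.

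The gap is that the entire quantitative content of Theorem~\ref{thm:informal-lb} — that the limiting min--max equals the \emph{specific} number $\crlb$ defined by Eq.~\eqref{eq:lower-bound-alpha-star-equation} with $\ubar{k}$ and $\xi$ from Eqs.~\eqref{eq:def-h}--\eqref{eq:def-u} — is exactly the part you flag as ``the hard part'' and do not carry out. Writing $\Phi(R)$ as the value of a convex program and invoking strong duality does not, by itself, identify that value with the root of Eq.~\eqref{eq:lower-bound-alpha-star-equation}; the paper gets there by (i) showing the infimizer can be taken of staircase form (Lemma~\ref{lemma:lb:tightness} and Lemma~\ref{property-1}), (ii) converting the resulting equalities into a chain of first-order ODEs $(v-c_i)\,d\psi_i(v)=\tfrac{k}{\alpha}$ on successive intervals $[\ell_i,u_i]$ (Proposition~\ref{prop:lower-bound-psi-star-design}), (iii) composing the interval endpoints to get the closed form for $u_k$, and (iv) pinning down $\alpha_{\mathcal S}^*(k)$ by the boundary condition $u_k=U$. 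None of steps (ii)--(iv) appears in your proposal, and the duality route you gesture at would still need an equivalent explicit computation to recover Eq.~\eqref{eq:lower-bound-alpha-star-equation}. Additionally, your worry about ``showing the staircase family is asymptotically the worst case among all adversary strategies'' is a red herring for the lower bound direction: the family only needs to witness the bound, not dominate all others. The real missing piece is the ODE/endpoint computation, not an optimality-of-adversary argument.
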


We note that \cite{Tan2023} also established a lower bound for the competitive ratio of online algorithms for \OSDoS, but it was derived by connecting the integral selection problem to its fractional counterpart. This approach requires the cumulative production cost function  $f$  to be defined not only at discrete points but also for all fractional values in  $[0, k]$, leading to two issues: (i) assuming the availability of a continuous cost function  $f$  may be impractical for an inherently integral problem, and (ii) the lower bound is only tight in large inventory settings, as it assumes  $k \to \infty$. 
 
In this paper, we address these two issues by deriving the lower bound $\alpha_{\mathcal{S}}^*(k)$ via a totally different approach. In particular, we do not rely on results in the fractional setting and only need to assume that the cost function $ f(i) $ is defined at discrete points for $ i \in \{1, 2, \cdots, k\} $, leading to the tight lower bound $ \alpha_{\mathcal{S}}^*(k) $ for all $ k\geq 1$.

\begin{figure}
\centering
\includegraphics[scale = 0.45]{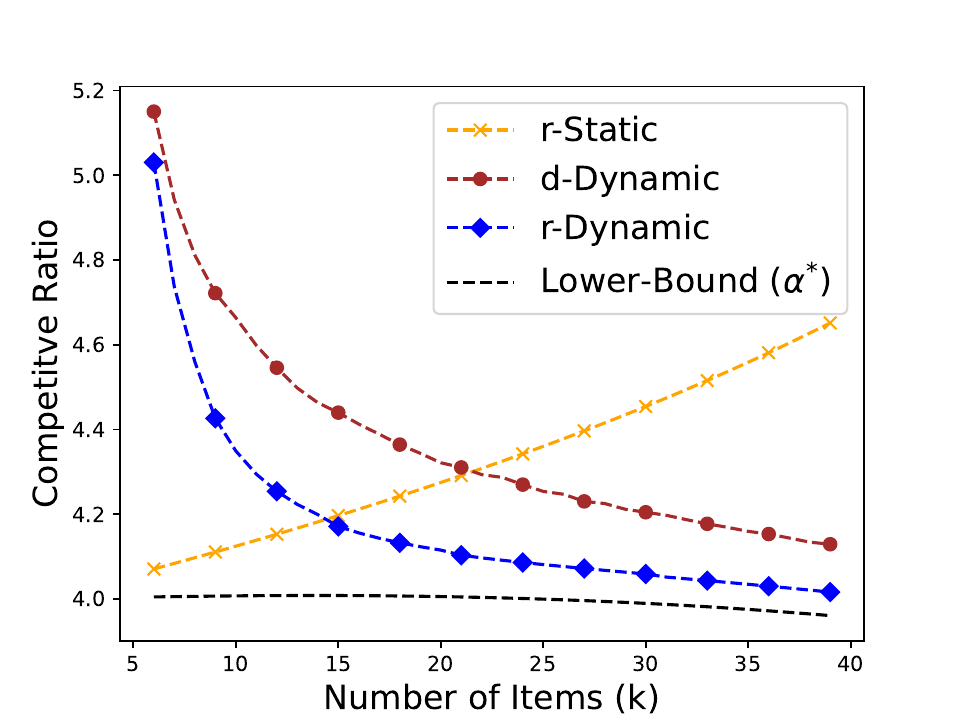}%
\caption{The blue curve (i.e., \rDynamic) corresponds to the competitive ratio of Algorithm \ref{alg:kselection-cost} that uses randomized dynamic pricing. The red curve (i.e., \dDynamic) and the yellow curve (i.e., \rStatic) correspond to the competitive ratios of the deterministic dynamic pricing mechanism developed by \cite{Tan2023} and the static randomized pricing mechanism by \cite{sun2024static}. In this figure, we set $L=1$, $U=10$, and $f(i)=\frac{i^{2}}{59}$.}
\label{figure:comparisonwithTan2023}
\end{figure}

\begin{theorem}[Informal Statement of Theorem \ref{upper-bound-large-inventory-cr}]\label{theorem:upper_bound_informal}
For any given $k \geq 1$,  $U \geq L \geq 1$, and a cumulative production cost function $f$, there exists a randomized dynamic price mechanism (\rDynamic) that achieves a competitive ratio of $ \alpha_{\mathcal{S}}^*(k)\cdot \exp(\frac{\crlb}{k}) $. In addition, when $ k =  2$, \rDynamic is $\alpha_{\mathcal{S}}^*(2)$-competitive.
\end{theorem}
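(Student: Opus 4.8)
The plan is to build \rDynamic from a recursively defined family of price distributions, bound its competitive ratio by an online primal--dual argument, and then sharpen that bound to an exact one when $k=2$. Concretely, fix a target ratio $\alpha$; when $j-1$ units have been produced and a new buyer arrives, \rDynamic posts an independent random price $P_j\sim\Phi_j$ supported on $[\max\{L,\Delta_j\},U]$ with $\Delta_j:=f(j)-f(j-1)$ (nondecreasing by convexity of $f$), and the buyer takes the unit iff $v_t\ge P_j$. The CDFs are produced by a forward recursion: $\Phi_1$ solves a one-variable balance equation, and each subsequent $\Phi_j$ is pinned down from $\Phi_1,\dots,\Phi_{j-1}$ by a difference equation that is the discrete counterpart of the continuous ``balance ODE'' behind the fractional analyses of \cite{Huang_2019,Tan2023}, with boundary conditions forcing each $\Phi_j$ to reach $1$ at $v=U$ (possibly via an atom there). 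I would let $\alpha_{\mathrm{alg}}(k)$ be the smallest $\alpha$ for which this recursion outputs valid, stochastically ordered CDFs on $[L,U]$.

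For the competitive guarantee I would use LP duality. Writing welfare as $\sum_t v_tx_t-\sum_i\Delta_iz_i$ subject to $\sum_tx_t\le\sum_iz_i$ and box constraints gives $\OPT=\min_{\lambda\ge0}\big[\sum_t(v_t-\lambda)^+ +\sum_i(\lambda-\Delta_i)^+\big]$. Running an expected online primal--dual argument, as \rDynamic passes through the states ``$j-1$ units sold'' it drives a dual variable $\lambda$ together with a potential, and the recursion defining $\Phi_j$ is chosen precisely so that whenever a value-$v$ buyer is offered unit $j$ the expected increment of $\alpha\cdot(\text{welfare}+\text{potential})$ covers the corresponding increment of the dual objective. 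A standard exchange argument lets me assume the arrival order is nondecreasing in value (indeed a staircase), so the driven $\lambda$ only increases and the bookkeeping reduces to a one-dimensional state; summing the per-buyer invariant yields $\alpha\,\mathbb{E}[\ALG]\ge\OPT$ for every instance, hence \rDynamic is $\alpha$-competitive for all $\alpha\ge\alpha_{\mathrm{alg}}(k)$.

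It remains to show $\alpha_{\mathrm{alg}}(k)\le\crlb\exp(\crlb/k)$. By Theorem~\ref{lower-bound-main-theorem}, $\crlb$ is characterized by the analogous \emph{discrete} closing condition for the adversary's price ladder, while $\alpha_{\mathrm{alg}}(k)$ solves a right-endpoint discretization of the same balance relation; the two closing conditions therefore differ only by a convexity/quadrature error at each of the $k$ units. I would show this error inflates the feasible $\alpha$ by a factor $1+O(\crlb/k^2)$ per unit and that the product of these $k$ factors is at most $\exp(\crlb/k)$, which gives the claim. \textbf{This telescoping discretization estimate is the main obstacle}: it requires a Gr\"onwall-type comparison of the discrete $\Phi_j$ recursion with its continuous envelope together with a careful per-unit accounting, rather than the crude additive $O(1/k)$ bound an off-the-shelf Euler argument would give.

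Finally, for $k=2$ the discretization is exact. The unit-$2$ equation is a single-variable balance condition with a closed-form solution on $[\max\{L,\Delta_2\},U]$; substituting it, the unit-$1$ equation reduces to another single-variable condition that determines $\Phi_1$ on $[\max\{L,\Delta_1\},U]$ in closed form. I would then verify by direct computation -- splitting into the cases $\Delta_2\le L$ versus $\Delta_2>L$, and likewise for $\Delta_1$ -- that $\Phi_1$ and $\Phi_2$ are genuine CDFs precisely when $\alpha=\alpha_{\mathcal{S}}^*(2)$, so that $\alpha_{\mathrm{alg}}(2)=\alpha_{\mathcal{S}}^*(2)$ and, by the primal--dual argument of the second step, \rDynamic is $\alpha_{\mathcal{S}}^*(2)$-competitive, matching the lower bound.
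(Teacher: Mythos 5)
Your proposal is genuinely different from the paper's proof, and the difference is where the trouble lies. The paper does not use LP duality or a primal--dual potential at all. It designs the pricing functions with $\alpha = \alpha_{\mathcal{S}}^*(k)$ exactly (the inverses of the $\psi^{(\alpha)}_i$ from Proposition~\ref{prop:lower-bound-psi-star-design}), and then gives a direct combinatorial analysis on an arbitrary instance $\mathcal{I}$: it defines $\omega$ as the largest number of units sellable over all price-vector realizations, picks the realization $\boldsymbol{\pi}$ that sells the $\omega$-th unit earliest, and shows (Lemma~\ref{lem:omega-1}) that \emph{every} realization has sold at least $\omega-1$ units by that time. The factor $e^{\alpha_{\mathcal{S}}^*(k)/k}$ then comes purely from the analysis refusing to distinguish sample paths that sell $\omega$ versus $\omega-1$ units: the worst-case loss of one unit's welfare is bounded by $\frac{U_\omega - U_{\omega-1}}{k\,U_{\omega-1} - \sum_i c_i}$, which the geometric spacing $U_i = (L_i-c_i)e^{\alpha_{\mathcal{S}}^*(k)/k}+c_i$ caps at $e^{\alpha_{\mathcal{S}}^*(k)/k}-1$. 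For $k=2$ the paper simply enumerates the realizations exhaustively (Corollary~\ref{corrolary:upper-bound-small-inventory-optimality}) so the one-unit slack disappears.

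The concrete gap in your plan is the ``telescoping discretization estimate,'' which is aimed at the wrong target. You frame $\alpha_{\mathrm{alg}}(k)$ as solving a right-endpoint discretization of a continuous balance ODE and hope to charge a per-unit quadrature error of $1+O(\alpha_{\mathcal{S}}^*(k)/k^2)$. But $\alpha_{\mathcal{S}}^*(k)$ is \emph{already} the discrete lower bound of Theorem~\ref{lower-bound-main-theorem}: the recursion defining $\{\psi^{(\alpha)}_i\}$ is continuous in $v$ and exact in $i\in[k]$, so with $\alpha=\alpha_{\mathcal{S}}^*(k)$ the CDFs you want are already valid (indeed, that is precisely how $\alpha_{\mathcal{S}}^*(k)$ is defined: the smallest $\alpha$ making the system feasible). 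There is no discrete-vs-continuous quadrature gap to close; in your notation $\alpha_{\mathrm{alg}}(k) = \alpha_{\mathcal{S}}^*(k)$. The multiplicative $e^{\alpha_{\mathcal{S}}^*(k)/k}$ arises because the \emph{competitive analysis of the posted-price algorithm} (whether done via your primal--dual or the paper's direct argument) cannot fully credit the fractional allocation that the lower-bound functions assume; it must accommodate integral sales and the randomness in which the last unit may or may not be sold. Your primal--dual plan does not surface this source of slack, and the comparison you propose (discrete recursion versus its continuous envelope) would not produce it. A secondary issue: the step ``a standard exchange argument lets me assume the arrival order is nondecreasing in value'' is not obviously valid for a posted-price mechanism with unit-indexed random prices, and the paper deliberately avoids any such reduction, handling arbitrary orders through the $\omega/\boldsymbol{\pi}/\tau_i$ machinery and Lemmas~\ref{lem:main:claim3-upper-bound-kselection-cost}--\ref{lem:main:claim2-kselection-production-cost}.
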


Due to the arbitrary nature of the cost function  $f$, neither our work nor \cite{Tan2023} can derive a closed-form expression for the competitive ratio, preventing a direct comparison between our \rDynamic and the deterministic dynamic pricing mechanism (\dDynamic) in \cite{Tan2023}. In Figure~\ref{figure:comparisonwithTan2023}, we compare the asymptotic performance of \rDynamic with \dDynamic from \cite{Tan2023} and the randomized static pricing mechanism (\rStatic) in \cite{sun2024static}. The results show that \rDynamic significantly outperforms both \dDynamic and \rStatic, converging faster to the lower bound as  $k \to \infty$. Notably, for small $ k $, \rDynamic achieves the lower bound when $ k = 2$. Beyond its strong theoretical guarantees, empirical results (Section~\ref{sec:emprical-rdynamics}) further confirm that \rDynamic consistently outperforms both \dDynamic and \rStatic, highlighting its superiority over existing designs.

The key technical component in deriving the above lower and upper bounds is a new \textit{representative function}-based approach, which models the dynamics of any randomized online algorithm using a sequence of $k$ probability functions, $\{\psi_i\}_{i \in [k]}$. We design a family of hard instances and characterize the performance of any $\alpha$-competitive algorithm on these instances through a set of differential equations involving $\{\psi_i\}_{i \in [k]}$. To determine the lower bound $\alpha_{\mathcal{S}}^*(k)$ in Theorem \ref{thm:informal-lb}, we compute the minimum $\alpha$ for which these equations have a feasible solution, namely valid probability functions $\{\psi_i\}_{i \in [k]}$. By reverse engineering the equations, we derive inverse probability functions, $\{\phi_i\}_{i \in [k]}$, for pricing each unit, which leads to \rDynamic in Theorem \ref{theorem:upper_bound_informal}.

\subsection{Other Related Work}
Online resource allocation—the process of assigning limited resources to a sequence of online requests to maximize social welfare or profit—has been a central topic in computer science and operations research. In addition to the previously mentioned related work, readers are referred to the survey by Gupta and Singla on the secretary problem \cite{Gupta_2020_survey_ROM} for a detailed discussion of online allocation and selection in random-order models. Significant advancements have also been made in studying the prophet inequality through the lens of posted price mechanisms \cite{Lucier_2017_Survey_PI,Correa_2019_Survey_PI} and in online matching with applications to Internet advertising (e.g., \cite{Mehta2013,Huang_2024_Survey}). Beyond the stochastic i.i.d. model in the prophet inequality, recent work explores the correlated arrival model based on a Markov chain~\cite{jia2023online}. However, these studies focus on variants of online allocation and selection without considering production costs. In contrast, our work primarily examines the impact of increasing marginal production costs on online $k$-selection.

Recent years have seen efforts to study online allocation problems with various forms of production costs in stochastic settings (e.g., \cite{Blum_2015_WINE}, \cite{Gupta_2018_ICALP}, \cite{Barman_2012_secretary_with_costs_ICALP}, \cite{Sekar_2016_IJCAI}). For instance, \cite{Blum_2015_WINE} examined online allocation with economies of scale (decreasing marginal costs), proposing a constant-competitive strategy for unit-demand customers with valuations sampled i.i.d. from an unknown distribution. In contrast, \cite{Sekar_2016_IJCAI} addressed Bayesian online allocation with convex production costs (diseconomies of scale), developing posted price mechanisms with  $O(1)$-approximation for fractionally subadditive buyers and logarithmic approximations for subadditive buyers. Our study differs by focusing on \OSDoS in adversarial settings, assuming no knowledge of the arrival sequence beyond the finite support of valuations, making these results not directly comparable to ours.

On the applied side, allocating limited resources under diseconomies of scale is common across various online platforms. For example, in online cloud resource allocation \cite{XZhang_2015}, convex server costs model energy consumption based on CPU utilization, while in online electric vehicle charging \cite{bo2018}, electricity generation costs are often modeled as nonlinear, typically quadratic.

\section{Problem Statement and Assumptions}
\label{section_OSCC_statement}
We formally define online $k$-selection with diseconomies of scale (\OSDoS) as follows. Consider an online market operating under posted price mechanisms. On the supply side, a seller can produce a total of $k$ units of an item, with increasing (or at least non-decreasing) marginal production costs.  Let $\boldsymbol{c} := \{c_i\}_{\forall i\in[k]}$ represent the \textit{marginal production cost}, where $c_i$ denotes the cost of producing the $i$-th unit, and $c_1 \leq c_2 \leq \dots \leq c_k $. Define $f(i) = \sum_{j=1}^i c_{j}$ as the \textit{cumulative production cost} of the first $i$ units. On the demand side, $T$ buyers arrive sequentially, each demanding one unit of the item. Let $v_t$ denote the private valuation of the $t$-th buyer. Once buyer $t$ arrives, a price $p_t$ is posted, and then the buyer decides to accept the price and make a purchase if a non-negative utility is gained $v_t - p_t \ge 0$, and reject it otherwise. 

Let $x_t \in \{0,1\}$ represent the decision of buyer $t$, where $x_t = 1$ indicates a purchase and $x_t = 0$ otherwise. Then buyer $t$ obtains a utility $(v_t - p_t)x_t $ and the seller collects a total revenue of $\sum_{t \in [T]} p_t x_t - f(\sum_{t \in [T]} x_t)$ from all buyers. The goal of the online market is to determine the posted prices $\{p_t\}_{\forall t\in[T]}$ to maximize the social welfare, which is the sum of utilities of all the buyers and the revenue of the producer, i.e., $\sum_{t \in [T]}x_t \cdot (v_t - p_t) + \sum_{t \in [T]} x_t \cdot p_t - f(\sum_{t \in [T]} x_t)= \sum_{t \in [T]}  v_t x_t - f(\sum_{t \in [T]} x_t)$.

Let $ \mathcal{I} = \{v_1, \cdots, v_T\} $ denote an arrival instance of buyers. An optimal offline algorithm that knows all the information of $ \mathcal{I} $ can obtain the optimal social welfare $\OPT(\mathcal{I})$ by solving the following optimization problem
\begin{align*}
\OPT(\mathcal{I}) = \max_{x_t\in\{0,1\}} \sum\nolimits_{t\in[T]} v_t x_t  - f\left(\sum\nolimits_{t\in[T]}x_{t}\right),  \qquad \text{s.t. } \sum\nolimits_{t\in[T]} x_{t} \leq k.
\end{align*} 
However, in the online market, the posted price $p_t$ is determined without knowing the valuations of future buyers $\{v_\tau\}_{\tau > t}$. We aim to design an online mechanism to determine the posted prices such that the social welfare achieved by the online mechanism, denoted by $\ALG(\mathcal{I})$, is competitive compared to $\OPT(\mathcal{I})$. Specifically, an online algorithm is $\alpha $-competitive if for any input instance $ \mathcal{I} $, the following inequality holds:
	\begin{align*}
		\alpha \ge  \frac{\OPT(\mathcal{I})}{\mathbb{E}[\ALG(\mathcal{I})]},
	\end{align*}
where the expectation of $\mathbb{E}[\ALG(\mathcal{I})]$ is taken with respect to the randomness of the online algorithm. To attain a bounded competitive ratio, we consider a constrained adversary model~\cite{online_selection_constrained_adversary_ICML_2021,Tan2023}, where the buyers' valuations are assumed to be bounded.
\begin{assumption}
Buyers' valuations are bounded in $[L,U]$, i.e., $v_t \in [L,U], \forall t\in[T]$.
\end{assumption}
The interval $[L,U]$ can be considered as the prediction interval that covers the valuations of all buyers~\cite{online_selection_constrained_adversary_ICML_2021}, and is known to the online algorithm.  As shown in \cite{Tan2023}, the competitive analysis of online algorithms for \OSDoS depends on the relationship between buyers' valuations and the production cost function. For simplicity, we focus on the case where the production cost is always smaller than the buyer’s valuation ($c_k < L$) and derive lower and upper bounds in Sections~\ref{section_threshold_policies} and \ref{sec:upper-bound}, respectively. In Appendices \ref{appendix-lower-bound-extension-general-cost-function} and \ref{appendix:upper-bound-general-cost-function}, we show that this assumption is without loss of generality, as our results extend naturally to the general case.

\section{Lower Bound for \OSDoS: Hardness of Allocation with Diseconomies of Scale}
\label{section_threshold_policies}

We first derive a tight lower bound for \OSDoS, which informs the design of \rDynamic (Algorithm \ref{alg:kselection-cost}) in Section \ref{sec:upper-bound}.

\subsection{Lower Bound $\alpha_{\mathcal{S}}^*(k)$}
Theorem \ref{lower-bound-main-theorem} below formally states the lower bound $\alpha_{\mathcal{S}}^*(k)$ for the competitive ratio of any online algorithm  for \OSDoS.
\begin{theorem}[Lower Bound]
\label{lower-bound-main-theorem}
Given $\mathcal{S} = \{L, U, f\} $ for the \OSDoS problem with $k \ge 1$, no online algorithm, including those with randomization, can achieve a competitive ratio smaller than $\crlb$, where $\crlb$ is the solution to the following equation of $\alpha$:
\begin{align}
\label{eq:lower-bound-alpha-star-equation}
U = \left(L-c_{\ubar{k}}\right) \cdot e^{\frac{\alpha}{k} \cdot (k+1-\ubar{k} -\xi )} + c_{\ubar{k}} \cdot e^{\frac{\alpha}{k} \cdot (k-\ubar{k})} + c_{\ubar{k}+1} \cdot \left(1-e^{\frac{\alpha}{k}}\right) \cdot e^{\frac{\alpha}{k} \cdot (k-1-\ubar{k})} + \cdots + c_{k} \cdot \left( 1- e^{\frac{\alpha}{k}}\right).  
\end{align}
In Eq. \eqref{eq:lower-bound-alpha-star-equation}, $ \ubar{k} \in [k] $ denotes the smallest natural number such that 
\begin{align}\label{eq:def-h}
    \sum\nolimits_{i=1}^{\ubar{k}} (L-c_{i}) \ge \frac{1}{\alpha} \cdot \Big(k L - \sum\nolimits_{i=1}^{k} c_{i}\Big),
\end{align}
and $ \xi \in (0,1] $ denotes the unique solution to the following equation 
\begin{align}\label{eq:def-u}
\xi = \frac{\frac{1}{\alpha}\cdot(k L - \sum_{i=1}^{k}c_{i})-\sum_{i=1}^{\ubar{k}-1} (L-c_{i})}{L - c_{\ubar{k}}}.
\end{align}
\end{theorem}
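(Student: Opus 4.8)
The plan is to construct an explicit adversarial family of instances on which no randomized online algorithm can beat $\crlb$, using Yao's principle together with a "representative function" encoding of the algorithm's behavior. First I would set up the hard instances: consider sequences of buyers whose valuations are non-decreasing, drawn from a continuum of thresholds in $[L,U]$, so that the adversary can stop the sequence at any point. Concretely, for a price level $p \in [L,U]$, the instance $\mathcal{I}_p$ presents a long run of buyers with valuations increasing up to $p$ and then nothing more; the offline optimum on $\mathcal{I}_p$ fills units greedily as long as the marginal valuation exceeds the marginal cost $c_i$. The key modeling step is to summarize any (randomized) online algorithm by the functions $\psi_i(p) := \Pr[\text{at least } i \text{ units sold by the time the posted price first exceeds } p]$, for $i \in [k]$; monotonicity in $p$ and the ordering $\psi_1 \ge \psi_2 \ge \cdots \ge \psi_k$ are forced, and each $\psi_i$ is a valid "survival"-type function on $[L,U]$.

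Next I would write down the competitive-ratio constraint as a differential inequality in the $\psi_i$. Differentiating $\ALG$ and $\OPT$ along the instance parameter $p$: when the threshold moves from $p$ to $p+dp$, $\OPT$ grows at a rate governed by how many units it would profitably allocate at valuation $p$ (i.e. by $|\{i : c_i < p\}|$ times $dp$, roughly), while $\mathbb{E}[\ALG]$ grows in proportion to $\sum_i \psi_i(p)\, dp$ plus the marginal-cost corrections $c_i \, d\psi_i$. Imposing $\OPT(\mathcal{I}_p) \le \alpha\, \mathbb{E}[\ALG(\mathcal{I}_p)]$ for all $p$ and integrating yields a system of linear ODEs for the $\psi_i$ with the coupling coming from the fact that selling the $i$-th unit at price $p$ ties $\psi_i$ to $\psi_{i-1}$. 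Solving this system from the boundary condition at $p=L$ (where nothing has been profitably sold yet, modulo the $\ubar{k}$ initial "free" units with $c_i < L$) produces exponential solutions $\psi_i(p) \sim e^{(\alpha/k)(\cdots)}$, and the requirement that $\psi_1(U) \le 1$ — i.e. that the total probability mass never exceeds one — is exactly Eq. \eqref{eq:lower-bound-alpha-star-equation}. The definitions of $\ubar{k}$ via \eqref{eq:def-h} and of $\xi$ via \eqref{eq:def-u} arise as the precise bookkeeping for where the "free allocation" regime ends and where the fractional boundary term $\xi$ enters the first nontrivial unit.

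Finally I would verify that $\crlb$ is well-defined: show the right-hand side of \eqref{eq:lower-bound-alpha-star-equation} is continuous and monotone in $\alpha$ so the root is unique, check $\ubar{k} \in [k]$ is attained (using $c_k < L$, so $kL - \sum c_i > 0$), and check $\xi \in (0,1]$ solves \eqref{eq:def-u} uniquely by the minimality of $\ubar{k}$. Then Yao's principle: put a suitable prior over the stopping point $p$ so that any deterministic algorithm's expected ratio against this prior is at least $\crlb - o(1)$; equivalently, exhibit for every online algorithm an instance $\mathcal{I}_p$ in the support on which $\OPT/\mathbb{E}[\ALG] \ge \crlb - \epsilon$, which follows because the $\psi_i$ cannot simultaneously satisfy all the integrated constraints for $\alpha < \crlb$.

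The main obstacle I anticipate is getting the discrete-to-continuous bookkeeping exactly right — in particular handling the boundary unit $\ubar{k}$ and the fractional parameter $\xi$ so that the ODE system's boundary conditions reproduce \eqref{eq:def-h}–\eqref{eq:def-u} precisely, rather than up to lower-order error. A secondary subtlety is justifying that restricting to non-decreasing valuation sequences loses no generality for the lower bound, and that the "representative functions" $\psi_i$ genuinely capture all (adaptive, randomized) algorithms; this is where the argument needs care to avoid circularity, and I would handle it by defining $\psi_i$ directly from the algorithm's induced distribution over allocation trajectories and arguing the constraints hold trajectory-by-trajectory before taking expectations.
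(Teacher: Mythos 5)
Your plan tracks the paper's proof quite closely: hard instances with non-decreasing valuations that may terminate at any stage, ``allocation functions'' $\psi_i$ encoding the algorithm's (probabilistic) allocation level, an integrated $\alpha$-competitiveness constraint on the $\psi_i$, exponential solutions of the resulting ODE system, and the closed-form equation \eqref{eq:lower-bound-alpha-star-equation} emerging as a feasibility threshold, with $\ubar{k}$ and $\xi$ as bookkeeping for how much must be allocated at $v=L$. Two slips are worth fixing. First, the binding feasibility constraint at $v=U$ is $\psi_k(U)\le 1$, not $\psi_1(U)\le 1$: as $v$ grows, $\psi_1$ saturates first and $\psi_k$ last, so it is the \emph{last} allocation function that overshoots $1$ when $\alpha$ is too small; ``total probability mass never exceeds one'' is also not the right description, since each $\psi_i$ is separately bounded by $1$. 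Second, Yao's principle is unnecessary and in fact awkward here, because the competitive ratio is $\OPT/\mathbb{E}[\ALG]$ rather than an expectation of a ratio; the paper instead argues \emph{directly} that any randomized algorithm induces $\psi_i(v)=\Pr[\text{at least }i\text{ units allocated by end of stage }v]$ (an instance-indexed, not algorithm-state-indexed, quantity), and that these must satisfy the integrated constraints, so infeasibility for $\alpha<\alpha^*_{\mathcal S}(k)$ rules out all randomized algorithms without any reduction to deterministic ones. Relatedly, you do not need (and the paper does not prove) that non-decreasing valuation sequences are without loss of generality: for a lower bound it suffices to exhibit one hard family, and the constraints you derive are necessary conditions rather than a full characterization of feasible algorithms.
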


Theorem \ref{lower-bound-main-theorem} is our main result concerning the hardness of \OSDoS. To prove Theorem~\ref{lower-bound-main-theorem}, a key step is to establish a set of necessary conditions that any $\alpha$-competitive online algorithm must satisfy. A formal proof will be provided in Section \ref{sec:proof_of_theorem_lower_bound}. Below, we offer several remarks to clarify the key intuitions. 
\begin{itemize}[leftmargin=*]
    \item By the definition of $\ubar{k}$ in Eq. \eqref{eq:def-h}, $\ubar{k}$ represents the minimum number of units that any $\alpha$-competitive deterministic algorithm, denoted by $ \ALG_{\textsf{d}} $, must sell when faced with an arrival instance of $k$ identical buyers with valuation $L$, denoted by $\mathcal{I}_{iden}^{(L)} = \{L, \cdots, L\}$. Under the instance $\mathcal{I}_{iden}^{(L)}$, the maximum social welfare achievable by the offline optimal algorithm is $kL - \sum_{i=1}^{k} c_{i}$. Therefore, $ \ALG_{\textsf{d}} $ must sell at least $\ubar{k}$ units to ensure $\alpha$-competitiveness, implying that $\ubar{k}$ is well-defined for all values of $\alpha \geq 1 $.

    \item Eq. \eqref{eq:def-u} demonstrates that $\xi$ is defined as the fraction of the $\ubar{k}$-th unit required to make Eq. \eqref{eq:def-h} binding. We argue that $\xi \in (0, 1]$ is well-defined and always exists as long as there is an $\alpha$-competitive randomized algorithm, denoted as $ \ALG_{\textsf{r}} $. Specifically, if a randomized algorithm $ \ALG_{\textsf{r}} $ is run on the same instance $\mathcal{I}_{iden}^{(L)}$, $ \ALG_{\textsf{r}} $ must sell at least $\ubar{k} - 1$ units plus a fraction $\xi$ of the $\ubar{k}$-th unit of the item, in expectation. 

    \item Note that, in general, a closed-form expression for the lower bound $\alpha_{\mathcal{S}}^*(k)$ cannot be derived. This is expected due to the arbitrary nature of the sequence of marginal production costs. However, because of the monotonicity of $\ubar{k}$, $\xi$, and the right-hand side of Eq. \eqref{eq:lower-bound-alpha-star-equation} with respect to $\alpha$, $\alpha_{\mathcal{S}}^*(k)$ can be easily computed by solving Eq. \eqref{eq:lower-bound-alpha-star-equation} numerically using binary search.
\end{itemize}

In the next subsection, we construct a family of hard instances and introduce a novel representative function-based approach to derive a system of differential equations, which are crucial to proving the lower bound result in Theorem \ref{lower-bound-main-theorem}.

\subsection{Representing Worst-Case Performance by (Probabilistic) Allocation Functions} 
\label{sec:nec-cond}

\subsubsection{\bfseries Hard Instances $ \{\mathcal{I}_{v}^{(\epsilon)} \}_{\forall v\in [L, U]}$}
We introduce a family of hard instances based on the instance $\mathcal{I}^{(\epsilon)}$ defined as follows.

\begin{definition}[Instance $\mathcal{I}^{(\epsilon)}$]\label{def:hard_instance}
For any given value of $\epsilon > 0$, the instance $\mathcal{I}^{(\epsilon)}$ begins with $k$ identical buyers, each having a valuation of $L$ during the initial stage. This is followed by a series of stages, each consisting of $k$ identical buyers, with valuations incrementally increasing by $\epsilon$, starting from $L + \epsilon$ and reaching the upper bound $L + \left\lfloor \frac{U-L}{\epsilon} \right\rfloor \cdot \epsilon$. The instance $\mathcal{I}^{(\epsilon)}$ is mathematically defined as:
\begin{align*}
\Bigg\{\underbrace{L,\dots,L}_{k \text{ buyers}}, & \underbrace{L + \epsilon, \dots, L + \epsilon}_{k \text{ buyers}}, \dots,  \underbrace{L + j\cdot \epsilon, \dots, L + j\cdot \epsilon}_{k \text{ buyers in stage } L + j\cdot \epsilon}, \dots,\underbrace{ L + \left\lfloor \frac{U-L}{\epsilon} \right\rfloor \cdot \epsilon, \dots,  L + \left\lfloor \frac{U-L}{\epsilon} \right\rfloor \cdot \epsilon}_{k \text{ buyers}}\Bigg\},
\end{align*}
where $j$ ranges from 1 to $\left\lfloor (U-L)/\epsilon \right\rfloor$. Furthmore, let us define the set $V^{(\epsilon)} = \{L, L + \epsilon, \dots, L + \left\lfloor (U-L)/\epsilon \right\rfloor \cdot \epsilon\}$ to contain all the possible valuations that buyers in the instance $\mathcal{I}^{(\epsilon)}$ may possess.
\end{definition}

We refer to the $k$ buyers with valuation $v \in V^{(\epsilon)}$ as \textit{stage-$v$} arrivals in $\mathcal{I}^{(\epsilon)}$.
For any $v \in V^{(\epsilon)}$, let $\mathcal{I}_{v}^{(\epsilon)}$ denote all the buyers in $\mathcal{I}^{(\epsilon)}$ from the beginning up to \textit{stage-$v$}. For instance, if $v = L + 2\epsilon$, then $\mathcal{I}_{v}^{(\epsilon)}$ includes the first $3k$ buyers in $\mathcal{I}^{(\epsilon)}$ with valuations $L$, $L + \epsilon$, and $L + 2\epsilon$. Due to the online nature of the problem, we emphasize that $\mathcal{I}^{(\epsilon)}$ may terminate at any stage $v$. In other words, there exists a family of hard instances, $\{\mathcal{I}_{v}^{(\epsilon)}\}_{\forall v \in V^{(\epsilon)}}$, induced by $\mathcal{I}^{(\epsilon)}$. Here, $\mathcal{I}_{v}^{(\epsilon)}$ denotes the arrival instance of $\mathcal{I}^{(\epsilon)}$ that terminates at stage-$v$. Henceforth, we will use ``\textit{instance $\mathcal{I}_{v}^{(\epsilon)}$}" and ``\textit{instance $\mathcal{I}^{(\epsilon)}$ by the end of stage-$v$}" interchangeably.

Given any $\alpha$-competitive algorithm \ALG, an arbitrary instance from $\{\mathcal{I}_{v}^{(\epsilon)}\}_{\forall v \in V^{(\epsilon)}}$ may be the one that \ALG processes. Thus, for any $v \in V^{(\epsilon)}$, by the end of stage-$v$ of $\mathcal{I}^{(\epsilon)}$, \ALG must achieve at least a $1/\alpha$ fraction of the optimal social welfare, $k v - \sum\nolimits_{i=1}^k c_i$, which is attained by rejecting all previous buyers except for the last $k$ buyers with valuation $v$. Consequently, an $\alpha$-competitive algorithm must ensure
\begin{align} \label{lower-bound-system-kselection-cost}
\ALG\left(\mathcal{I}_{v}^{(\epsilon)}\right)  \ge \frac{1}{\alpha} \cdot \left(k v - \sum\nolimits_{i=1}^k c_{i}\right), \quad \forall v \in V^{(\epsilon)},
\end{align}
where $ \ALG(\mathcal{I}_{v}^{(\epsilon)}) $ denotes the \textit{expected} performance of \ALG under the instance $ \mathcal{I}_{v}^{(\epsilon)} $.

\subsubsection{\bfseries Representing $ \ALG(\mathcal{I}_{v}^{(\epsilon)}) $ by Allocation Functions} 
For any randomized algorithm, we define $k+1$ states, $ \{q_i\}_{\forall i \in {\{0, \cdots, k\}}} $, which represent the allocation behavior of the online algorithm at any stage of instance $\mathcal{I}^{(\epsilon)}$, as follows:
\begin{itemize}[leftmargin=*]
    \item State $q_{0}$ corresponds to the situation where the online algorithm has not allocated any units.
    \item For all $ i \in [k]$, state $q_i$ represents that the online algorithm has allocated \textit{at least} $i$ units of the item.
\end{itemize}

For all $v \in V^{(\epsilon)}$ and $i \in \{0, \cdots, k\}$, we define $\Psi_i(v): V^{(\epsilon)} \rightarrow \{0, 1\}$ such that $\Psi_i(v) = 1$ if the algorithm is in state $q_i$ after processing all the buyers in $\mathcal{I}_v^{(\epsilon)}$, and $\Psi_i(v) = 0$ otherwise. Specifically, $\Psi_i(v) = 1$ if the online algorithm allocates at least $i$ units of the item at the end of stage $v$ in $\mathcal{I}^{(\epsilon)}$, which occurs with some probability depending on the algorithm’s randomness. Since the instance $\mathcal{I}^{(\epsilon)}$ is deterministically defined, $\Psi_i(v)$ is a binary random variable whose distribution depends solely on the algorithm’s randomness. This leads to the definition of $\boldsymbol{\psi} = \{\psi_i\}_{\forall i \in [k]}$ below.

\begin{definition}[Allocation Functions]
For any randomized online algorithm, let $ \boldsymbol{\psi} = \{\psi_i\}_{\forall i\in [k]}$ and $\psi_i:V^{(\epsilon)}\rightarrow [0,1]$ represent the functions where $\psi_{i}(v) = \mathbb{E}[\Psi_i(v)] $, with the expectation taken over the randomness of the algorithm.
\end{definition}

Based on the definition above, we have $\psi_{i}(v) = \Pr( \Psi_i(v) = 1)$, where $\Psi_i(v) = 1$ indicates that the algorithm is in state $q_i$ (i.e., at least $i$ units of the item have been allocated) after processing all buyers in $\mathcal{I}_v^{(\epsilon)}$ (i.e., by the end of stage $v$ of instance $\mathcal{I}^{(\epsilon)}$). In this context, $\psi_{i}(v)$ represents the probability that the online algorithm has allocated at least $i$ units of the item by the end of stage $v$ in instance $\mathcal{I}^{(\epsilon)}$. Therefore, the term \textit{probabilistic allocation functions} is used or simply \textit{allocation functions} for brevity. We show that $\psi_i(v)$ is monotonic in $i\in[k]$.

\begin{lemma}[Monotonicity]\label{lem:continuity_of_psi}
For any randomized online algorithm, $ \psi_{i}(v) \geq \psi_{i+1}(v) $ holds for all $  i\in [k] $ and $ v\in [L, U]$.
\end{lemma}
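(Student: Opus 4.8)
The plan is to establish the monotonicity $\psi_i(v) \ge \psi_{i+1}(v)$ by working at the level of the underlying binary random variables $\Psi_i(v)$ and $\Psi_{i+1}(v)$, and then taking expectations. The key observation is a pointwise (sample-path) domination: for every realization of the algorithm's internal randomness, if the algorithm has allocated at least $i+1$ units by the end of stage-$v$ of $\mathcal{I}^{(\epsilon)}$, then it has certainly allocated at least $i$ units by that same point. In other words, on every sample path we have $\Psi_i(v) \ge \Psi_{i+1}(v)$, simply because the event ``at least $i+1$ units allocated'' is contained in the event ``at least $i$ units allocated.'' This containment is immediate from the definition of the states $\{q_j\}$: the number of units allocated is a well-defined nonnegative integer for each fixed run, and ``$\ge i+1$'' implies ``$\ge i$.''

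First I would fix an arbitrary realization $\omega$ of the algorithm's randomness and let $N_v(\omega)$ denote the number of units the algorithm has allocated after processing all buyers in $\mathcal{I}_v^{(\epsilon)}$ under $\omega$. By the definitions, $\Psi_i(v)(\omega) = \mathbbm{1}\{N_v(\omega) \ge i\}$ and $\Psi_{i+1}(v)(\omega) = \mathbbm{1}\{N_v(\omega) \ge i+1\}$. Since $N_v(\omega) \ge i+1$ trivially implies $N_v(\omega) \ge i$, we get $\Psi_i(v)(\omega) \ge \Psi_{i+1}(v)(\omega)$ for every $\omega$. Taking expectations over $\omega$ and using monotonicity of expectation yields $\psi_i(v) = \mathbb{E}[\Psi_i(v)] \ge \mathbb{E}[\Psi_{i+1}(v)] = \psi_{i+1}(v)$, which is the claim. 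Finally, I would note that although $\psi_i$ is a priori only defined on the discrete set $V^{(\epsilon)}$, the inequality holds for every $v$ in this set, and hence — after the appropriate limiting/extension argument that identifies $\psi_i$ with a function on the continuum $[L,U]$ as $\epsilon \to 0$ — for all $v \in [L,U]$; I would state this extension explicitly only if the paper's framework requires $\psi_i$ to be viewed as a function on $[L,U]$ at this stage.

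There is essentially no hard step here: the argument is a one-line sample-path coupling followed by taking expectations, and the only thing to be careful about is bookkeeping — making sure that the states $q_i$ are indeed nested (``at least $i$ units'' rather than ``exactly $i$ units''), which the paper has already arranged by defining $q_i$ as ``allocated \emph{at least} $i$ units.'' Had the states been defined as ``exactly $i$ units,'' the monotonicity would fail and one would instead need a unimodality-type statement; the ``at least'' convention is precisely what makes the clean nesting work. So the main (minor) obstacle is simply to articulate the state-nesting correctly and to handle the domain of $\psi_i$ consistently with the rest of Section~\ref{sec:nec-cond}.
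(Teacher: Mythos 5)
Your proposal is correct and takes essentially the same approach as the paper: for each fixed realization of the algorithm's randomness the allocation count is a single integer, so the event "at least $i+1$ units allocated" is contained in "at least $i$ units allocated," giving $\Psi_i(v)\ge\Psi_{i+1}(v)$ pointwise, and taking expectations gives the claim. The paper's Appendix~\ref{apx:lemma-continuitiy} proof additionally (and somewhat superfluously for the stated lemma) establishes monotonicity of each $\psi_i$ in $v$, which is the other defining property of $\Omega$; your write-up restricts itself to exactly what the lemma asserts, which is fine.
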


The proof of the above lemma is given in Appendix \ref{apx:lemma-continuitiy}. 
Lemma \ref{lem:continuity_of_psi} implies that it suffices to focus on randomized algorithms whose allocation functions are from the following set
\begin{align*} 
\Omega = \Big\{ \boldsymbol{\psi} \big |  & \psi_{i}(v) \in [0,1], \psi_{i}(v) \ge \psi_{i+1}(v),  \psi_{i}(v) \leq \psi_{i}(v'),  \forall i\in [k], v, v'\in V^{(\epsilon)}, \text{ and } v < v' \Big\}.
\end{align*}

Next, we analyze how the allocation level of an $\alpha$-competitive algorithm should evolve as new buyers with higher valuations arrive in $\mathcal{I}^{(\epsilon)}$. We argue that the expected performance of any online algorithm under the instance $\mathcal{I}^{(\epsilon)}$ can be fully represented by the $k$ allocation functions $\{{\psi_{i}(v)}\}_{\forall i \in [k]}$. Let $ \ALG(\mathcal{I}_{v}^{(\epsilon)}) $ denote the expected objective value of the algorithm under instance $ \mathcal{I}_v^{(\epsilon)} $. Then $ \ALG(\mathcal{I}_{v}^{(\epsilon)}) $ can be framed using $ \boldsymbol{\psi} = \{{\psi_{i}(v)}\}_{\forall i \in [k]}$ as follows. 

\begin{proposition}[Representation based on $ \boldsymbol{\psi}$]
\label{lemma:lower-bound-algorithm-performance} 
For any randomized algorithm \ALG under the family of hard instances $ \{\mathcal{I}_{v}^{(\epsilon)} \}_{\forall v\in V^{(\epsilon)}}$, its expected performance can be represented by its allocation functions $\{{\psi_{i}(v)}\}_{\forall i \in [k]} \in \Omega $ as follows:
\begin{align*}
& \ALG\left(\mathcal{I}_{L}^{(\epsilon)}\right) = \sum_{i=1}^{k} \psi_i^{(L)} \cdot (L - c_i), \\
& \ALG\left(\mathcal{I}_{L+ j\cdot \epsilon}^{(\epsilon)}\right) = \ALG\left(\mathcal{I}_{L}^{(\epsilon)}\right) + \sum_{i=1}^{k} \sum_{m=1}^{\left\lceil \frac{U - L}{\epsilon} \right\rceil} \Big[ (L + m \cdot \epsilon) \cdot  \Big(\psi_i(L + m \cdot \epsilon) - \psi_i(L + (m-1) \cdot \epsilon)\Big) \Big], \\
&\hspace{12cm}\forall j = 1, 2, \ldots, \big\lfloor \frac{U - L}{\epsilon} \big\rfloor.
\end{align*}
\end{proposition}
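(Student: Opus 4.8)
The plan is to compute the expected social welfare of \ALG on $\mathcal{I}_v^{(\epsilon)}$ by accounting separately for the \emph{revenue-minus-cost} earned by the seller and, more fundamentally, by tracking how many units have been sold by the end of each stage and at what valuations those sales occurred. The key observation is that the social welfare $\sum_t v_t x_t - f(\sum_t x_t)$ depends only on the multiset of valuations of the buyers who purchased, not on the prices; so it suffices to understand, for each stage value $v$, the (random) number of units sold during stage $v$. First I would note that since every buyer in stage $v$ has the same valuation $v$, and the algorithm is in state $q_i$ with the allocation count being a well-defined random variable, the number of units the algorithm has allocated by the end of stage $v$ equals $\sum_{i=1}^k \Psi_i(v)$ pointwise (this is exactly the ``at least $i$'' telescoping: if the algorithm has allocated exactly $\ell$ units then $\Psi_1(v)=\cdots=\Psi_\ell(v)=1$ and the rest are $0$). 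Hence the number of units sold \emph{during} stage $v = L+m\epsilon$ is $\sum_{i=1}^k \big(\Psi_i(L+m\epsilon) - \Psi_i(L+(m-1)\epsilon)\big)$, and each such unit contributes exactly $v$ to the welfare (valuation $v$) minus the marginal cost of that unit.

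The second step is to handle the cost terms carefully. Rather than splitting $f$ stage by stage, I would write the total welfare on a fixed realization as $\sum_{i=1}^k \Psi_i(v)\cdot v \;-\; \sum_{i : \Psi_i(v)=1} c_i$, i.e. $\sum_{i=1}^k \Psi_i(v)(v - c_i)$ when the instance terminates at the first stage $v=L$, and then peel off the incremental contributions of later stages: going from stage $v-\epsilon$ to stage $v$, the welfare changes by $\sum_{i=1}^k \big(\Psi_i(v)-\Psi_i(v-\epsilon)\big)\cdot v - \sum_{i=1}^k \big(\Psi_i(v)-\Psi_i(v-\epsilon)\big)\cdot c_i$; but crucially the cost term telescopes against the valuation-$v$ term only if we reorganize. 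Actually the cleanest route: on any realization, $\ALG(\mathcal{I}_v^{(\epsilon)}) = \sum_{i=1}^k \Psi_i(v)\cdot(\text{valuation at which unit } i \text{ was sold}) - f(\sum_i \Psi_i(v))$, and writing $f(\ell) = \sum_{j=1}^\ell c_j$ together with the fact that unit $i$ is sold at the valuation of the stage in which $\Psi_i$ jumps from $0$ to $1$, one gets a telescoping sum over stages. Taking expectations and using $\psi_i(v) = \mathbb{E}[\Psi_i(v)]$ together with linearity of expectation then yields the base case $\ALG(\mathcal{I}_L^{(\epsilon)}) = \sum_{i=1}^k \psi_i^{(L)}(L-c_i)$ and the recursive formula in the statement.

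I would organize the argument as: (i) fix a realization of the algorithm's randomness and show the pointwise identity for the number of units sold in each stage via the ``at least $i$'' telescoping; (ii) express welfare on that realization as an explicit sum over stages of (valuation $\times$ incremental units sold) minus the total production cost, and verify the cost bookkeeping telescopes correctly so that the cost of the $i$-th unit is attached to the stage where $\Psi_i$ first equals $1$; (iii) take expectations and invoke linearity plus the definition $\psi_i = \mathbb{E}[\Psi_i]$; (iv) recognize the resulting telescoped expression as precisely the displayed formulas, noting that the upper index $\lceil (U-L)/\epsilon\rceil$ versus $\lfloor (U-L)/\epsilon\rfloor$ discrepancy in the statement is harmless since $\psi_i(L+m\epsilon)-\psi_i(L+(m-1)\epsilon)=0$ for any $m$ beyond the last stage. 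The main obstacle I anticipate is step (ii): making the cost accounting rigorous on the event that $\Psi_i$ jumps at a stage where \emph{several} units are sold at once (so that the marginal costs $c_i$ must be matched to the correct units even though all those units carry the same valuation), and being careful that this matching is consistent with $f$ being the cumulative cost; once that bookkeeping is pinned down, the rest is linearity of expectation and rearrangement.
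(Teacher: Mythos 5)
Your proposal is correct and follows essentially the same route as the paper's proof in Appendix~\ref{appendix:lemma:proof-lower-bound-algorithm-performance}: both arguments fix a realization of the algorithm's randomness, use the pointwise telescoping that identifies the allocation count with $\sum_i\Psi_i(v)$ (equivalently, $\mathds{1}_{\{D(L)=j\}}=\Psi_j(L)-\Psi_{j+1}(L)$ as the paper writes it), attach the deterministic marginal cost $c_i$ to the $i$-th unit sold while the random value of that unit is the stage valuation at which $\Psi_i$ first hits one (the paper's $X_i(v)$), and then pass to expectations by linearity and $\psi_i=\mathbb{E}[\Psi_i]$. The cost-bookkeeping worry you flag in step~(ii) dissolves immediately because $f$ is cumulative over \emph{units}, not stages: on any realization the total cost is $f\big(\sum_i\Psi_i(v)\big)=\sum_{i:\Psi_i(v)=1}c_i$ regardless of how many units are sold per stage, so the matching of $c_i$ to the $i$-th unit is automatic and the paper simply writes $\mathbb{E}[X_i(v)-c_i]$. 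One small correction to your step~(iv): the discrepancy to flag in the displayed formula is not $\lceil\cdot\rceil$ versus $\lfloor\cdot\rfloor$, but that the inner sum's upper limit should be $j$ rather than $\lceil(U-L)/\epsilon\rceil$ (the appendix writes it correctly); for $m>j$ the increments $\psi_i(L+m\epsilon)-\psi_i(L+(m-1)\epsilon)$ do \emph{not} vanish in general, since the $\psi_i$'s record the allocation trajectory under the full instance $\mathcal{I}^{(\epsilon)}$ and keep increasing past stage $j$, and as written the right-hand side would be independent of $j$.
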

The above proposition relates the expected performance of an online algorithm to the set of allocation functions $\{{\psi_{i}(v)}\}_{\forall i \in [k]}$ that capture its dynamics under hard instances  $ \{\mathcal{I}_{v}^{(\epsilon)} \}_{\forall v\in V^{(\epsilon)}}$. The detailed proof can be found in Appendix \ref{appendix:lemma:proof-lower-bound-algorithm-performance}.

Combining Proposition \ref{lemma:lower-bound-algorithm-performance} and Eq. \eqref{lower-bound-system-kselection-cost} gives the lemma below.
\begin{lemma}[Necessary Conditions]
\label{lemma:nec-cond}
If there exists an $\alpha$-competitive algorithm for \emph{OSDoS}, then there exists $k$ allocation functions $\{\psi_i\}_{i\in[k]} \in \Omega$, where each function $\psi_i: [L,U] \rightarrow [0,1]$ is continuous within its range and also satisfies the following equation:
\begin{align}
    \sum_{i=1}^{k} \psi_i(L) \cdot (L - c_i) + \sum_{i=1}^{k} \int_{\eta = L}^{v} (\eta - c_i) d\psi_i(\eta)  \ge \frac{1}{\alpha} \cdot \left(k v - \sum_{i=1}^k c_i \right), \quad \forall  v \in [L,U]. \label{eq:lb-system-ineq}
\end{align}
\end{lemma}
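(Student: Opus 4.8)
The plan is to take the discrete identities in Proposition \ref{lemma:lower-bound-algorithm-performance} together with the competitiveness requirement \eqref{lower-bound-system-kselection-cost}, and pass to the continuum limit $\epsilon \to 0$. First I would observe that Proposition \ref{lemma:lower-bound-algorithm-performance} gives, for every $\epsilon>0$ and every $v\in V^{(\epsilon)}$,
\[
\ALG\!\left(\mathcal{I}_{v}^{(\epsilon)}\right) = \sum_{i=1}^{k}\psi_i(L)\,(L-c_i) + \sum_{i=1}^{k}\sum_{m:\,L+m\epsilon\le v}(L+m\epsilon)\bigl(\psi_i(L+m\epsilon)-\psi_i(L+(m-1)\epsilon)\bigr),
\]
and combining with \eqref{lower-bound-system-kselection-cost} yields the discrete inequality
\[
\sum_{i=1}^{k}\psi_i(L)\,(L-c_i) + \sum_{i=1}^{k}\sum_{m}(L+m\epsilon)\,\Delta_m\psi_i \;\ge\; \tfrac{1}{\alpha}\bigl(kv-\textstyle\sum_{i=1}^k c_i\bigr)
\]
for all $v\in V^{(\epsilon)}$. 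The inner sum is a Riemann–Stieltjes sum for $\int_{L}^{v}\eta\,d\psi_i(\eta)$; since each $\psi_i$ is monotone and bounded (hence of bounded variation), the limit is controlled. The one subtlety is that the sampling points $L+m\epsilon$ lie at the \emph{right} endpoint of each increment, so the natural limit is the upper Stieltjes integral $\int_L^v \eta\, d\psi_i(\eta)$; writing $\int_L^v(\eta-c_i)d\psi_i(\eta) = \int_L^v \eta\, d\psi_i(\eta) - c_i(\psi_i(v)-\psi_i(L))$ then gives exactly \eqref{eq:lb-system-ineq}.

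Next I would address the existence and regularity claims. The allocation functions $\psi_i:V^{(\epsilon)}\to[0,1]$ are defined on the finite grid $V^{(\epsilon)}$ for each $\epsilon$; to obtain functions on the whole interval $[L,U]$ I would either (i) extend each $\psi_i^{(\epsilon)}$ to $[L,U]$ by right-continuous step interpolation and apply a Helly-type selection argument (a diagonal/compactness extraction over a sequence $\epsilon_n\to 0$) to extract monotone limits $\psi_i:[L,U]\to[0,1]$, or (ii) argue more directly that if an $\alpha$-competitive algorithm exists it exists for \emph{every} $\epsilon$, and the monotonicity and uniform boundedness force the limiting $\psi_i$ to be well-defined and monotone nondecreasing in $v$, nonincreasing in $i$ (inheriting membership in $\Omega$ from Lemma \ref{lem:continuity_of_psi}). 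For the continuity-within-range claim, the point is that a jump of $\psi_i$ at some $v_0$ would correspond to a nonvanishing atom of allocation probability concentrated exactly at valuation $v_0$; I would rule this out (or absorb it) by noting that the inequality \eqref{eq:lb-system-ineq} is preserved under replacing $\psi_i$ by its continuous envelope, since smoothing out a jump only moves allocation mass to weakly higher valuations and hence weakly increases the left-hand side while the right-hand side is unchanged — so WLOG one may take each $\psi_i$ continuous. I would make this "WLOG" precise as a short monotone-rearrangement lemma.

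The main obstacle I anticipate is the limit-exchange step: justifying that passing from the $\epsilon$-grid inequalities to the integral inequality \eqref{eq:lb-system-ineq} is valid uniformly, i.e. that the family of discrete allocation functions does not degenerate (e.g. by concentrating ever more mass near isolated points as $\epsilon\to0$) in a way that breaks convergence of the Stieltjes sums. This is exactly where the bounded-variation structure ($0\le\psi_i\le1$, monotone) and Helly's selection theorem do the work, but the bookkeeping — choosing a single subsequence along which all $k$ functions converge, controlling the integral at continuity points, and checking the inequality survives at \emph{all} $v\in[L,U]$ rather than just at grid points — needs care. Once the limiting $\{\psi_i\}\in\Omega$ are in hand and the Stieltjes sums are shown to converge to $\sum_i\int_L^v(\eta-c_i)\,d\psi_i(\eta)$, inequality \eqref{eq:lb-system-ineq} follows at every $v\in[L,U]$ by continuity of both sides, completing the proof.
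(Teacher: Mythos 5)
Your proposal follows essentially the same route as the paper's proof: combine Proposition~\ref{lemma:lower-bound-algorithm-performance} with the competitiveness requirement \eqref{lower-bound-system-kselection-cost}, pass to the $\epsilon\to 0$ limit to replace the discrete sums with Riemann--Stieltjes integrals, and argue that jumps of $\psi_i$ can be removed without harm (the paper's version is a surgery that delays the atom of allocation mass to strictly higher-valued buyers, which is the same rearrangement you sketch). Your Helly-selection step for extracting a single limit from the grid-indexed family is a reasonable extra layer of rigor that the paper elides, and the right-endpoint sampling subtlety you flag is in fact immaterial since the integrand $\eta - c_i$ is continuous and each $\psi_i$ is monotone, so all Riemann--Stieltjes sampling choices converge to the same integral.
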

The above result is derived based on the family of instances $ \{\mathcal{I}_{v}^{(\epsilon)} \}_{\forall v\in V^{(\epsilon)}}$ when $\epsilon$ approaches to zero. 
The proof is given in Appendix~\ref{apx:lb-system-ode}.
The lemma above provides a set of necessary conditions for the allocation functions $\{\psi_i\}_{\forall i\in[k]}$ induced by any $\alpha$-competitive algorithm. Therefore, determining a tight lower bound for \OSDoS is equivalent to finding the lowest $\alpha$ such that there exists a set of allocation functions in $ \Omega $ that satisfy Eq. \eqref{eq:lb-system-ineq}.
\subsection{Proof of Theorem \ref{lower-bound-main-theorem}}
\label{sec:proof_of_theorem_lower_bound}
We now move on to prove Theorem \ref{lower-bound-main-theorem}. Based on the necessary conditions in Lemma \ref{lemma:nec-cond}, the lower bound can be defined as
\begin{align*} 
\alpha_{\mathcal{S}}^*(k) = \inf \Big\{ &  \alpha \ge 1 \big | \text{there exist a set of } k \text{ allocation}\ \text{functions } {\{\psi_{i}(v)\}}_{\forall i \in [k]}  \in \Omega \text{ that satisfy Eq. \eqref{eq:lb-system-ineq}} \Big \}.
\end{align*}
Next, we show that it is possible to find a tight design of $\{\psi_{i}\}_{\forall i \in [k]}$ that satisfies the necessary conditions in Eq. \eqref{eq:lb-system-ineq} by equality, ultimately leading to Eq. \eqref{eq:lower-bound-alpha-star-equation} in Theorem \ref{lower-bound-main-theorem}.

For any $ \alpha \ge \alpha_{\mathcal{S}}^*(k) $, let $\Gamma^{(\alpha)}$ denote the superset of the set of functions ${\{\psi_{i}\}}_{\forall i \in [k] } \in \Omega$ that satisfy Eq. \eqref{eq:lb-system-ineq}. Note that $\Gamma^{(\alpha)} \subset \Omega $ holds for all $ \alpha \geq  \alpha_{\mathcal{S}}^*(k)$. Define $\chi^{(\alpha)}(v):[L,U] \rightarrow [0,k] $ as
 \begin{align} 
 \label{lower-bound-proof-define-chi-function}
   \chi^{(\alpha)}(v) =   \inf \left\{\sum\nolimits_{i=1}^{k} \psi_{i}(v) \big | \ \{\psi_{i}(v)\}_{\forall i \in [k]} \in \Gamma^{(\alpha)} \right\}.
 \end{align} 
Based on the definition of $\chi^{(\alpha)} $, we construct a set of allocation functions ${\{\psi^{(\alpha)}_{i}(v)\}}_{\forall i \in [k]}$ as follows:
\begin{align}
\label{lower-bound-optimal-functions-design}
    \psi^{(\alpha)}_{i}(v) = \left( \chi^{(\alpha)}(v) - (i-1)\right) \cdot \mathds{1}_{\{i-1 \leq   \chi^{(\alpha)}(v) \leq i\}} + \mathds{1}_{\{ \chi^{(\alpha)}(v) > i\}},  
    \forall v\in [L,U], \quad \forall i \in [k],
\end{align}
where $ \mathds{1}_{\{A\}} $ is the standard indicator function, equal to 1 if $ A $ is true and 0 otherwise.  In the following lemma, we argue that the set of functions ${\{\psi^{(\alpha)}_{i}(v)\}}_{\forall i \in [k]}$ is a feasible solution to Eq. \eqref{eq:lb-system-ineq} and satisfies it as an equality. 
\begin{lemma}\label{lemma:lb:tightness}
    For any $\alpha \ge \alpha_{\mathcal{S}}^*(k) $, the functions ${\{\psi^{(\alpha)}_{i}\}}_{\forall i \in [k]}$ satisfy Eq. \eqref{eq:lb-system-ineq} as an equality. 
\end{lemma}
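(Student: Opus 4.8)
The plan is to argue by contradiction: suppose that for some $\alpha \geq \alpha_{\mathcal{S}}^*(k)$ the functions $\{\psi_i^{(\alpha)}\}_{\forall i \in [k]}$ satisfy Eq. \eqref{eq:lb-system-ineq} with strict inequality at some $v^\circ \in [L,U]$. I first want to record the two structural facts about the construction in Eq. \eqref{lower-bound-optimal-functions-design} that make everything work. The first is that $\sum_{i=1}^{k} \psi_i^{(\alpha)}(v) = \chi^{(\alpha)}(v)$ for all $v$, which is immediate from the ``staircase'' form of the $\psi_i^{(\alpha)}$: each unit of mass in $\chi^{(\alpha)}(v)$ is assigned fully to the lowest-indexed coordinate that is not yet saturated. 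The second, and more important, is that among all $\boldsymbol{\psi} \in \Omega$ with a prescribed value of the sum $\sum_i \psi_i(v) = \chi(v)$, the staircase allocation Eq. \eqref{lower-bound-optimal-functions-design} is the pointwise-minimizer of the left-hand side of Eq. \eqref{eq:lb-system-ineq}; this is because $c_1 \leq c_2 \leq \dots \leq c_k$, so the coefficient $(\eta - c_i)$ attached to $\psi_i$ is largest for small $i$, and concentrating the allocation mass (and its increments $d\psi_i$) on the smallest indices makes the weighted sum $\sum_i (\eta - c_i)\psi_i(\eta)$ as small as possible for a fixed total $\sum_i \psi_i(\eta)$. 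Hence $\{\psi_i^{(\alpha)}\}$ achieves the infimum in a natural ``per-level'' sense, not just the infimum of the sum.

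Next I would show $\{\psi_i^{(\alpha)}\} \in \Gamma^{(\alpha)}$, i.e. that it is genuinely feasible for Eq. \eqref{eq:lb-system-ineq}. Take any $\boldsymbol{\psi} \in \Gamma^{(\alpha)}$ realizing (or nearly realizing) the infimum defining $\chi^{(\alpha)}(v)$ at a given $v$; by the minimality property just described, replacing $\boldsymbol{\psi}$ by the staircase function built from $\chi^{(\alpha)}$ can only decrease the left-hand side of Eq. \eqref{eq:lb-system-ineq}, but we also need it to still be $\geq \frac1\alpha(kv - \sum_i c_i)$. The cleaner route is to integrate by parts in Eq. \eqref{eq:lb-system-ineq} to rewrite the condition purely in terms of $\chi^{(\alpha)}$ and the cost structure, using continuity of the $\psi_i$ (Lemma \ref{lemma:nec-cond}) so that the boundary terms are controlled, and then observe that $\chi^{(\alpha)}$ is the pointwise infimum of a family of admissible ``sum'' functions, so the integral inequality is inherited in the limit. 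Monotonicity of each $\psi_i^{(\alpha)}$ in $v$ and the interleaving $\psi_i^{(\alpha)} \geq \psi_{i+1}^{(\alpha)}$ follow directly from monotonicity of $\chi^{(\alpha)}$ in $v$ (which in turn comes from $\Omega$-membership and the fact that an infimum of nondecreasing functions is nondecreasing), so $\{\psi_i^{(\alpha)}\} \in \Omega$ as well; together this gives $\{\psi_i^{(\alpha)}\} \in \Gamma^{(\alpha)}$.

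Now for the equality claim itself. With feasibility in hand, suppose strict inequality held at some point $v^\circ$. Using continuity, strictness persists on a neighborhood of $v^\circ$. I would then perturb: decrease $\chi^{(\alpha)}$ slightly on that neighborhood — equivalently, shave a bit of allocation mass off the top active level of the staircase — to obtain a new profile $\tilde{\boldsymbol{\psi}}$ that is still in $\Omega$ (monotonicity is preserved if the shave is done carefully, e.g. as a small downward shift that is constant-to-the-left) and still satisfies Eq. \eqref{eq:lb-system-ineq} because we started with slack. But then $\sum_i \tilde\psi_i(v^\circ) < \chi^{(\alpha)}(v^\circ)$ with $\tilde{\boldsymbol{\psi}} \in \Gamma^{(\alpha)}$, contradicting the definition of $\chi^{(\alpha)}(v^\circ)$ as the infimum of $\sum_i \psi_i(v^\circ)$ over $\Gamma^{(\alpha)}$. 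Therefore equality must hold at every $v$, proving the lemma.

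The main obstacle I anticipate is the perturbation step: one has to decrease $\chi^{(\alpha)}$ on an interval while keeping $\tilde{\boldsymbol{\psi}}$ inside $\Omega$ (monotone nondecreasing in $v$, with the staircase interleaving) and while not violating Eq. \eqref{eq:lb-system-ineq} at \emph{other} points $v \neq v^\circ$ — a local decrease of $\chi^{(\alpha)}$ affects the integral $\int_L^v (\eta - c_i)\,d\psi_i(\eta)$ for all larger $v$, so one must check that the slack at $v^\circ$ can be ``spent'' without creating a deficit downstream. The standard fix is to make the perturbation a clean truncation (set $\tilde\chi(\eta) = \min\{\chi^{(\alpha)}(\eta), \chi^{(\alpha)}(v^\circ) - \delta\}$ on $[v^\circ - \rho, v^\circ]$ and leave it unchanged elsewhere, then re-monotonize), so that the only net effect downstream is a bounded-below shift that is absorbed by the original slack; making this quantitative — choosing $\delta, \rho$ small enough as a function of the slack — is the one genuinely technical piece, and it is also where integration by parts on Eq. \eqref{eq:lb-system-ineq} pays off by turning the family of integral constraints into something monotone and easy to track.
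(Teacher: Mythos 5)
Your equality step (perturb $\chi^{(\alpha)}$ downward near $v^\circ$ and contradict the infimum defining $\chi^{(\alpha)}$) is in the same spirit as the paper's, which also argues that strict inequality would permit a feasible profile whose sum at $v$ undershoots $\chi^{(\alpha)}(v)$, a contradiction. Neither version fully resolves the ``downstream constraints'' issue you flag, so you are no worse off there. The genuine gap is in the feasibility half.

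Your ``second structural fact'' is stated backwards. You note correctly that $(\eta - c_i)$ is largest for small $i$, but then conclude that concentrating mass on small indices makes $\sum_i(\eta - c_i)\psi_i(\eta)$ as small as possible; the opposite holds. After integrating by parts, the left-hand side of Eq.~\eqref{eq:lb-system-ineq} equals $v\sum_i\psi_i(v) - \sum_i c_i\psi_i(v) - \int_L^v\sum_i\psi_i(\eta)\,d\eta$, and for a prescribed profile $\chi = \sum_i\psi_i$ the only remaining freedom is $\sum_i c_i\psi_i(v)$, which the bottom-filled staircase \emph{minimizes} (lowest-cost units get the mass). So the staircase \emph{maximizes} the left-hand side for a fixed sum, not minimizes it. You in fact notice the tension yourself (``can only decrease the left-hand side, but we also need it to still be $\geq\ldots$''), and you then retreat to ``integrate by parts and observe that the integral inequality is inherited in the limit.'' That does not close the gap: after integration by parts the left-hand side is not monotone in $\chi^{(\alpha)}$ (it carries both $+\,v\chi^{(\alpha)}(v)$ and $-\int_L^v\chi^{(\alpha)}$, plus the residual cost term $\sum_i c_i\psi_i^{(\alpha)}(v)$), and $\chi^{(\alpha)}$ is a \emph{pointwise} infimum over possibly different near-minimizers at different $v$, so no single limiting profile carries the inequality over. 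The paper's actual argument is more delicate: fix $v$, pick $\{\psi_i\}\in\Gamma^{(\alpha)}$ with $\sum_i\psi_i(v)\leq\chi^{(\alpha)}(v)+\epsilon$, and then combine three separate inequalities --- $v\chi^{(\alpha)}(v)\geq v\sum_i\psi_i(v)-v\epsilon$, $\chi^{(\alpha)}(\eta)\leq\sum_i\psi_i(\eta)$ pointwise inside the integral, and the rearrangement bound $\sum_i c_i\psi_i^{(\alpha)}(v)\leq\sum_i c_i\psi_i(v)$ (staircase is bottom-filled with smaller total mass, and $c_i$ is non-decreasing) --- to conclude $C_v\geq\ALG(\mathcal{I}_v^{(\epsilon)})-v\epsilon\geq\frac1\alpha(kv-\sum_ic_i)-v\epsilon$, then send $\epsilon\to 0$. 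That cost-rearrangement comparison is the load-bearing step and it does not appear in your proposal; without it you have not established $\{\psi_i^{(\alpha)}\}\in\Gamma^{(\alpha)}$, and the equality-via-infimum argument has nothing to stand on.
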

The detailed proof for the above lemma is in Appendix \ref{appendix:lemma:lb:tightness}. Following the definition of ${\{\psi^{(\alpha)}_{i}(v)\}}_{\forall i \in [k]}$, we observe that these functions exhibit the following property:
\begin{lemma}
\label{property-1}
For any $i \in [k]$ and $v \in [L, U]$, if $ \psi^{(\alpha)}_{i}(v) \in (0, 1) $ holds, then $\psi^{(\alpha)}_{j}(v) = 1 $ for all $ j = 1, \cdots,  i - 1 $ and $\psi^{(\alpha)}_{j}(v) = 0$ for all $ j =  i+1, \cdots, k$.
\end{lemma}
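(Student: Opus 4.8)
The plan is to unwind the definition in Eq.~\eqref{lower-bound-optimal-functions-design} and exploit the fact that $\psi^{(\alpha)}_i(v)$ depends only on where the scalar $\chi:=\chi^{(\alpha)}(v)\in[0,k]$ lies relative to the consecutive integers $i-1$ and $i$. First I would fix $i\in[k]$ and $v\in[L,U]$ and determine precisely when $\psi^{(\alpha)}_i(v)\in(0,1)$. If $\chi>i$, then $\mathds{1}_{\{i-1\le\chi\le i\}}=0$ while $\mathds{1}_{\{\chi>i\}}=1$, so $\psi^{(\alpha)}_i(v)=1$; if $\chi<i-1$, both indicators vanish, so $\psi^{(\alpha)}_i(v)=0$; and the boundary values $\chi=i-1$ and $\chi=i$ make the formula evaluate to $0$ and $1$ respectively. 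Hence the hypothesis $\psi^{(\alpha)}_i(v)\in(0,1)$ forces $\chi\in(i-1,i)$, in which case $\psi^{(\alpha)}_i(v)=\chi-(i-1)$.

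Next, under the established constraint $\chi\in(i-1,i)$, I would evaluate $\psi^{(\alpha)}_j(v)$ for each $j\neq i$ directly from the same definition. For $j\le i-1$ we have $\chi>i-1\ge j$, so $\mathds{1}_{\{\chi>j\}}=1$ and $\mathds{1}_{\{j-1\le\chi\le j\}}=0$, giving $\psi^{(\alpha)}_j(v)=1$. For $j\ge i+1$ we have $\chi<i\le j-1$, so both $\mathds{1}_{\{\chi>j\}}=0$ and $\mathds{1}_{\{j-1\le\chi\le j\}}=0$, giving $\psi^{(\alpha)}_j(v)=0$. The extreme cases $i=1$ and $i=k$ require no special treatment, since the respective index sets $\{1,\dots,i-1\}$ and $\{i+1,\dots,k\}$ are then empty and the claim is vacuous for those ranges.

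There is no genuine obstacle: the statement is a direct consequence of the "fill one coordinate at a time" structure hard-wired into Eq.~\eqref{lower-bound-optimal-functions-design}, which makes the profile $(\psi^{(\alpha)}_1(v),\dots,\psi^{(\alpha)}_k(v))$ a saturated prefix of $1$'s, one fractional entry, and a suffix of $0$'s. The only point deserving care is the handling of the integer boundary values $\chi=i-1$ and $\chi=i$, which I would dispatch explicitly as above to confirm that they yield $\psi^{(\alpha)}_i(v)\in\{0,1\}$ rather than a value in the open interval $(0,1)$.
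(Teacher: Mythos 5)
Your proof is correct and follows essentially the same route the paper intends: the paper states Lemma \ref{property-1} as an immediate observation from the definition in Eq.~\eqref{lower-bound-optimal-functions-design} without writing out the case analysis, and your argument simply makes explicit that the scalar $\chi^{(\alpha)}(v)$ pins down a unique ``prefix-of-ones, one fractional coordinate, suffix-of-zeros'' profile, including the careful handling of the integer boundary cases $\chi^{(\alpha)}(v)\in\{i-1,i\}$.
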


Lemma \ref{property-1} asserts that if the online algorithm inducing $\{\psi^{(\alpha)}\}_{\forall i}$ begins allocating unit $i$ with some positive probability to buyers in stage-$v$ of $\mathcal{I}^{(\epsilon)}$, then the algorithm must have already allocated all units $j < i$ with probability one to buyers arriving at or before stage-$v$ of $\mathcal{I}^{(\epsilon)}$. Furthermore, if the algorithm has not allocated unit $i$ with probability one by the end of stage-$v$, then all units $j > i$ remain in the system with probability one at the end of stage-$v$. Given that the marginal cost for each additional unit of resource increases, the algorithm should only produce and allocate a new unit once all previously produced units have been fully allocated.

According to Lemma \ref{lemma:lb:tightness}, the inequality in Eq. \eqref{eq:lb-system-ineq} can be replaced with an equality. By combining Lemma \ref{lemma:lb:tightness} with Lemma \ref{property-1}, we conclude that there exists a unique set of functions that satisfy Eq. \eqref{eq:lb-system-ineq} as an equality and also fulfill the property stated in Lemma \ref{property-1}. Proposition \ref{prop:lower-bound-psi-star-design} below formally states this result.

\begin{proposition}
\label{prop:lower-bound-psi-star-design}
For any $\alpha \ge \alpha_{\mathcal{S}}^*(k) $, there exist a set of allocation functions ${\{\psi^{(\alpha)}_{i}\}}_{\forall i \in [k]} \in \Omega$ that satisfy Eq. \eqref{eq:lb-system-ineq} by equality:
\begin{align*}
    &\psi^{(\alpha)}_{i}(v) = 1, \quad    i = 1, \dots, \ubar{k}-1, \\
    &\psi^{(\alpha)}_{\ubar{k}}(v) =
    \begin{cases}
         \xi + \frac{k}{\alpha} \cdot \ln\left(\frac{v - c_{\ubar{k}}}{L - c_{\ubar{k}}}\right) &  v \in [L,u_{\ubar{k}}], \\
         1 & v > u_{\ubar{k}}, 
         \end{cases}\\
    & \psi^{(\alpha)}_{i}(v) = \begin{cases} 0 & v \leq \ell_{i}, \\
        \frac{k}{\alpha} \cdot \ln\left(\frac{v - c_{i}}{\ell_{i} - c_{i}}\right) &  v \in [\ell_{i},u_{i}], \\
        1 & v \ge u_{i},
        \end{cases} \quad  i = \ubar{k}+1,\dots, k-1, \\
    &    \psi^{(\alpha)}_{k}(v) = \begin{cases} 0 & v \leq \ell_{k}, \\
        \frac{k}{\alpha} \cdot \ln\left(\frac{v - c_{k}}{\ell_{k} - c_{k}}\right) &  v \in [\ell_{k},U],
        \end{cases} 
\end{align*}
where the intervals $ \{[\ell_i, u_i ]\}_{\forall i} $ are specified by
\begin{align}
    \label{lower-bound-U-h-computation}
    & u_{\ubar{k}} = \ell_{\ubar{k}+1} = (L-c_{\ubar{k}})\cdot e^{(1-\xi) \cdot \frac{\alpha}{k}} + c_{\ubar{k}},
     \\
     \label{lower-bound-other-U-computation}
    &  u_{i} = \ell_{i+1} = (\ell_{i} - c_{i}) \cdot e^{\alpha/k} + c_{i} \quad \forall i = \ubar{k}+1, \dots, k.
\end{align}    
\end{proposition}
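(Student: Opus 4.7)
The strategy is to combine Lemma~\ref{lemma:lb:tightness} (which converts the inequality in Eq.~\eqref{eq:lb-system-ineq} to an equality) with the sequential allocation structure from Lemma~\ref{property-1} to reduce the problem to a chain of elementary separable ODEs that can be solved in closed form.

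First, I would evaluate the equality at $v = L$. Lemma~\ref{property-1} states that at most one index $i$ can have $\psi^{(\alpha)}_i(L) \in (0,1)$, while smaller indices must equal $1$ and larger indices must equal $0$. Plugging this structure into Eq.~\eqref{eq:lb-system-ineq} at $v = L$ reduces the equality to
\[
\sum_{i=1}^{\ubar{k}-1}(L - c_i) + \xi\,(L - c_{\ubar{k}}) = \frac{1}{\alpha}\Big(k L - \sum_{i=1}^k c_i\Big),
\]
which by Eqs.~\eqref{eq:def-h}--\eqref{eq:def-u} forces exactly $\ubar{k} - 1$ of the $\psi^{(\alpha)}_i(L)$ to equal $1$, $\psi^{(\alpha)}_{\ubar{k}}(L) = \xi$, and $\psi^{(\alpha)}_i(L) = 0$ for $i > \ubar{k}$. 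Since every function in $\Omega$ is non-decreasing in $v$ and bounded above by $1$, this implies $\psi^{(\alpha)}_i(v) \equiv 1$ on $[L,U]$ for all $i < \ubar{k}$.

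For $v > L$, I would differentiate the equality with respect to $v$ to obtain
\[
\sum_{i=1}^{k} (v - c_i)\,\frac{d\psi^{(\alpha)}_i(v)}{dv} = \frac{k}{\alpha}.
\]
On any maximal subinterval where some $\psi^{(\alpha)}_i$ is strictly between $0$ and $1$, Lemma~\ref{property-1} forces every other $\psi^{(\alpha)}_j$ to be locally constant, so only a single term survives. This gives the separable ODE $(v - c_i)\,\psi^{(\alpha)\prime}_i(v) = k/\alpha$, whose solution with left boundary point $\ell_i$ is
\[
\psi^{(\alpha)}_i(v) = \psi^{(\alpha)}_i(\ell_i) + \frac{k}{\alpha}\ln\!\left(\frac{v - c_i}{\ell_i - c_i}\right).
\]
Using the initial value $\xi$ for $i = \ubar{k}$ (with $\ell_{\ubar{k}} = L$) and $0$ for $i > \ubar{k}$ reproduces the closed-form expressions in the proposition. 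The right endpoint $u_i$ of the active interval of $\psi^{(\alpha)}_i$ is then determined by imposing $\psi^{(\alpha)}_i(u_i) = 1$; solving this logarithmic equation yields exactly Eqs.~\eqref{lower-bound-U-h-computation}--\eqref{lower-bound-other-U-computation}, and Lemma~\ref{property-1} forces the next index to begin its transition at $\ell_{i+1} = u_i$, so the pieces chain together.

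The main obstacle will be making the differentiation step fully rigorous: a priori the $\psi^{(\alpha)}_i$ are only monotone and continuous, and the left-hand side of Eq.~\eqref{eq:lb-system-ineq} is a Lebesgue--Stieltjes integral, so classical differentiation need not apply pointwise. The cleanest workaround is to apply the equality at two nearby points $v$ and $v+\Delta v$ on an active interval, note that Lemma~\ref{property-1} ensures only one $\psi^{(\alpha)}_i$ varies across this interval, and pass to the limit in the resulting difference quotient to recover the ODE. A secondary check, but a purely mechanical one, is that each $\psi^{(\alpha)}_i$ is continuous at the transition $u_i = \ell_{i+1}$, which is automatic from matching $\psi^{(\alpha)}_i(u_i) = 1$ with $\psi^{(\alpha)}_{i+1}(\ell_{i+1}) = 0$ and the already-established monotonicity in~$v$.
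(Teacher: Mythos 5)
Your proposal is correct and follows essentially the same route as the paper's proof in Appendix~\ref{appendix:lower-bound-proof-lemma-function-design-psi-star}: both start from Lemma~\ref{lemma:lb:tightness} to convert Eq.~\eqref{eq:lb-system-ineq} into an equality, use Lemma~\ref{property-1} plus monotonicity (from membership in $\Omega$) to localize which single $\psi_i$ can vary on each interval, evaluate the equality at $v=L$ to pin down the initial values $1,\ldots,1,\xi,0,\ldots,0$, differentiate to obtain $(v-c_i)\,d\psi^{(\alpha)}_i(v) = k/\alpha$ on the active interval, and then solve this separable equation with boundary condition $\psi^{(\alpha)}_i(u_i)=1$ to chain the intervals via $u_i = \ell_{i+1}$. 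The only addition you make beyond the paper's argument is the explicit remark about justifying pointwise differentiation of the Lebesgue--Stieltjes identity via difference quotients; the paper treats this step informally, and your proposed workaround (two-point evaluation on an active interval where only one $\psi_i$ moves) is sound and a welcome clarification rather than a change of method.
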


Recall that the parameters $\ubar{k}$ and $\xi$ are defined in Eq.~\eqref{eq:def-h} and Eq.~\eqref{eq:def-u}, respectively. Once $ \alpha $ is given, both  $\ubar{k}$ and $\xi$  can be uniquely determined. Therefore, the set of allocation functions ${\{\psi^{(\alpha)}_{i}\}}_{\forall i \in [k]}  $ given in Proposition \ref{prop:lower-bound-psi-star-design} can also be explicitly computed once $\alpha $ is given. The full proof of how to derive the explicit designs of ${\{\psi^{(\alpha)}_{i}\}}_{\forall i \in [k]}$ is given in Appendix \ref{appendix:lower-bound-proof-lemma-function-design-psi-star}. 

Putting together Eq. \eqref{lower-bound-U-h-computation} and Eq. \eqref{lower-bound-other-U-computation}, we have
\begin{align*}
    u_{k} = (L-c_{\ubar{k}}) \cdot e^{\frac{\alpha}{k} \cdot (k+1-\ubar{k} -\xi)} + c_{\ubar{k}} \cdot e^{\frac{\alpha}{k} \cdot (k-\ubar{k})} + c_{\ubar{k}+1} \cdot (1-e^{\frac{\alpha}{k}}) \cdot e^{\frac{\alpha}{k} \cdot (k-1-\ubar{k})} + \cdots + c_{k} \cdot ( 1- e^{\frac{\alpha}{k}}).
\end{align*}
Note that the right-hand side of the equation above is increasing in $\alpha$. Therefore, as $\alpha$ decreases, the value of $u_{k}$ also decreases and will eventually fall below $U$ for a specific value of $\alpha$. Consequently, according to the definition of $\psi^{(\alpha)}_{k}$ in Proposition \ref{prop:lower-bound-psi-star-design}, $\psi^{(\alpha)}_{k}(U)$ will exceed $1$ (since $\psi^{(\alpha)}_{k}(U) > \psi^{(\alpha)}_{k}(u_{k})$, and based on Eq.~\eqref{lower-bound-other-U-computation}, $\psi^{(\alpha)}_{k}(u_{k})$ is equal to one).  However, this will generate an infeasible allocation function $\psi_{k}^{(\alpha)}$, as  we require that $\psi_{k}^{(\alpha)}(v) \leq 1$ holds for all $v \in [L,U]$. As a result, for those values of $\alpha$ where $u_{k} < U$, the set of $k$ allocation functions $\{\psi^{(\alpha)}_{i}\}_{\forall i \in [k]}$ obtained in Proposition \ref{prop:lower-bound-psi-star-design} becomes infeasible, meaning that $\alpha$ must be less than $ \alpha_{\mathcal{S}}^*(k) $. Therefore, $\alpha_{\mathcal{S}}^*(k)$ is the value of $\alpha$ for which $u_{k} = U$, and this gives Eq.~\eqref{eq:lower-bound-alpha-star-equation} in Theorem \ref{lower-bound-main-theorem}. 
Thus, we complete the proof of Theorem \ref{lower-bound-main-theorem}.

\section{\rDynamic: A Randomized Dynamic Posted Price Mechanisms}
\label{sec:upper-bound}
We propose a randomized dynamic pricing mechanism (\rDynamic), as described in Algorithm~\ref{alg:kselection-cost}, to solve the \OSDoS problem. Before the buyers arrive, \rDynamic samples $k$ independent random prices $\{P_i\}_{\forall i\in[k]}$, where $P_i$ is the price for the $i$-th unit of the item. Specifically, for each unit $i \in [k]$, a random seed $s_i$ is drawn from the uniform distribution $\text{Unif}(0,1)$, and the random price is set as $P_i = \phi_i(s_i)$, where $\phi_i(s_i)$ is the \textit{pricing function} designed for the $i$-th unit. \rDynamic then posts the price of the available unit with the smallest index from $\{P_i\}_{\forall i\in[k]}$ to the online arriving buyers.

For all $i \in [k]$, the pricing function $\phi_i: [0,1] \rightarrow [L_i, U_i]$ is constructed such that the $k$ \textit{price intervals} $\{[L_i, U_i]\}_{\forall i\in[k]}$ span the entire range of $[L, U]$, where $L = L_1 \leq U_1 = L_2 \leq U_2 \leq \cdots \leq U_{k-1} = L_k \leq U_k = U$. That is, the upper boundary of $\phi_i$ (i.e., the maximum price of $P_i$) is the lower boundary of $\phi_{i+1}$ (i.e., the minimum price of $P_{i+1}$). As a result, the posted prices will always be non-decreasing (i.e., $P_1 \leq P_2 \leq \cdots \leq P_k$), regardless of the realization of the random seeds $\{s_i\}_{\forall i\in[k]}$. This design ensures that units with higher production costs are sold at higher prices, which is consistent with the natural pricing scheme where more expensive units reflect higher production costs.

\begin{algorithm}[t]
\caption{Randomized Dynamic Pricing (\rDynamic) for \OSDoS} 
\label{alg:kselection-cost}
\begin{algorithmic}[1] 
\State \textbf{Input:} pricing functions ${\{\phi_i\}}_{\forall i \in [k]}$;
\State \textbf{Initiate:} index of the unit to be sold $\kappa_1 = 1$; 
\State Generate a random seed vector $\boldsymbol{s} = \{s_i\}_{\forall i \in [k]}$, each element sampled independently from uniform distribution $\text{Unif}(0,1)$; \label{line_P_vector}
\State{Set a price vector $\mathbf{P}= \{P_i\}_{\forall i \in [k]}$, where $P_{i} = \phi_{i}(s_{i})$};
\While{buyer $t$ arrives}
	\If{$\kappa_t \leq k$}:
		\State Post the price $p_t = P_{\kappa_t} $ to buyer $t$;
		\If{buyer $t$ accepts the price}
			\State One unit is sold and set $x_t = 1$; 
		\EndIf
	\EndIf
		\State Update $ \kappa_{t+1} = \kappa_t + x_t$. \Comment{{\color{gray}$ x_t = 0 $ if buyer $ t $ declines $ p_t $.}}
\EndWhile
\end{algorithmic}
\end{algorithm}

\subsection{Asymptotic Optimality of \rDynamic}
We show that by carefully designing the pricing functions, \rDynamic
achieves an asymptotically optimal competitive ratio.
\begin{theorem} 
    \label{upper-bound-large-inventory-cr}
    Given $\mathcal{S} = \{L, U, f\} $ for the \OSDoS problem with $k \ge 1$, \rDynamic is $\alpha_{\mathcal{S}}^*(k)\cdot \exp(\frac{\alpha_{\mathcal{S}}^*(k)}{k})$-competitive when the pricing functions are given by
    \begin{align*}
       & \phi_{i}(s) = L, \quad \forall s \in [0,1], i \in [\ubar{k}^*-1],\\
       & \phi_{\ubar{k}^*}(s)  = \begin{cases} L & s \in [0, \xi^{*}], \\
        (L-c_{\ubar{k}^*})\cdot e^{(s- \xi^{*})\cdot {\alpha_{\mathcal{S}}^*(k)}/{k}}+c_{\ubar{k}^*} & s \in [\xi^{*},1],
        \end{cases}\\
        & \phi_{i}(s)  = (L_{i}-c_{i})\cdot e^{s \cdot {\alpha_{\mathcal{S}}^*(k)}/{k}}+c_{i},  \quad \forall s \in [0,1], i = \ubar{k}^*+1, \dots, k, 
    \end{align*}
where $\ubar{k}^* $ and $ \xi^{*} $ are respectively the values of $  \ubar{k} $ and $ \xi $ defined in Theorem~\ref{lower-bound-main-theorem}, corresponding to $ \alpha = \alpha_{\mathcal{S}}^*(k) $, and the price intervals $ \{[L_i, U_i]\}_{\forall i\in [k]} $ are given as follows:  
\begin{align}
\label{upper-bound-main-theorem-design_U_L_h}
    &  U_{\ubar{k}^*} =  L_{\ubar{k}^*+1} = (L-c_{\ubar{k}^*})\cdot e^{(1-\xi^{*}) \cdot {\alpha_{\mathcal{S}}^*(k)}/{k}} + c_{\ubar{k}^*},
     \\
    & U_{i} = L_{i+1} = (L_{i} - c_{i}) \cdot e^{\alpha_{\mathcal{S}}^*(k)/k} + c_{i}, \quad \forall i = \ubar{k}^*+1, \dots, k.
\end{align}
\end{theorem}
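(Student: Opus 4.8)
\textbf{Proof proposal.} The plan is to exploit the duality between the pricing functions $\{\phi_i\}_{i\in[k]}$ of \rDynamic and the tight lower-bound allocation functions $\{\psi^{(\alpha)}_i\}_{i\in[k]}$ from Proposition~\ref{prop:lower-bound-psi-star-design} (taken at $\alpha=\crlb$). First I would verify, by directly inverting the formulas, that with $s_i\sim\mathrm{Unif}(0,1)$ the random price $P_i=\phi_i(s_i)$ satisfies $\Pr[P_i\le v]=\psi^{(\alpha)}_i(v)$ for every $i\in[k]$ and $v\in[L,U]$; in particular the price intervals $[L_i,U_i]$ coincide with the supports of $\psi^{(\alpha)}_i$ as given by Eqs.~\eqref{upper-bound-main-theorem-design_U_L_h}, so $P_1\le P_2\le\cdots\le P_k$ holds with probability one. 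Combined with Lemma~\ref{lemma:lb:tightness} (the $\psi^{(\alpha)}_i$ satisfy Eq.~\eqref{eq:lb-system-ineq} with equality), this yields the key identity $\sum_{i=1}^k\mathbb E\big[(P_i-c_i)\mathds{1}\{P_i\le v\}\big]=\tfrac1{\crlb}\big(kv-\sum_{i=1}^k c_i\big)$ for all $v\in[L,U]$, which is the workhorse of the analysis.

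Next I would reduce the statement to the per-instance inequality $\OPT(\mathcal I)\le \crlb\cdot\exp(\crlb/k)\cdot\mathbb E[\ALG(\mathcal I)]$ and study the structure of \rDynamic on a fixed instance $\mathcal I$ and a fixed price realization $\boldsymbol P$. Because prices are nondecreasing and posted by increasing index, the algorithm sells a prefix of units $1,\dots,m$ for some random $m=m(\boldsymbol P,\mathcal I)$, with unit $j$ going to a buyer of valuation $w_j\ge P_j$, so $\ALG(\mathcal I)=\sum_{j\le m}(w_j-c_j)\ge\sum_{j\le m}(P_j-c_j)$. Define the threshold $p^\star:=P_{m+1}$ if $m<k$ and $p^\star:=U$ otherwise. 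The key structural fact I would establish is that \emph{at most $m$ buyers of $\mathcal I$ have valuation $\ge p^\star$}: any such buyer, when processed, faces a current unit of index $\le m+1$ at a price $\le P_{m+1}=p^\star$ and hence must purchase, and only $m$ units are sold. Consequently $\OPT(\mathcal I)$ serves at most $m$ buyers of valuation $\ge p^\star$ and at most $k-m$ buyers of valuation $<p^\star\le U_{m+1}$; splitting $\OPT(\mathcal I)=\sum_{j\le \min(H,n^*)}(v^{(j)}-c_j)+\sum_{j=H+1}^{n^*}(v^{(j)}-c_j)$ along this threshold (where $H\le m$ is the number of buyers above $p^\star$ and $n^*$ the number of units $\OPT$ sells), I would charge the ``above-threshold'' part against $\sum_{j\le m}(P_j-c_j)$ via the identity above and the ``below-threshold'' part against the multiplicative gap $U_i-c_i=(L_i-c_i)e^{\crlb/k}$ between consecutive intervals, then take expectations over $\boldsymbol P$.

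The main obstacle is precisely this charging step. Since $\ALG$ is order-dependent and each $w_j$ can be as small as $P_j$, a unit-by-unit comparison is impossible, and the bound must be holistic: one has to use the identity $\sum_i\mathbb E[(P_i-c_i)\mathds{1}\{P_i\le v\}]=\tfrac1{\crlb}(kv-\sum_i c_i)$ together with the threshold property to absorb simultaneously (i) the randomness of $\boldsymbol P$, (ii) the units that $\OPT$ sells but \rDynamic does not, and (iii) the fact that \rDynamic may realize a unit's price anywhere inside an interval of multiplicative width $e^{\crlb/k}$. Making these three losses compound into the single factor $\exp(\crlb/k)$ rather than something larger is where the careful accounting lies; I expect the cleanest route is to first argue (by an exchange argument) that the worst case is attained on ``layered'' instances with few distinct valuation levels and enough buyers at each, for which $\mathbb E[\ALG(\mathcal I)]$ has a closed form in the $\psi^{(\alpha)}_i$, and then to optimize the resulting low-dimensional expression against $\crlb\cdot\exp(\crlb/k)$.

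Finally, for $k=2$ I would bypass the reduction and argue directly. The random state is just the pair $(P_1,P_2)\in[L,U_1]\times[U_1,U]$, and on any instance \rDynamic exhibits only a handful of qualitatively distinct behaviors (zero, one, or two units sold, with sold valuations confined to $[L,U_1]$ and $[U_1,U]$ respectively), so one can enumerate the cases and verify, using the $k=2$ specialization of the identity above together with the definitions~\eqref{eq:def-h}--\eqref{eq:def-u} of $\ubar k^\star$ and $\xi^\star$, that the ratio never exceeds $\alpha_{\mathcal S}^*(2)$. The $e^{\crlb/k}$ slack is unnecessary here because with two units there is only one ``interior'' price interval, and its contribution can be matched exactly rather than merely bounded.
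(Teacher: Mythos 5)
The high-level duality observation you make at the start — that $\Pr[P_i\le v]=\psi^{(\alpha)}_i(v)$ and hence the key identity $\sum_{i}\mathbb E[(P_i-c_i)\mathds 1\{P_i\le v\}]=\tfrac1{\crlb}(kv-\sum_i c_i)$ — is correct, matches the paper's design philosophy, and is a clean way to state what the pricing functions are built to do. But from that point on, your plan diverges from the paper's and, as you yourself flag, has a genuine gap precisely at the charging step.

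The difficulty is that you condition on the realization $\mathbf P$ and obtain a \emph{random} prefix length $m$ and threshold $p^\star=P_{m+1}$. The lower bound $\ALG\ge\sum_{j\le m}(P_j-c_j)$ is then $\sum_j (P_j-c_j)\mathds 1\{\text{unit }j\text{ sold}\}$, and the event ``unit $j$ sold'' depends on the entire vector $\mathbf P$ and the instance, not on $P_j$ alone — so it is \emph{not} of the form $\mathds 1\{P_j\le v\}$ for a fixed $v$. Your key identity therefore does not directly apply, and the ``above-threshold'' and ``below-threshold'' pieces of $\OPT$ are cut at a $\mathbf P$-dependent level $p^\star$, which cannot be matched against the left side of the identity without first decoupling $m$ from $\mathbf P$. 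You acknowledge this and propose an exchange argument to reduce to layered instances, but that reduction is not carried out and is not obviously true (one would have to show monotonicity of the competitive ratio under the exchange for a \emph{randomized} algorithm with $k$ correlated prices, which is nontrivial).

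The paper sidesteps this entirely by flipping the order of quantification: it defines $\omega$ as the \emph{maximum} number of units sold over all $\mathbf P\in\mathcal P$ (a deterministic function of $\mathcal I$), picks the extremal price vector $\boldsymbol\pi$ that sells the $\omega$-th unit earliest, and uses the corresponding deterministic time $\tau_\omega$. Lemma~\ref{lem:omega-1} then shows that \emph{every} realization sells at least $\omega-1$ units, so $\mathbb E[\ALG]\ge\sum_{i=1}^{\omega-1}\int_0^1\phi_i(\eta)\,d\eta-\sum_{i<\omega}c_i$ plus a bonus term from high-valued buyers before $\tau_\omega$ (Lemma~\ref{lem:main:claim3-upper-bound-kselection-cost}); this is an \emph{unconditional} expectation and so the identity (evaluated at $v=U_{\omega-1}$) applies cleanly. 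Lemma~\ref{lem:main:claim2-kselection-production-cost} caps $\OPT$'s post-$\tau_\omega$ gains at $U_\omega$, and the $e^{\crlb/k}$ factor comes out of the single geometric step $U_\omega-c_\omega=(U_{\omega-1}-c_\omega)\,e^{\crlb/k}$, not from compounding three losses. So the paper trades your per-realization accounting for a one-shot worst-case index $\omega$, which is what lets the analysis close.

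Your $k=2$ outline is fine in spirit (enumerate the cases $(P_1,P_2)$ and verify directly) and is consistent with what the paper does in Corollary~\ref{corrolary:upper-bound-small-inventory-optimality} and Appendix~\ref{appendix:proof-corrolary:upper-bound-small-inventory-optimality}, but it does not rescue the general-$k$ argument. To complete a proof along your lines you would need either (i) a rigorous reduction to layered instances that preserves the randomized competitive ratio, or (ii) to replace the random threshold $p^\star$ with a deterministic one — and option (ii) is essentially the paper's $\omega/\tau_\omega$ device, so you would be converging back to their proof.
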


We provide a proof sketch of Theorem \ref{upper-bound-large-inventory-cr} in Section \ref{sec:proof_theorem_upper_bound}. At a high level, the design of the pricing functions $\{\phi_{i}(s)\}_{\forall i \in [k]}$ is inspired by the dynamics of an $\alpha_{\mathcal{S}}^*(k)$-competitive algorithm on the arrival instance $\mathcal{I}^{(\epsilon)}$ studied in the lower bound section. Essentially, the inverse of the pricing function $\phi_{i}(s)$, defined as $ \phi_{i}^{-1}(v)=\sup\{s : \phi_{i}(s) \leq v\}$, follows the same design as $\psi^{(\alpha)}_{i}(v)$ in Proposition \ref{prop:lower-bound-psi-star-design} when $\alpha = \alpha_{\mathcal{S}}^*(k)$, namely, $ \psi^{(\alpha_{\mathcal{S}}^*(k))}_{i}(v) =\sup\{s: \phi_{i}(s) \leq v\}$.  

\textbf{Asymptotic optimality of \rDynamic in general settings.} Previous studies (e.g., \cite{Huang_2019, Tan2023}) have shown that $\alpha_{\mathcal{S}}^*(k)$ remains bounded by a constant as $k \rightarrow \infty $. Thus, the competitive ratio of \rDynamic approaches $ \alpha_{\mathcal{S}}^*(k)$  as $k$ goes to infinity, meaning that \rDynamic is asymptotically optimal. 
 
\textbf{Exact optimality of \rDynamic when $ k= 2 $}. For the small inventory case of $k=2$, a tighter analysis shows that \rDynamic is $\alpha_{\mathcal{S}}^*(2)$-competitive using the same design of pricing functions in Theorem \ref{upper-bound-large-inventory-cr},  where $\alpha_{\mathcal{S}}^*(2)$ is the lower bound obtained in Theorem \ref{lower-bound-main-theorem} for $ k = 2 $. This indicates that \rDynamic is not just asymptotically optimal, but also optimal in the small inventory setting when $ k = 2 $. The corollary below formalizes this result.

\begin{corollary}\label{corrolary:upper-bound-small-inventory-optimality}
Given $\mathcal{S} = \{L, U, f\} $ for the \OSDoS problem with $k = 2$, \rDynamic is $\alpha_{\mathcal{S}}^*(2)$-competitive when $\phi_{1}:[0,1]\rightarrow [L_{1},U_{1}]$ and $\phi_{2}:[0,1]\rightarrow [L_{2},U_{2}]$  are designed as follows:
\begin{itemize}[leftmargin=*]
    \item If  $  \alpha_{\mathcal{S}}^*(2) \geq \frac{2L - c_1 - c_2}{L - c_{1}}$,  then:
    \begin{align*}
            & \phi_{1}(s)  =
            \begin{cases} L & s \in [0,\xi^*], \\
            (L-c_{1})\cdot e^{(s-\xi^*)\cdot {\alpha_{\mathcal{S}}^*(2)}/{2}}+c_{1} & s \in [\xi^*,1],
            \end{cases} \\
            & \phi_{2}(s)  = (L_{2}-c_{2})\cdot e^{s \cdot {\alpha_{\mathcal{S}}^*(2)}/{2}}+c_{2}  \qquad\quad  \forall s \in [0,1].
    \end{align*}
    In this case, the price intervals and $ \xi^* $ are given by
    \begin{align*}
      & L_{1} = L,  U_{1} = L_{2} = (L-c_{1})\cdot e^{(1-\xi^*) \cdot {\alpha_{\mathcal{S}}^*(2)}/{2}} + c_{1}, U_{2} = U,\\
      & \xi^* =\frac{1}{\alpha_{\mathcal{S}}^*(2)}\cdot \frac{ (2 L - c_1-c_2)}{L - c_{1}}.
    \end{align*}
    
    \item If $  \alpha_{\mathcal{S}}^*(2) < \frac{2L - c_1 - c_2}{L - c_{1}}$, then: 
    \begin{align*}
        & \phi_{1}(s) = L, \qquad  \forall s \in [0,1],\\
        & \phi_{2}(s)  =
        \begin{cases} L & s \in [0,\xi^*], \\
        (L-c_{2})\cdot e^{(s-\xi^*)\cdot {\alpha_{\mathcal{S}}^*(2)}/{2}}+c_{2} & s \in [\xi^*,1].
        \end{cases}
        \end{align*}
    In this case, the price intervals and $ \xi^* $ are given by
    \begin{align*}
      & L_{1} = U_{1} = L_{2} = L, \quad U_{2} = U, \\
      & \xi^* =\frac{(2 L - c_1-c_2)/\alpha_{\mathcal{S}}^*(2)- (L-c_{1})}{L - c_{2}}.
    \end{align*}
\end{itemize}
\end{corollary}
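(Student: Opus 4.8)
The plan is to specialize the general machinery of Theorem~\ref{upper-bound-large-inventory-cr} to $k=2$ and then replace the lossy step in its proof by an exact accounting. Setting $k=2$ in Theorem~\ref{lower-bound-main-theorem}, the defining equation for $\alpha_{\mathcal{S}}^*(2)$ and the parameters $\ubar{k}$, $\xi$ collapse into exactly two sub-cases depending on whether $\ubar{k}^*=1$ or $\ubar{k}^*=2$, and the threshold separating these two regimes is precisely $\alpha_{\mathcal{S}}^*(2) \gtrless \frac{2L-c_1-c_2}{L-c_1}$: indeed, by Eq.~\eqref{eq:def-h} with $k=2$, $\ubar{k}^*=1$ iff $L-c_1 \ge \frac{1}{\alpha}(2L-c_1-c_2)$, i.e.\ iff $\alpha_{\mathcal{S}}^*(2) \ge \frac{2L-c_1-c_2}{L-c_1}$. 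First I would carry out this case split, verify that plugging $k=2$ and the corresponding $\ubar{k}^*$ into the pricing functions and price-interval formulas of Theorem~\ref{upper-bound-large-inventory-cr} yields verbatim the two displays in the Corollary (including the stated formulas for $\xi^*$, which are just Eq.~\eqref{eq:def-u} evaluated at $k=2$ for each $\ubar{k}^*$), and confirm that $[L_1,U_1]\cup[L_2,U_2]=[L,U]$ by checking that $U_2=U$ is equivalent to the defining equation \eqref{eq:lower-bound-alpha-star-equation} for $\alpha_{\mathcal{S}}^*(2)$.

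The substantive part is upgrading the competitive ratio from $\alpha_{\mathcal{S}}^*(2)\cdot e^{\alpha_{\mathcal{S}}^*(2)/2}$ to $\alpha_{\mathcal{S}}^*(2)$. The general bound loses the factor $e^{\alpha_{\mathcal{S}}^*(k)/k}$ because of a discretization/rounding argument that compares the integral allocation of \rDynamic against a fractional benchmark; for $k=2$ one can instead reason directly about the joint law of the two independent seeds $(s_1,s_2)$. The approach I would take: fix an arbitrary instance $\mathcal{I}$, let $v^{(1)}\ge v^{(2)}$ be the two largest valuations (with $v^{(2)}$ possibly absent, treated as below $L$), and note that because the posted prices are non-decreasing in the unit index and each $\phi_i$ is monotone in its seed, the number of units \rDynamic sells is a threshold function of $(s_1,s_2)$: unit $1$ sells iff $\phi_1(s_1)\le v^{(1)}$, and given that, unit $2$ sells iff $\phi_2(s_2)\le v^{(2)}$. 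Hence $\mathbb{E}[\ALG(\mathcal{I})]$ decomposes exactly as $\Pr[\phi_1(s_1)\le v^{(1)}]\cdot(v^{(1)}-c_1) + \Pr[\phi_1(s_1)\le v^{(1)}]\Pr[\phi_2(s_2)\le v^{(2)}]\cdot(v^{(2)}-c_2)$, and by construction $\Pr[\phi_i(s_i)\le v]=\phi_i^{-1}(v)=\psi_i^{(\alpha_{\mathcal{S}}^*(2))}(v)$. This reduces the claim to the inequality $\sum_{i}\psi_i(v^{(1)})\cdots(v^{(i)}-c_i)\ge \frac{1}{\alpha_{\mathcal{S}}^*(2)}\OPT(\mathcal{I})$ — but the $\psi_i^{(\alpha)}$ were designed in Proposition~\ref{prop:lower-bound-psi-star-design} to satisfy exactly the tight version of Eq.~\eqref{eq:lb-system-ineq}, so the worst case over $\mathcal{I}$ is attained on the hard family $\mathcal{I}_v^{(\epsilon)}$, where equality holds by Lemma~\ref{lemma:lb:tightness}.

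I would organize the argument as: (i) reduce to the two-largest-valuations instance by a monotonicity/exchange argument showing no other buyer affects $\mathbb{E}[\ALG]$ or can increase $\OPT$ relative to this reduced instance; (ii) write the exact product-form expression for $\mathbb{E}[\ALG(\mathcal{I})]$ using independence of $s_1,s_2$ and the identity $\Pr[\phi_i(s_i)\le v]=\psi_i^{(\alpha_{\mathcal{S}}^*(2))}(v)$; (iii) split on whether $\OPT$ sells one or two units, and in each case verify $\mathbb{E}[\ALG(\mathcal{I})]\ge \frac{1}{\alpha_{\mathcal{S}}^*(2)}\OPT(\mathcal{I})$ by invoking the tightness of $\{\psi_i^{(\alpha_{\mathcal{S}}^*(2))}\}$ against Eq.~\eqref{eq:lb-system-ineq}, checking separately the boundary behavior at $v=L$ (where $\psi_{\ubar{k}^*}$ starts at $\xi^*$) and the regime $v>U_{\ubar{k}^*}$. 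The main obstacle I anticipate is step (iii) in the two-unit case when $v^{(1)}$ and $v^{(2)}$ fall in different price intervals: the product $\psi_1(v^{(1)})\psi_2(v^{(2)})$ is not simply $\chi^{(\alpha)}$-linear, so one must argue that the worst case pins $v^{(1)}=v^{(2)}$ (reducing to the identical-buyers instance $\mathcal{I}_{iden}^{(v)}$ where Eq.~\eqref{eq:lb-system-ineq} is tight) — this requires a short convexity/monotonicity lemma showing that spreading the two valuations apart only helps \rDynamic relative to $\OPT$. Everything else is bookkeeping with the explicit exponential formulas, and the factor-of-$e^{\alpha/k}$ loss simply never enters because with $k=2$ we compute the sold-unit count exactly rather than bounding it.
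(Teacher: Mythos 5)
Your first step (plugging $k=2$ into Theorem~\ref{upper-bound-large-inventory-cr}, splitting on $\ubar{k}^*=1$ versus $\ubar{k}^*=2$ to recover the two bullets and the displayed formulas for $\xi^*$, $L_i$, $U_i$) matches the setup the paper uses, and is fine.

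The substantive part of your plan, however, rests on a decomposition that is not correct. You assert that the number of units \rDynamic sells is a threshold function of $(s_1,s_2)$ and write
$\mathbb{E}[\ALG(\mathcal{I})] = \Pr[\phi_1(s_1)\le v^{(1)}]\,(v^{(1)}-c_1) + \Pr[\phi_1(s_1)\le v^{(1)}]\Pr[\phi_2(s_2)\le v^{(2)}]\,(v^{(2)}-c_2)$
as exact. This is the expression for the specific arrival order in which the \emph{highest} buyer comes first and the second-highest second, and it is \emph{not} a lower bound over the adversary's choice of ordering. The adversary will typically present the second-highest buyer first: she grabs unit $1$ at price $P_1$, and then the highest buyer is offered $P_2 \ge L_2$. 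If both valuations lie below $U_1=L_2$ (the paper's Case~I), the highest buyer then buys nothing, so the algorithm only collects $v^{(2)}-c_1$ on those sample paths and, integrating,
$\mathbb{E}[\ALG] = \psi_1(v^{(2)})(v^{(2)}-c_1) + \big(\psi_1(v^{(1)})-\psi_1(v^{(2)})\big)(v^{(1)}-c_1),$
which is strictly smaller than your product form $\psi_1(v^{(1)})(v^{(1)}-c_1)$ whenever $v^{(1)}>v^{(2)}$ and $\psi_1(v^{(2)})>0$. So your claimed lower bound on $\mathbb{E}[\ALG]$ is in fact an over-estimate on the very instances the adversary would choose, and the reduction to a single formula collapses.

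The follow-on step — arguing via a ``convexity/monotonicity lemma'' that the worst case pins $v^{(1)}=v^{(2)}$ — is also not how the paper closes the argument, and I don't think it can be made to work: the paper's Case~II is precisely the regime $v^*_1 \in [L_1,U_1]$, $v^*_2 \in [L_2,U_2]$ with $v^*_1 \le U_1 = L_2 \le v^*_2$, where the two top valuations straddle the boundary and cannot be made equal. The paper handles this by a genuine bivariate verification: it introduces a function $G(v^*_1,v^*_2)$ representing $\mathbb{E}[\ALG]-\OPT/\alpha^*_{\mathcal{S}}(2)$, and proves $G\ge 0$ over the full rectangle $[L_1,U_1]\times[L_2,U_2]$ by checking the critical points and all four boundary edges (Proposition~\ref{upper-bound-small-inventory-lemma}). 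This is structurally different from a ``pin to the diagonal'' argument, and it is the piece your sketch has no replacement for. In short, the case split you propose is the right instinct, but the two key claims you lean on — exactness of the product-form expectation and reduction to identical buyers — are both false, and the actual proof requires a three-case analysis on where $(v^*_1,v^*_2)$ sit relative to the cut $U_1=L_2$, with the straddling case settled by the bivariate function $G$.
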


The proof of the corollary above is given in Appendix \ref{appendix:proof-corrolary:upper-bound-small-inventory-optimality}. 
In the following two subsections, we first evaluate the empirical performance of \rDynamic and then provide a proof sketch of Theorem \ref{upper-bound-large-inventory-cr} to show the asymptotic optimality of \rDynamic. 

\begin{figure*}[t]
    \centering
    \begin{subfigure}{0.3\textwidth}
        \label{fig:empiricala}
    \includegraphics[width=\linewidth]{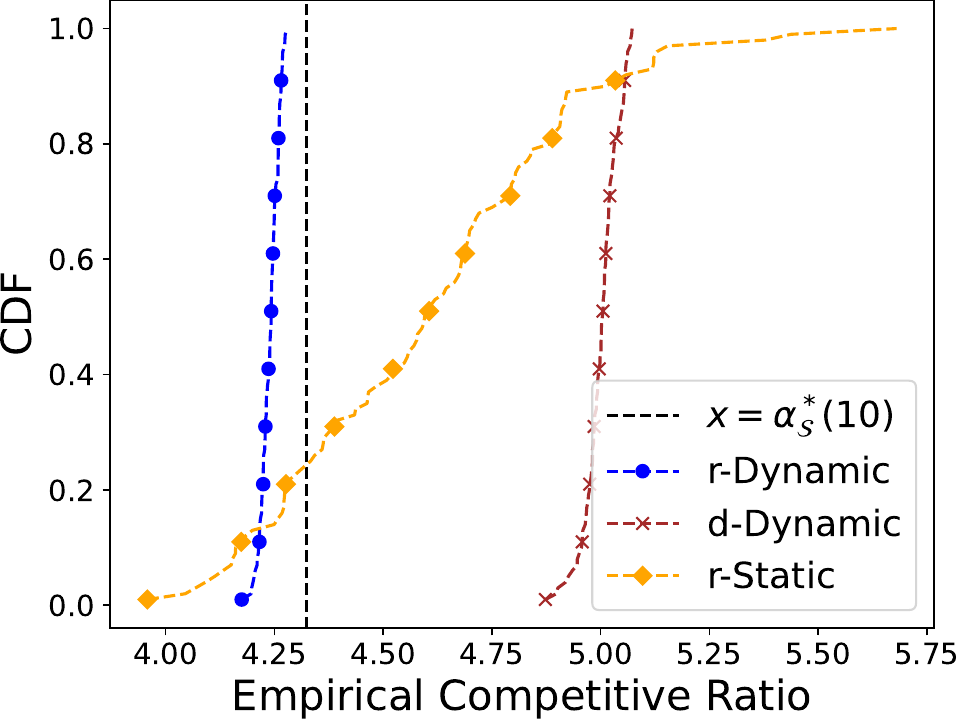}
        \caption{\textsf{\bfseries Instance-Sorted}}
    \end{subfigure} 
    \quad
    \begin{subfigure}{0.3\textwidth}
    \label{fig:empiricalb}
    \includegraphics[width=\linewidth]{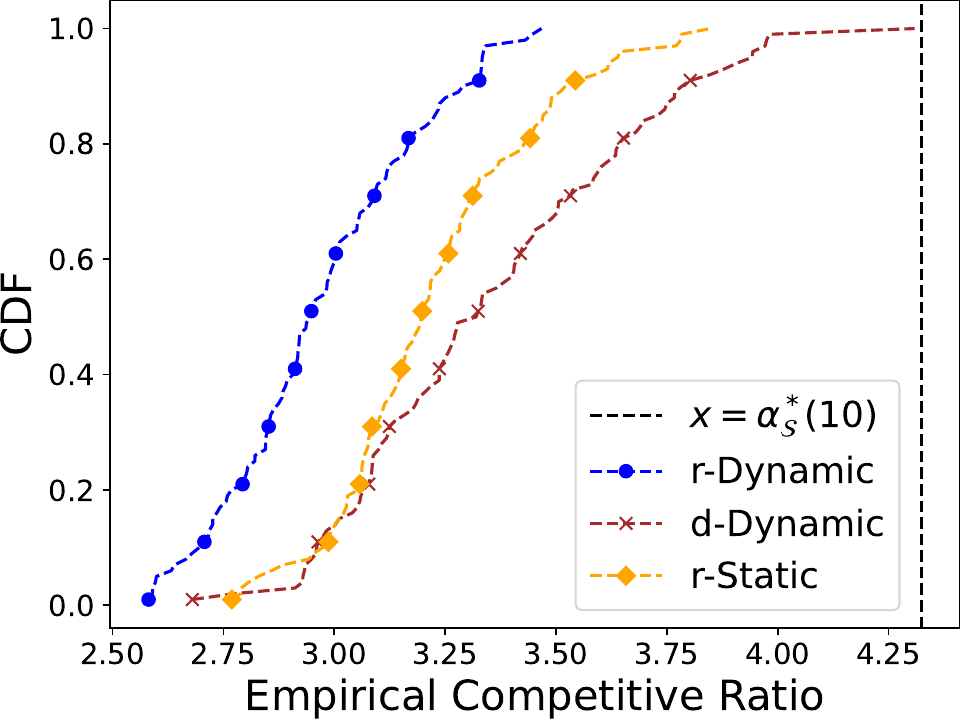}
        \caption{\textsf{\bfseries Instance-Low2High}}
    \end{subfigure}
    \quad
    \begin{subfigure}{0.3\textwidth}
    \label{fig:empiricalc}
    \includegraphics[width=\linewidth]{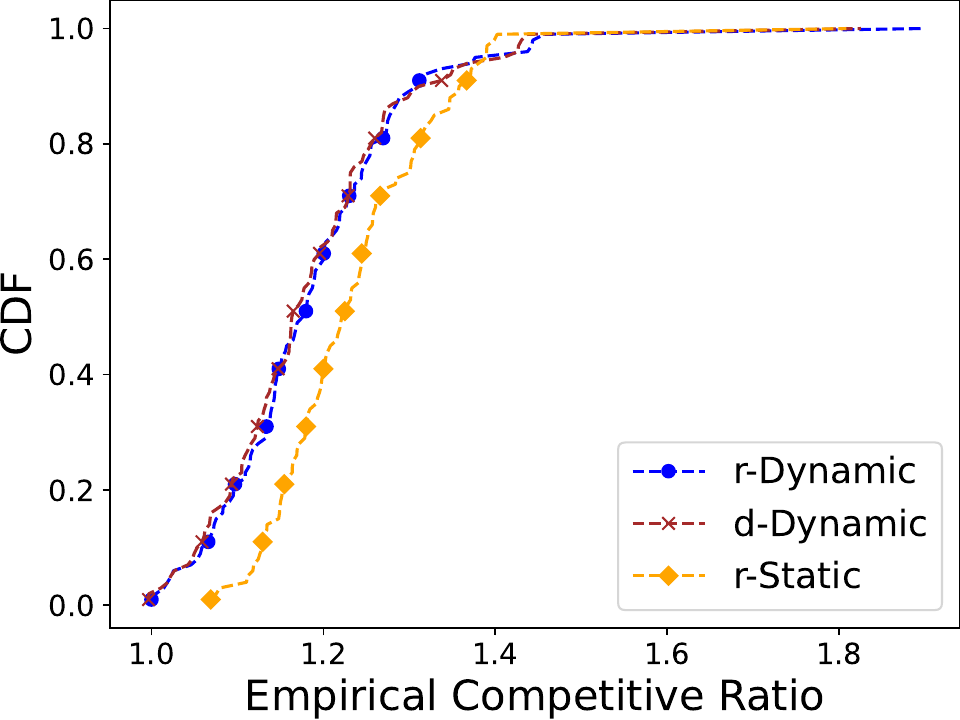}
        \caption{\textsf{\bfseries Instance-IID}}
    \end{subfigure}
    \vspace{-0.1cm}
    \caption{CDF plots of empirical competitive ratios of  \rDynamic (Algorithm \ref{alg:kselection-cost}), \dDynamic  \cite{Tan2023} and \rStatic \cite{sun2024static}.}
    \label{fig:empirical}
    \vspace{-0.1cm}
\end{figure*}

\subsection{Empirical Performance of \rDynamic}
\label{sec:emprical-rdynamics}
We perform three experiments to evaluate the empirical performance of \rDynamic and compare its performance to two other algorithms, \dDynamic \cite{Tan2023} and \rStatic \cite{sun2024static}. Throughout the three experiments, the setup $\mathcal{S}$ is fixed to be $\{L=1,U=30,f(i) = i^2/16\}$ and $k=10$. To stimulate different arrival patterns of buyers, we consider the following three types of instances: 
\begin{itemize}[leftmargin=*]
    \item \textsf{\bfseries Instance-IID}: We generate the valuations of $1000$ buyers using the truncated normal distribution $N(15,15)_{[1,30]}$.
    \item \textsf{\bfseries Instance-Sorted}: We generate $1000$ buyers using the same approach as \textsf{\bfseries Instance-IID}, and sort these buyers in increasing order by their valuations. This instance mimics the hard instance $\mathcal{I}^{(\epsilon)}$.
    \item \textsf{\bfseries Instance-Low2High}: We generate the valuations of $500$ buyers using truncated normal distribution $N(7.5,7.5)_{[1,30]}$. Following these 500 buyers, we generate another $500$ buyers using distribution $N(22.5,7.5)_{[1,30]}$.  
\end{itemize}
Figure \ref{fig:empirical} presents the CDF plot of the empirical competitive ratios for the three algorithms \rDynamic, \dDynamic, and \rStatic, evaluated on 300 instances from each type of instance. In Figure \ref{fig:empirical}(a), \rDynamic significantly outperforms the other two algorithms under \textsf{\bfseries Instance-Sorted}. This is because the valuations of online arrivals are increasing, similar to the hard instance $\mathcal{I}^{(\epsilon)}$ defined in Section \ref{sec:nec-cond}. This result confirms the superior performance of \rDynamic under difficult instances compared to the other algorithms. Additionally, Figure \ref{fig:empirical}(a) demonstrates that \rDynamic's performance is very close to the lower bound $\alpha^{*}_{\mathcal{S}}(10)$, suggesting that \rDynamic may not only be asymptotically optimal in the large $k$ regime but also near-optimal in the small $k$ regime. In Figure \ref{fig:empirical}(b), \textsf{\bfseries Instance-Low2High} consists of two phases: low-valued buyers arriving first, followed by high-valued buyers. This instance is simpler than \textsf{\bfseries Instance-Sorted}, and the performance of all three algorithms improves, with \rDynamic continuing to outperform the others. Finally, in Figure \ref{fig:empirical}(c), under \textsf{\bfseries Instance-IID}, all algorithms achieve a competitive ratio close to 1, with \rDynamic and \dDynamic performing similarly. These results indicate that \rDynamic's advantage is most evident on more challenging instances, particularly when low-valued buyers arrive before high-valued ones.

\subsection{Proof Sketch of Theorem \ref{upper-bound-large-inventory-cr}}
\label{sec:proof_theorem_upper_bound}
For an arbitrary arrival instance $\mathcal{I} = \{v_t\}_{\forall t \in [T]}$, we prove that $ \rDynamic $ is $\alpha_{\mathcal{S}}^*(k)\cdot \exp(\frac{\alpha_{\mathcal{S}}^*(k)}{k})$-competitive if the pricing functions $\{\phi_i\}_{\forall i \in [k]}$ are designed according to Theorem \ref{upper-bound-large-inventory-cr}. 

Recall that $\mathbf{P} = \{P_i\}_{i\in[k]}$ is generated using the pricing functions $\{\phi_i\}_{\forall i \in [k]}$ at the start of \rDynamic (line \ref{line_P_vector} of Algorithm \ref{alg:kselection-cost}). Hereafter, we will refer to Algorithm \ref{alg:kselection-cost} as $\rDynamic(\mathbf{P})$ to indicate that the algorithm is executed with the random price vector $\mathbf{P}$. Based on the design of  $\{\phi_{i}\}_{\forall i \in [k]}$ in Theorem \ref{upper-bound-large-inventory-cr}, the first $ \ubar{k}^* - 1 $ prices in $ \mathbf{P} $ are all $ L $'s (i.e., $ P_1 = \cdots = P_{\ubar{k}^* - 1} =  L $), the $ \ubar{k}^*$-th price $ P_{\ubar{k}^*} $ is a random variable within $ [L, U_{\ubar{k}^*}]$, and for all $ i \in \{\ubar{k}^*+1, \cdots, k]$, the $ i $-th price $ P_i $ is a random variable within $ [L_i, U_i] $. Here, the values of $\ubar{k}^* $ and $ \{[L_i,U_i]\}_{\forall i} $ are all defined in Theorem \ref{upper-bound-large-inventory-cr}. 

Let $\mathcal{P}$ denote the support of all possible values of the random price vector $\mathbf{P}$:
\begin{align*}
    \mathcal{P} = \{L\}^{\ubar{k}^*-1} \times [L,U_{\ubar{k}^*}] \times \prod_{i \in\{\ubar{k}^*+1,\cdots, k\}} [L_{i}, U_{i}].
\end{align*}
Given a price vector $\mathbf{P} \in \mathcal{P}$, let $W(\mathbf{P})$ represent the total number of items allocated by $\rDynamic(\mathbf{P})$ under the input instance $\mathcal{I}$. Since $\mathbf{P}$ is a random variable, $W(\mathbf{P})$ is also a random variable. For clarity, we will sometimes omit the price vector and refer to it simply as $W$ whenever the context is clear.

Let $\omega$ be the maximum value in the support of the random variable $ W $ (i.e., $ \omega $ is the maximum possible value of $ W(\mathbf{P}) $ for all $ \mathbf{P}\in\mathcal{P}$). Thus, $\omega$ is a deterministic value that depends only on the input instance $\mathcal{I}$. In addition, let $\boldsymbol{\pi} \in \mathcal{P}$ be a price vector such that $\rDynamic(\boldsymbol{\pi})$ allocates the $\omega$-th item earlier than any other price vector in the set $\mathcal{P}$. That is, for all $ \mathbf{P}\in \mathcal{P}$, $\rDynamic(\mathbf{P})$ allocates the $\omega$-th item no earlier than that of $\rDynamic(\boldsymbol{\pi})$. Let us define the set $\{\tau_{i}\}_{\forall i\in [\omega]}$ so that $\tau_{i}$ is the arrival time of the buyer in the instance $ \mathcal{I} $ to whom $\rDynamic(\boldsymbol{\pi})$ allocates the $i$-th unit. Note that for all $ i \in \{1, \cdots, \omega\} $,  $ \tau_i $ is a deterministic value once $ \boldsymbol{\pi}$ and $ \mathcal{I} $ are given. Let the random variable $W^{\tau_{\omega}}(\mathbf{P}) $ denote the total number of items allocated by $\rDynamic(\mathbf{P})$ after the arrival of buyer $\tau_{\omega}$ in the instance $\mathcal{I}$. The lemma below shows that the random variable $ W^{\tau_{\omega}}(\mathbf{P}) $ is always lower bounded by $ \omega - 1$.

\begin{lemma}\label{lem:omega-1}
Given instance $ \mathcal{I} $, $ W^{\tau_{\omega}}(\mathbf{P}) \geq \omega - 1$ holds for all $ \boldsymbol{P} \in \mathcal{P} $.
\end{lemma}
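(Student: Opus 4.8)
The plan is to show that for every price vector $\mathbf{P} \in \mathcal{P}$, by the time buyer $\tau_\omega$ has arrived, $\rDynamic(\mathbf{P})$ has already allocated at least $\omega - 1$ units. The key structural fact to exploit is that the prices are \emph{coupled across price vectors}: the first $\ubar{k}^*-1$ coordinates of every $\mathbf{P}\in\mathcal{P}$ equal $L$, and for every index $i \geq \ubar{k}^*$ the price $P_i$ lies in the interval $[L_i, U_i]$ with these intervals ordered and non-overlapping. Consequently, for any two price vectors $\mathbf{P}, \mathbf{P}' \in \mathcal{P}$ we have $P_i \le P'_j$ whenever $i < j$, and in particular the coordinatewise behaviour is tightly controlled. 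First I would recall how $\boldsymbol{\pi}$ and the times $\{\tau_i\}_{i\in[\omega]}$ are defined: $\boldsymbol{\pi}$ is the price vector that allocates the $\omega$-th unit earliest among all vectors in $\mathcal{P}$, and $\tau_i$ is the arrival time of the buyer receiving the $i$-th unit under $\rDynamic(\boldsymbol{\pi})$.

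The main step is a comparison argument between $\rDynamic(\mathbf{P})$ and $\rDynamic(\boldsymbol{\pi})$ run on the same instance $\mathcal{I}$. I would argue by contradiction: suppose some $\mathbf{P}\in\mathcal{P}$ has $W^{\tau_\omega}(\mathbf{P}) \le \omega - 2$, i.e.\ by the end of buyer $\tau_\omega$ the algorithm with prices $\mathbf{P}$ has sold at most $\omega-2$ units. Track the index $\kappa_t$ of the next unit to be posted under each run. Since $\rDynamic(\boldsymbol{\pi})$ sells unit $i$ to buyer $\tau_i$ for each $i \le \omega$, at buyer $\tau_i$ the posted price under $\boldsymbol{\pi}$ is $\pi_i$ and buyer $\tau_i$ accepts, so $v_{\tau_i} \ge \pi_i$. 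Now consider the run with prices $\mathbf{P}$. If at some buyer $\tau_i$ (with $i \le \omega - 1$) the run with $\mathbf{P}$ has currently sold fewer than $i$ units, then its current index $\kappa \le i$, so the posted price is $P_\kappa$ with $\kappa \le i$; because $P_\kappa \le U_\kappa \le L_{i+1} \le \cdots$ and more directly $P_\kappa$ and $\pi_i$ both lie in a nested chain of intervals with $P_\kappa \le U_i$ while $v_{\tau_i} \ge \pi_i \ge L_i$, one needs $P_\kappa \le v_{\tau_i}$ so buyer $\tau_i$ also accepts under $\mathbf{P}$. The delicate point is handling the case $\kappa = i$ exactly versus $\kappa < i$, and showing the acceptance inequality $P_\kappa \le v_{\tau_i}$ holds in all cases; this is where the ordering $L = L_1 \le U_1 = L_2 \le \cdots$ and the fact that $\pi_i, P_i \in [L_i, U_i]$ are used. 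Iterating this "lag-at-most-one" invariant from buyer $\tau_1$ through $\tau_{\omega-1}$ shows that after buyer $\tau_{\omega-1}$ — hence certainly after the later buyer $\tau_\omega$ — the run with $\mathbf{P}$ has sold at least $\omega - 1$ units, contradicting the assumption.

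I would organize the formal write-up around the invariant: \emph{for every $j \in \{1,\dots,\omega\}$, immediately after buyer $\tau_j$ arrives, $\rDynamic(\mathbf{P})$ has allocated at least $j-1$ units}, proved by induction on $j$. The base case $j=1$ is trivial since $0 \ge 0$. For the inductive step, assuming at least $j-2$ units have been sold after buyer $\tau_{j-1}$, note no sales can be missed between $\tau_{j-1}$ and $\tau_j$ that would push the count below $j-1$: either the count already reached $j-1$, or the index $\kappa$ satisfies $\kappa \le j-1$ at buyer $\tau_j$, and then since $v_{\tau_j} \ge \pi_j \ge L_j \ge U_{j-1} \ge U_\kappa \ge P_\kappa$ (using $\kappa \le j-1$), buyer $\tau_j$ accepts price $P_\kappa$ and the count increments to at least $j-1$. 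Taking $j = \omega$ and observing $W^{\tau_\omega}(\mathbf{P}) \ge$ (count right after buyer $\tau_\omega$) $\ge \omega - 1$ completes the proof. The main obstacle I anticipate is being careful with the boundary cases of the interval chain — in particular when $\ubar{k}^* > 1$ so several leading prices collapse to $L$, and when $\pi_j = L_j$ or $P_\kappa = U_\kappa$ sit exactly on shared interval endpoints — but these are handled uniformly by the non-strict inequalities $L = L_1 \le U_1 = L_2 \le \cdots \le U_k = U$ built into the construction.
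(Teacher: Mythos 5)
Your proof is correct and follows essentially the same approach as the paper's: the paper likewise shows that if $\rDynamic(\mathbf{P})$ has sold fewer than $j-1$ units when buyer $\tau_j$ arrives, the posted price $P_\kappa$ (with $\kappa \le j-1$) is at most $U_{j-1} = L_j \le \nu_j$, so buyer $\tau_j$ accepts, and this ``lag at most one behind $\boldsymbol{\pi}$'' step is iterated from $\tau_2$ through $\tau_\omega$. Your version simply makes the induction and its invariant explicit, which is a presentational tightening of the paper's ``by the same reasoning'' shorthand rather than a different argument.
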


Lemma \ref{lem:omega-1} greatly simplifies the analysis of \rDynamic since it implies that the support of the random variable $ W^{\tau_{\omega}} $ consists only of two values: $\omega-1$ and $\omega$ (note that all $ W $'s are upper bounded by $\omega $). The intuition behind Lemma \ref{lem:omega-1} is as follows. For all $ i \in \{1, \cdots, \omega\} $, recall that  $\tau_{i}$ denotes the arrival time of the buyer in the instance $ \mathcal{I} $ who receives the $ i$-th unit under $\rDynamic(\boldsymbol{\pi})$. Upon the arrival of buyer $\tau_{i}$, if the number of items allocated by $\rDynamic(\mathbf{P})$ is less than $i-1$, then the current $\tau_{i}$-th buyer will definitely accept the price offered to her, ensuring that one more unit will be sold. As a result, at least $\omega - 1$ items will be allocated by the end of time $\tau_{\omega}$. Lemma \ref{lem:omega-1} thus follows.

The following two lemmas help us  lower bound the expected performance of\rDynamic on input instance $\mathcal{I}$ and upper bound the objective of the offline optimal algorithm, respectively.

\begin{lemma}\label{lem:main:claim3-upper-bound-kselection-cost}
    If a buyer in instance $\mathcal{I}$ arrives before time $\tau_{\omega}$ with a valuation within $[L_{\omega}, U]$, then for all $\mathbf{P} \in \mathcal{P}$, $ \rDynamic(\mathbf{P}) $ will allocate one unit of the item to that buyer.
\end{lemma}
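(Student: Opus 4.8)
\textbf{Proof proposal for Lemma \ref{lem:main:claim3-upper-bound-kselection-cost}.}

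The plan is to argue by contradiction, leveraging the non-decreasing structure of the price vector $\mathbf{P}$ and the fact that $\boldsymbol{\pi}$ is the ``fastest'' price vector to reach the $\omega$-th allocation. Fix an arbitrary $\mathbf{P} \in \mathcal{P}$ and let $t^\star < \tau_\omega$ be the arrival time of a buyer with valuation $v_{t^\star} \in [L_\omega, U]$. Suppose, for contradiction, that $\rDynamic(\mathbf{P})$ does not allocate a unit to buyer $t^\star$. Since by construction $P_1 \le P_2 \le \cdots \le P_k$ and $P_i \le U_i$ for every $i$, in particular $P_\omega \le U_\omega \le U$ and moreover $P_j \le U_\omega$ for every $j \le \omega$ because $U_1 \le U_2 \le \cdots \le U_\omega$; combined with $v_{t^\star} \ge L_\omega$ I will show that buyer $t^\star$ would accept whichever price is currently posted, unless the algorithm has already exhausted its first $\omega$ units. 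Hence the only way buyer $t^\star$ is not served is that $\rDynamic(\mathbf{P})$ has already allocated at least $\omega$ units strictly before time $t^\star$.

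The next step is to derive a contradiction with the extremal property of $\boldsymbol{\pi}$ and $\omega$. If $\rDynamic(\mathbf{P})$ allocated $\omega$ units before time $t^\star < \tau_\omega$, then under $\mathbf{P}$ the $\omega$-th unit is allocated strictly earlier than $\tau_\omega$, which is the arrival time at which $\rDynamic(\boldsymbol{\pi})$ allocates its $\omega$-th unit. This contradicts the defining property of $\boldsymbol{\pi}$, namely that $\rDynamic(\boldsymbol{\pi})$ allocates the $\omega$-th item no later than $\rDynamic(\mathbf{P})$ does, for every $\mathbf{P} \in \mathcal{P}$. Therefore $\rDynamic(\mathbf{P})$ has allocated at most $\omega - 1$ units by the time buyer $t^\star$ arrives, so the currently-offered unit has some index $\kappa \le \omega$; its posted price is $P_\kappa \le U_\kappa \le U_\omega \le v_{t^\star}$ (using $v_{t^\star} \ge L_\omega$ and $U_\omega = L_{\omega}$ only if $\omega$ is the first active index, but in general $U_{\kappa} \le U_{\omega-1} = L_\omega \le v_{t^\star}$ when $\kappa < \omega$, and $P_\omega \le U_\omega$; one checks $v_{t^\star} \ge L_\omega \ge U_{\omega - 1} \ge P_\kappa$). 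Hence the buyer obtains non-negative utility and accepts, so $\rDynamic(\mathbf{P})$ allocates a unit to buyer $t^\star$, completing the argument.

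The main obstacle I anticipate is handling the boundary bookkeeping cleanly: one must be careful to distinguish the case $\kappa < \omega$ (where the clean inequality $P_\kappa \le U_\kappa \le L_\omega \le v_{t^\star}$ applies because price intervals are nested end-to-end) from the case $\kappa = \omega$ (where one only knows $P_\omega \le U_\omega \le U$, and one needs $v_{t^\star} \ge L_\omega = U_{\omega - 1}$ together with the fact that buyer $t^\star$ is being offered the $\omega$-th unit because the first $\omega - 1$ have been sold). A secondary subtlety is that $\rDynamic(\mathbf{P})$ might already have allocated exactly $\omega - 1$ units when $t^\star$ arrives, in which case $\kappa = \omega$ and the acceptance still follows from $v_{t^\star} \ge L_\omega$ and $P_\omega \le U_\omega$ together with $L_\omega \le P_\omega$ — wait, this requires $v_{t^\star} \ge P_\omega$, which holds since $P_\omega \le U_\omega = U$ is too weak; instead I use that the relevant comparison is $v_{t^\star} \ge L_\omega$ and that the lemma's hypothesis places $v_{t^\star}$ in $[L_\omega, U]$ precisely so that it dominates any price in $[L_\omega, U_\omega]$ only when... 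Actually the correct reading is that one does not need $v_{t^\star} \ge P_\omega$ in full generality; rather, the statement to prove is allocation, and if $\kappa = \omega$ with $v_{t^\star} < P_\omega$ the buyer declines — so the real content is that this cannot happen, i.e. $\kappa \le \omega - 1$ strictly, which again follows from the extremality of $\boldsymbol\pi$ combined with $t^\star < \tau_\omega$ meaning fewer than $\omega$ units can have been sold. I will therefore structure the proof so that the extremality argument is invoked first to pin down $\kappa \le \omega$ and in fact that the $\omega$-th unit is still unsold at time $t^\star$ under $\mathbf P$, reducing everything to the nested-interval inequality $v_{t^\star} \ge L_\omega \ge U_{\kappa}$ for the active index $\kappa$.
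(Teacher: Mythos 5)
Your overall strategy — contradiction via the extremal property of $\boldsymbol{\pi}$ plus the nested price intervals — matches the paper's. However, the crucial step is not filled correctly. You correctly flag that the dangerous case is when, under $\mathbf{P}$, exactly $\omega-1$ units have been sold before $t^\star$, so the active index is $\kappa=\omega$ and the offered price $P_\omega\in[L_\omega,U_\omega]$ need not be $\le v_{t^\star}$. But you then claim this case ``cannot happen, i.e.\ $\kappa\le\omega-1$ strictly, which again follows from the extremality of $\boldsymbol{\pi}$ combined with $t^\star<\tau_\omega$ meaning fewer than $\omega$ units can have been sold.'' That inference is wrong: ``fewer than $\omega$ units sold'' means the count is $\le\omega-1$, hence $\kappa\le\omega$, not $\kappa\le\omega-1$. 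Extremality of $\tau_\omega$ alone only rules out the count reaching $\omega$ before $\tau_\omega$; it does not forbid the count reaching $\omega-1$ (indeed, under $\boldsymbol{\pi}$ itself the count is $\omega-1$ throughout $[\tau_{\omega-1},\tau_\omega)$).

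What is actually needed, and what you are missing, is an argument that uses the hypothesis $v_{t^\star}\ge L_\omega$ to exclude the $\kappa=\omega$ case. Concretely: suppose the count under $\mathbf{P}$ is $\omega-1$ at $t^\star$. Consider the minimal price vector $\boldsymbol{\pi}=(L_1,\dots,L_k)\in\mathcal{P}$; by a straightforward induction on arrivals, pointwise lower prices never yield a smaller allocation count, so the count under $\boldsymbol{\pi}$ at $t^\star$ is also at least $\omega-1$, and it is at most $\omega-1$ since $t^\star<\tau_\omega$, hence exactly $\omega-1$. Then $\rDynamic(\boldsymbol{\pi})$ offers $\pi_\omega=L_\omega\le v_{t^\star}$ to buyer $t^\star$, who accepts, so the $\omega$-th sale under $\boldsymbol{\pi}$ occurs at $t^\star<\tau_\omega$, contradicting the definition of $\tau_\omega$. (Equivalently, one can replace $P_\omega$ by $L_\omega$ in $\mathbf{P}$ to get a vector $\mathbf{P}'\in\mathcal{P}$ under which the $\omega$-th sale happens at or before $t^\star$.) This shows the count under $\mathbf{P}$ at $t^\star$ is in fact $\le\omega-2$, so $\kappa\le\omega-1$ and $P_\kappa\le U_{\omega-1}=L_\omega\le v_{t^\star}$, completing the proof. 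Without this extra argument — which is the real content behind the paper's assertion that ``the posted prices remain below $U_{\omega-1}$ before $\tau_\omega$'' — your proposal has a genuine gap.
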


Lemma \ref{lem:main:claim3-upper-bound-kselection-cost} can be proved as follows. By definition, $\tau_{\omega}$ is the earliest time across all possible price vectors in $\mathcal{P}$ that the production level exceeds $\omega-1$, causing the posted price to exceed $U_{\omega-1}$. Thus, for all possible realization of $\mathbf{P} \in \mathcal{P}$, the posted prices by \rDynamic remain below $U_{\omega-1}$ before the arrival of buyer at time $\tau_{\omega}$. Consequently, when a buyer with a valuation within $[L_{\omega}, U]$ arrives before time $\tau_{\omega}$, the buyer accepts the price posted to him (since $ L_{\omega} \ge U_{\omega-1} $) and a unit of item will thus be allocated to this buyer.

\begin{lemma}
\label{lem:main:claim2-kselection-production-cost}
    There are no buyers in instance $\mathcal{I}$ with a valuation within $[U_{\omega}, U]$ arriving after time $\tau_{\omega}$, namely, the valuations of all buyers arrive after $\tau_{\omega}$ are less than $ U_{\omega}$.
\end{lemma}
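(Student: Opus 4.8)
\textbf{Proof plan for Lemma~\ref{lem:main:claim2-kselection-production-cost}.}
The plan is to argue by contradiction. Suppose there is a buyer, say arriving at time $t > \tau_\omega$, whose valuation $v_t$ lies in $[U_\omega, U]$. The key observation is that $\tau_\omega$ is defined via the price vector $\boldsymbol{\pi}$, which allocates the $\omega$-th item earliest among all price vectors in $\mathcal{P}$, and $\omega$ is the \emph{maximum} number of items ever allocated by $\rDynamic(\mathbf{P})$ over all $\mathbf{P} \in \mathcal{P}$ on this instance. So I would first use Lemma~\ref{lem:omega-1} to note that, under $\boldsymbol{\pi}$ itself, exactly $\omega$ items have been allocated by the end of time $\tau_\omega$ (it allocates the $\omega$-th item at time $\tau_\omega$ by definition), so after $\tau_\omega$ the index of the next unit to be sold under $\boldsymbol{\pi}$ is $\omega + 1$, and the corresponding posted price under $\boldsymbol{\pi}$ is $P_{\omega+1} = \phi_{\omega+1}(s_{\omega+1}) \ge L_{\omega+1} = U_\omega$.

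The heart of the argument is then the following: consider the specific realization of the random seeds that produces $\boldsymbol{\pi}$, but imagine (or rather, use the definition of $\omega$) what happens when the buyer at time $t$ arrives. Under $\boldsymbol{\pi}$, by time $t > \tau_\omega$ exactly $\omega$ units have been sold (no more can be sold since $\omega$ is the maximum), so the unit on offer to buyer $t$ is unit $\omega + 1$ at price $P_{\omega+1}$. If $v_t \ge U_\omega = L_{\omega+1}$ and, moreover, $v_t$ happens to exceed $P_{\omega+1}$ for \emph{some} admissible choice of $s_{\omega+1}$ — indeed, since $\phi_{\omega+1}(0) = L_{\omega+1} = U_\omega \le v_t$, the buyer would accept when $s_{\omega+1}$ is near $0$ — then we can exhibit a price vector $\mathbf{P}' \in \mathcal{P}$ (namely $\boldsymbol{\pi}$ with its $(\omega+1)$-st coordinate lowered toward $L_{\omega+1}$) under which $\rDynamic(\mathbf{P}')$ sells $\omega + 1$ units, contradicting the maximality of $\omega$. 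The only delicate point is the boundary case $\phi_{\omega+1}(s)=v_t$ exactly; but since $v_t \ge U_\omega$ and the buyer accepts whenever $v_t \ge p_t$ (a non-strict inequality, per the model $v_t - p_t \ge 0$), taking $s_{\omega+1}=0$ already gives acceptance, so the contradiction goes through even in the boundary case. (If $\omega = k$, there is no unit $\omega+1$; but then no buyer after $\tau_\omega$ is offered any price at all, so trivially none with valuation in $[U_\omega,U]=\{U_k\}$ can be "served" in a way that matters — and in fact the statement is about what valuations \emph{appear}, so one should instead phrase the $\omega=k$ case as: if such a buyer arrived, the offline optimum could be improved while the online algorithm is already saturated, which is handled separately or subsumed because $U_k = U$ makes the interval degenerate; I would state this edge case explicitly.)

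The main obstacle I anticipate is getting the $\omega = k$ boundary case and the exact-equality acceptance convention stated cleanly, since the "contradiction with maximality of $\omega$" argument implicitly assumes there is always a next unit $\omega+1$ to offer. I would handle this by treating $\omega < k$ as the principal case (where the construction of $\mathbf{P}'$ is immediate) and noting that when $\omega = k$ the price intervals give $U_\omega = U_k = U$, so the interval $[U_\omega, U]$ collapses to the single point $\{U\}$; a buyer with valuation exactly $U$ arriving after $\tau_k$ would be offered nothing (all $k$ units already on the market or sold), and one checks directly from the structure of $\boldsymbol{\pi}$ and Lemma~\ref{lem:omega-1} that this cannot happen without some $\mathbf{P}$ selling more than $k$ units — impossible since $k$ is the hard capacity. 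Thus in all cases the assumed buyer cannot exist, completing the proof.
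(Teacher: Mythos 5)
Your core argument for $\omega < k$ is the paper's argument, made explicit: take $\boldsymbol{\pi}$, keep its first $\omega$ coordinates, and lower the $(\omega+1)$-st coordinate to $L_{\omega+1}=U_\omega$; under this $\mathbf{P}'$ the dynamics coincide with $\rDynamic(\boldsymbol{\pi})$ until the $\omega$-th sale at time $\tau_\omega$, after which the hypothetical buyer $t>\tau_\omega$ with $v_t \ge U_\omega = L_{\omega+1} = P'_{\omega+1}$ accepts, yielding $\omega+1$ sales and contradicting the maximality of $\omega$. This is precisely the price vector the paper describes in its footnote (modulo typos there, where the paper writes ``$(i+1)$-th'' and ``$U^{\omega}$'' but clearly means $(\omega+1)$-th and $U_\omega$), so the two proofs are essentially identical.

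One caveat: your handling of the $\omega = k$ boundary is not correct as written. You assert that a buyer with $v_t = U$ arriving after $\tau_k$ ``cannot happen without some $\mathbf{P}$ selling more than $k$ units,'' but that does not follow — such a buyer would simply be rejected (all $k$ units already committed under any $\mathbf{P}$), so no price vector is forced to oversell and no contradiction arises. In fact, when $\omega = k$, the lemma's strict inequality $v_t < U_\omega = U$ can genuinely fail. The paper glosses over this too. The reason it is harmless is that the lemma's downstream use in bounding $\OPT$ (Case~2 of the proof of Theorem~\ref{upper-bound-large-inventory-cr}) only needs the non-strict bound $\lambda \le U_\omega$, which is trivially true when $\omega = k$ since all valuations are at most $U = U_k$. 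So your instinct to flag the edge case is right, but the resolution should be ``the strict claim may fail when $\omega=k$, yet the non-strict bound suffices and holds trivially there,'' rather than claiming the edge case cannot occur.
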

The above lemma can be proved by contradiction. If there exists a buyer arriving after time $\tau_{\omega}$ with a valuation within $[U_{\omega}, U]$, then there must exist a price vector in $\mathcal{P}$, say $ \mathbf{P}' $, such that $ \rDynamic(\mathbf{P}') $ will allocate more than $\omega$ units, contradicting the definition of $\omega$. 

Applying Lemma \ref{lem:main:claim3-upper-bound-kselection-cost} and observing that \rDynamic sells at least $\omega-1$ units, we can derive a lower bound on the expected performance of \rDynamic. Conversely, using the lemma \ref{lem:main:claim2-kselection-production-cost} and the fact that for all $\mathbf{P} \in \mathcal{P}$, the allocation level of \rDynamic never exceeds $\omega$, we can upper bound the objective of the offline optimal algorithm. The combination of these two bounds yields the final competitive ratio of \rDynamic. For the full proof of Theorem \ref{upper-bound-large-inventory-cr}, refer to Appendix~\ref{appendix:proof-upper-bound-large-inventory-cr}.

\section{Conclusions and Future Work}
In this paper, we studied online $k$-selection with production costs that exhibit diseconomies of scale (\OSDoS) and developed novel randomized dynamic pricing mechanisms with the best-known competitive ratios. Specifically, our randomized dynamic pricing scheme provides tight guarantees in both the small and large inventory settings (i.e., small and large $k$), addressing the gap left by \cite{Tan2023}. These findings advance the theoretical understanding of \OSDoS and offer practical insights for designing randomized dynamic pricing mechanisms in online resource allocation problems with increasing marginal production costs.

This work highlights several promising directions for future research. First, we conjecture that our proposed randomized pricing mechanism is optimal for all $k \geq 1$. However, a more refined analysis is required to establish or refute its optimality for $k \geq 3$. Additionally, extending our results to multi-resource or combinatorial settings could reveal new insights into online resource allocation with diseconomies of scale in more complex environments. Furthermore, it would be valuable to explore other metrics, such as risk and fairness, in online allocation and selection to ensure that the developed randomized pricing mechanisms not only maximize efficiency but also promote reliable and equitable outcomes.

\section*{Acknowledgments}
Hossein Nekouyan Jazi and Xiaoqi Tan acknowledge support from the Alberta Machine Intelligence Institute (Amii), the Alberta Major Innovation Fund, and the NSERC Discovery Grant RGPIN-2022-03646. Bo Sun and Raouf Boutaba acknowledge support from the NSERC Grant RGPIN-2019-06587.

\printbibliography{}

@article{buchbinder2009online,
  title={Online primal-dual algorithms for covering and packing},
  author={Buchbinder, Niv and Naor, Joseph},
  journal={Mathematics of Operations Research},
  volume={34},
  number={2},
  pages={270--286},
  year={2009},
  publisher={INFORMS}
}

@inproceedings{awerbuch1993throughput,
  title={Throughput-competitive on-line routing},
  author={Awerbuch, Baruch and Azar, Yossi and Plotkin, Serge},
  booktitle={Proceedings of 1993 IEEE 34th Annual Foundations of Computer Science},
  pages={32--40},
  year={1993},
  organization={IEEE}
}

@article{cao2022online,
  title={Online network utility maximization: Algorithm, competitive analysis, and applications},
  author={Cao, Ying and Sun, Bo and Tsang, Danny HK},
  journal={IEEE Transactions on Control of Network Systems},
  volume={10},
  number={1},
  pages={274--284},
  year={2022},
  publisher={IEEE}
}

@inproceedings{online_selection_constrained_adversary_ICML_2021,
  title={Online selection problems against constrained adversary},
  author={Jiang, Zhihao and Lu, Pinyan and Tang, Zhihao Gavin and Zhang, Yuhao},
  booktitle={International Conference on Machine Learning},
  pages={5002--5012},
  year={2021},
  organization={PMLR}
}

@Book{Gupta_2020_survey_ROM, 
	publisher ={Cambridge University Press}, 
	title={Random-Order Models. Chapter 11 -- Beyond the Worst-Case Analysis of Algorithms}, 
	publisher={Cambridge University Press}, 
	author={Gupta, Anupam and Singla, Sahil}, 
	year={2021}, 
	pages={234–258}
}

@inproceedings{Blum_2011,
	author = {Blum, Avrim and Gupta, Anupam and Mansour, Yishay and Sharma, Ankit},
	title = {Welfare and Profit Maximization with Production Costs},
	booktitle = {Proceedings of the 52nd Annual Symposium on Foundations of Computer Science (FOCS '11)},
	year = {2011},
	isbn = {978-0-7695-4571-4},
	pages = {77--86},
	numpages = {10},
	acmid = {2082932},
}

@article{diseconomy_cost,
	author = {Andrews, Matthew. and Antonakopoulos, Spyridon. and Zhang, Lisa.},
	title = {Minimum-Cost Network Design with (Dis)economies of Scale},
	journal = {SIAM Journal on Computing},
	volume = {45},
	number = {1},
	pages = {49-66},
	year = {2016},
	
}

@article{Mehta2013,
	author = {Mehta, Aranyak},
	title = {Online Matching and Ad Allocation},
	journal = {Found. Trends Theor. Comput. Sci.},
	issue_date = {October 2013},
	volume = {8},
	number = {4},
	month = oct,
	year = {2013},
	issn = {1551-305X},
	pages = {265--368},
	acmid = {2688034},
	publisher = {Now Publishers Inc.},
	address = {Hanover, MA, USA},
}

@Article{Huang_2019,
	author={Huang, Zhiyi and Kim, Anthony},
	title={{Welfare maximization with production costs: A primal dual approach}},
	journal={Games and Economic Behavior},
	year=2019,
	volume={118},
	number={C},
	pages={648-667},
	month={},
}

@article{bo2018,
  title={Eliciting multi-dimensional flexibilities from electric vehicles: A mechanism design approach},
  author={Sun, Bo and Tan, Xiaoqi and Tsang, Danny HK},
  journal={IEEE Transactions on Power Systems},
  volume={34},
  number={5},
  pages={4038--4047},
  year={2018},
  publisher={IEEE}
}

@article{Zhang2017,
	author = {Zhang, Z. and Li, Z. and Wu, C.},
	title = {Optimal Posted Prices for Online Cloud Resource Allocation},
	journal = {Proc. ACM Meas. Anal. Comput. Syst. (SIGMETRICS '17)},
	issue_date = {June 2017},
	volume = {1},
	number = {1},
	month = jun,
	year = {2017},
	issn = {2476-1249},
	acmid = {3084460},
	publisher = {ACM},
	address = {New York, NY, USA},
}

@article{Tan_ORA_2020,
	author = {Tan, Xiaoqi and Sun, Bo and Leon-Garcia, Alberto and Wu, Yuan and Tsang, {\relax D.H.K}},
	title = {Mechanism Design for Online Resource Allocation: A Unified Approach},
	year = {2020},
	issue_date = {June 2020},
	publisher = {Association for Computing Machinery},
	address = {New York, NY, USA},
	volume = {4},
	number = {2},
	journal = {Proc. ACM Meas. Anal. Comput. Syst. (SIGMETRICS '20)},
	month = jun,
	articleno = {24},
	numpages = {46},
}

@inproceedings{XZhang_2015,
	author = {Zhang, X. and Huang, Z. and Wu, C. and Li, Z. and Lau, F.},
	title = {Online Auctions in IaaS Clouds: Welfare and Profit Maximization with Server Costs},
	booktitle = {Proceedings of the 2015 ACM SIGMETRICS International Conference on Measurement and Modeling of Computer Systems (SIGMETRICS '15)},
	year = {2015},
	isbn = {978-1-4503-3486-0},
	location = {Portland, Oregon, USA},
	pages = {3--15},
	acmid = {2745855},
}

@article{Tan2023,
author       = {Xiaoqi Tan and
				Siyuan Yu and
				Raouf Boutaba and
				Alberto Leon{-}Garcia},
title        = {Threshold Policies with Tight Guarantees for Online Selection with
				Convex Costs},
journal      = {WINE},
year         = {2023},
timestamp    = {Tue, 24 Oct 2023 14:46:18 +0200}
}

@inproceedings{Blum_2015_WINE,
  title={Online allocation and pricing with economies of scale},
  author={Blum, Avrim and Mansour, Yishay and Yang, Liu},
  booktitle={Web and Internet Economics: 11th International Conference, WINE 2015, Amsterdam, The Netherlands, December 9-12, 2015, Proceedings 11},
  pages={159--172},
  year={2015},
  organization={Springer}
}

@article{Correa_2019_Survey_PI,
  title={Recent developments in prophet inequalities},
  author={Correa, Jose and Foncea, Patricio and Hoeksma, Ruben and Oosterwijk, Tim and Vredeveld, Tjark},
  journal={ACM SIGecom Exchanges},
  volume={17},
  number={1},
  pages={61--70},
  year={2019},
  publisher={ACM New York, NY, USA}
}

@article{Lucier_2017_Survey_PI,
  title={An economic view of prophet inequalities},
  author={Lucier, Brendan},
  journal={ACM SIGecom Exchanges},
  volume={16},
  number={1},
  pages={24--47},
  year={2017},
  publisher={ACM New York, NY, USA}
}

@article{Huang_2024_Survey,
author = {Huang, Zhiyi and Tang, Zhihao Gavin and Wajc, David},
title = {Online Matching: A Brief Survey},
year = {2024},
issue_date = {June 2024},
publisher = {Association for Computing Machinery},
address = {New York, NY, USA},
volume = {22},
number = {1}, 
journal = {SIGecom Exch.},
month = oct,
pages = {135–158},
numpages = {24},
keywords = {matching, online algorithms, online matching, survey}
}

@inproceedings{jia2023online,
  title={Online resource allocation in Markov Chains},
  author={Jia, Jianhao and Li, Hao and Liu, Kai and Liu, Ziqi and Zhou, Jun and Gravin, Nikolai and Tang, Zhihao Gavin},
  booktitle={Proceedings of the ACM Web Conference 2023},
  pages={3498--3507},
  year={2023}
}

@inproceedings{Barman_2012_secretary_with_costs_ICALP,
  title={Secretary problems with convex costs},
  author={Barman, Siddharth and Umboh, Seeun and Chawla, Shuchi and Malec, David},
  booktitle={International Colloquium on Automata, Languages, and Programming},
  pages={75--87},
  year={2012},
  organization={Springer}
}

@inproceedings{Gupta_2018_ICALP,
  title={Maximizing Profit with Convex Costs in the Random-order Model},
  author={Gupta, Anupam and Mehta, Ruta and Molinaro, Marco},
  booktitle={45th International Colloquium on Automata, Languages, and Programming (ICALP 2018)},
  year={2018}
}

@inproceedings{Sekar_2016_IJCAI,
  title={Posted pricing sans discrimination},
  author={Sekar, Shreyas},
  booktitle={Proceedings of the 26th International Joint Conference on Artificial Intelligence},
  pages={388--394},
  year={2017}
}

@inproceedings{Azar_2016_convex_packing_covering_FOCS,
  title={Online algorithms for covering and packing problems with convex objectives},
  author={Azar, Yossi and Buchbinder, Niv and Chan, TH Hubert and Chen, Shahar and Cohen, Ilan Reuven and Gupta, Anupam and Huang, Zhiyi and Kang, Ning and Nagarajan, Viswanath and Naor, Joseph and others},
  booktitle={2016 IEEE 57th Annual Symposium on Foundations of Computer Science (FOCS)},
  pages={148--157},
  year={2016},
  organization={IEEE}
}

@inproceedings{sun2024static,
      title={Static Pricing for Online Selection Problem and its Variants}, 
      author={Bo Sun and Hossein Nekouyan Jazi and Xiaoqi Tan and Raouf Boutaba},
      year={2024},
     organization={Springer},
    booktitle = {The 20th Conference on Web and Internet Economics, WINE 2024, 
Edinburgh, United Kingdom Proceedings ,December 2-5, 2024}, 
}

\appendix

\section{Proof of Lemma \ref{lem:continuity_of_psi}}
\label{apx:lemma-continuitiy}

The monotonicity of each function $\psi_i$ follows from the fact that if the random variable $\Psi_i(v)$ is realized to be equal to one for some $v$, then for all $v' > v$, $\Psi_i(v')$ must also be equal to one (based on the definition of $\Psi_i$). Since $\psi_i(v)$ and $\psi_i(v')$ represent the expected values of $\Psi_i(v)$ and $\Psi_i(v')$, respectively, this property ensures that each $\psi_i$ is increasing. Furthermore, by the definition of the state variables $q_{i}$ and $q_{i+1}$, whenever $\Psi_{i+1}(v) = 1$, the allocation must have reached at least $i+1$ units, which implies $\Psi_i(v) = 1$. Consequently, it follows that $ \psi_{i}(v) \ge \psi_{i+1}(v) $.

\section{Proof of Proposition \ref{lemma:lower-bound-algorithm-performance}}
\label{appendix:lemma:proof-lower-bound-algorithm-performance}
For any randomized algorithm \ALG, let $D(L)$ denote the number of units that  \ALG allocates under the instance $ \mathcal{I}^{(\epsilon)}_{L} $ (i.e., the instance $\mathcal{I}^{(\epsilon)}$ by the end of stage-$L$). Thus, $D(L)$ is a random variable taking values from $0$ to $k$. Based on definition of $D(L)$, $\ALG(\mathcal{I}^{(\epsilon)}_{L})$ can be computed as follows:
\begin{align*}
    \ALG\left(\mathcal{I}^{(\epsilon)}_{L}\right) = \mathbb{E} \left[D(L) \cdot L - \sum_{i=1}^{D(L)} c_i \right],
\end{align*}
where the expectation is taken with respect to the randomness of $D(L)$ (the distribution depends on the randomness of the algorithm \ALG). 
Let the indicator function $\mathds{1}_{\{D(L)=j\}} = 1 $  if \ALG allocates exactly $j$ units at the end of stage-$L$, and $\mathds{1}_{\{D(L)=j\}} = 0 $ otherwise.
Based on definition of the random variables ${\{\Psi_i(L)\}}_{\forall i \in [k]}$, we argue that:
\begin{align}
\label{k-selection-lb2}
    {\mathds{1}}_{\{D(L)=j\}} = \Psi_{j}(L) - \Psi_{j+1}(L), \qquad 1 \leq j \leq k.
\end{align}
Here,  $\Psi_{k+1}(L) = 0 $ always holds.  To see why Eq. \eqref{k-selection-lb2} is true, consider the case where the random variable $D(L) = j $, then:
\begin{align*}
    \Psi_i(L) = 1 , \quad \forall i \leq j,\\
    \Psi_i(L) = 0,  \quad \forall i > j.
\end{align*}
From the equation above, we can observe that when the indicator function $\mathds{1}_{\{D(L)=j\}} =  1 $, $\Psi_{j+1}(L) - \Psi_{j}(L) = 1 $ holds. For the case when $\mathds{1}_{\{D(L)=j\}} = 0  $, if $D(L) < j$, then $\Psi_{j}(L) = \Psi_{j+1}(L) = 0$ and $ {\mathds{1}}_{\{D(L)=j\}} = \Psi_{j}(L) - \Psi_{j+1}(L)$ follows. For the case $D(L)>j$, the two equations   $\Psi_{j}(L) = \Psi_{j+1}(L) = 1$ and $ {\mathds{1}}_{\{D(L)=j\}} = \Psi_{j}(L) - \Psi_{j+1}(L)$ again follow.
As a result, $\ALG(\mathcal{I}^{(\epsilon)}_{L})$ can be computed as follows:
\begin{align*}
\ALG\left(\mathcal{I}^{(\epsilon)}_{L}\right) &= \mathbb{E} \left[D(L) \cdot L - \sum_{i=1}^{D(L)} c_i \right] \\
& = \sum_{j=1}^{k} \mathbb{E} \left[ \mathds{1}_{\{D(L) = j\}} \right] \cdot \left(j \cdot L - \sum_{i=1}^{j} c_i \right) \\
& = \sum_{j=1}^{k} \mathbb{E} \left[\Psi_{j}(L) - \Psi_{j+1}(L) \right] \cdot \left(j \cdot L - \sum_{i=1}^{j} c_i \right)\\
& =  \sum_{j=1}^{k} \big(\psi_{j}(L) - \psi_{j+1}(L)\big) \cdot \left(j \cdot L - \sum_{i=1}^{j} c_i \right)\\
& = \sum_{j=1}^{k} \psi_{j}(L) \cdot (L - c_{j}) - \psi_{k+1}(L) \cdot \left(k\cdot L - \sum_{i=1}^{k} c_{i} \right)\\
& = \sum_{j=1}^{k} \psi_{j}(L) \cdot (L - c_{j}).
\end{align*}

Now, let us compute the objective of the $\alpha$-competitive algorithm at the end of stage-$v$, $ \forall v\in V^{(\epsilon)}$, such that $v = L + m \cdot \epsilon$. Let the random variable $X_{i}(v)$ be the value obtained from allocating the $i$-th unit of the item at the end of some stage-$v \in V^{(\epsilon)}$. It follows that
\begin{align*}
    \mathbb{E}[X_{i}(v)-c_{i}] =\ & \psi_{i}(L)\cdot(L-c_{i}) +  \mathbb{E}\left[\sum_{j=1}^{m} (L + j \cdot \epsilon - c_{i}) \cdot \Big(\Psi_{i}(L + j \cdot \epsilon) - \Psi_{i}(L + (j-1) \cdot \epsilon) \Big) \right] \\
    =\ &  \psi_{i}(L)\cdot(L-c_{i}) + \sum_{j=1}^{m} (L + j \cdot \epsilon - c_{i}) \cdot \mathbb{E}\left[\Psi_{i}(L + j \cdot \epsilon) - \Psi_{i}(L + (j-1) \cdot \epsilon) \right] \\
    =\ & \psi_{i}(L)\cdot(L-c_{i}) + \sum_{j=1}^{m} (L + j \cdot \epsilon - c_{i}) \cdot \left ( \psi_{i}(L + j \cdot \epsilon) - \psi_{i}(L + (j-1) \cdot \epsilon) \right).
\end{align*}
where the first equality follows because if the $i$-th unit is allocated at some stage $L + j \cdot \epsilon$, then the algorithm must have sold at least $i$ units of the item by the end of $L + j \cdot \epsilon$, leading to $\Psi_i(L + j \cdot \epsilon) = 1 $. Additionally, if the $i$-th unit is allocated at stage $L + j \cdot \epsilon$, then at stage $L + (j-1) \cdot \epsilon $, the algorithm must have allocated fewer than $i$ units, indicating that $\Psi_i(L + (j-1) \cdot \epsilon) = 0 $.
Putting together the above results, it follows that:
\begin{align*}
      & \ALG\left(\mathcal{I}^{(\epsilon)}_{v}\right) \\
    =\ &  \sum_{i=1}^{k}  \mathbb{E}[X_{i}(v)-c_{i}]  \\
    =\ &  \sum_{i=1}^{k} \Big[\psi_{i}(L)\cdot(L-c_{i})+ \sum_{j=1}^{m} (L + j \cdot \epsilon - c_{i}) \cdot \Big(\psi_{i}(L + j \cdot \epsilon) - \psi_{i}(L + (j-1) \cdot \epsilon)\Big) \Big] , \\
    =\ &  \ALG\left(\mathcal{I}^{(\epsilon)}_{L}\right) +  \sum_{i=1}^{k}  \sum_{j=1}^{m} (L + j \cdot \epsilon - c_{i}) \cdot \Big( \psi_{i}(L + j \cdot \epsilon) - \psi_{i}(L + (j-1) \cdot \epsilon) \Big), \quad \forall m \in \left\{1,\dots, \lfloor \frac{U-L}{\epsilon} \rfloor \right\}.
\end{align*}
Proposition  \ref{lemma:lower-bound-algorithm-performance} thus follows.

\section{Proof of Lemma \ref{lemma:nec-cond}}
\label{apx:lb-system-ode}
Based on Proposition \ref{lemma:lower-bound-algorithm-performance}, for any online algorithm \ALG, we have: 
\begin{align*}
&\ALG \left(\mathcal{I}_{L}^{(\epsilon)}\right) = \sum_{i=1}^{k} \psi_i^{(L)} \cdot (L - c_i), \\
&\ALG\left(\mathcal{I}_{L+ j\cdot \epsilon}^{(\epsilon)}\right) =  \ALG\left(\mathcal{I}_{L}^{(\epsilon)}\right) +  \sum_{i=1}^{k} \sum_{m=1}^{j} \bigg((L + m \cdot \epsilon) \cdot  \Big(\psi_i(L + m \cdot \epsilon) \\
& \hspace{+4cm} - \psi_i(L + m \cdot \epsilon - \epsilon) \Big)  \bigg), \forall j = 1, 2, \ldots, \left\lfloor \frac{U - L}{\epsilon} \right\rfloor.
\end{align*}
As $\epsilon \rightarrow 0$, following the Riemann summation, it follows that:
\begin{align*}
    \ALG\left(\mathcal{I}_{v}^{(\epsilon)}\right) 
    = \ALG\left(\mathcal{I}_{L}^{(\epsilon)}\right) + \sum_{i=1}^{k} \int_{\eta = L}^{v} (\eta-c_i) \cdot \Big[ \psi_i(\eta) - \psi_i(\eta - d\eta) \Big], \forall v \in [L,U].
\end{align*}
Based on above, the set of functions $\{\psi_i\}_{i \in [k]}$ should be defined over the range $[L,U]$. 

In the next step, we prove that the set of functions  $\{\psi_i\}_{i \in [k]}$ exists such that these set of functions are continous within their range $[L,U]$. For now, let us assume this claim holds. Then, it follows that :
\begin{align*}
    \ALG\left(\mathcal{I}_{v}^{(\epsilon)}\right) &= \ALG\left(\mathcal{I}_{L}^{(\epsilon)}\right) + \sum_{i=1}^{k} \int_{\eta = L}^{v} (\eta-c_i) \cdot \left [ \psi_i(\eta) - \psi_i(\eta - d\eta) \right ]\\
    & = \ALG\left(\mathcal{I}_{L}^{(\epsilon)}\right) + \sum_{i=1}^{k} \int_{\eta = L}^{v} (\eta-c_i) \cdot  d\psi(\eta).
\end{align*}
 Following from Eq.~\eqref{lower-bound-system-kselection-cost}, if there exists an $\alpha$-competitive algorithm, then there should exists a set of functions  $\{\psi_i\}_{i \in [k]}$ such that:
\begin{align*}
 \ALG\left(\mathcal{I}_{v}^{(\epsilon)}\right) & = \ALG\left(\mathcal{I}_{L}^{(\epsilon)}\right) + \sum_{i=1}^{k} \int_{\eta = L}^{v} (\eta-c_i) \cdot  d\psi(\eta) \\
 & \ge\ \frac{1}{\alpha} \cdot \left(k v - \sum\nolimits_{i=1}^k c_{i}\right), \qquad \forall v \in [L,U].
\end{align*}
Now, let us get back to prove that a set of functions $\{\psi_i\}_{i \in [k]}$ exists corresponding to some online algorithm, that all these functions are continuous within the range $[L,U]$.
Let $\ALG$ be an $\alpha$-competitive algorithm. For some $v \in (L,U)$ and $i \in [k]$, let the function $\psi_{i}(.)$ corresponding to $\ALG$ be non-continuous at $v$. 
Let $\lim_{x \rightarrow v^{-}} \psi_{i}(v) = \nu$ and $\psi_{i}(v) = \lim_{x \rightarrow v^{+}} \psi_{i}(v) = \nu + \delta$, for some $\delta > 0$.  Then the algorithm must be selling at least in expectation a $\delta$-fraction of the $i$-th unit to the buyers with valuation $v$ in instance $\mathcal{I}$. 
Conversely, for $\ALG$ to be $\alpha$-competitive, the expected objective of the algorithm before the arrival of buyers with valuation $v$, $\ALG(\mathcal{I}_{v^{-}}^{(\epsilon)})$, must be
at least equal to $\frac{1}{\alpha} \cdot \OPT(\mathcal{I}_{v^{-}}^{(\epsilon)}) = \frac{1}{\alpha} \cdot \OPT(\mathcal{I}_{v}^{(\epsilon)})$, where $ \OPT(\mathcal{I}_{v}^{(\epsilon)}) $ denotes the objective value of the offline optimal algorithm on the hard instance $ \mathcal{I}^{(\epsilon)} $ up to the end of stage-$v$. 
It can be seen that selling in expectation at least a $\delta $ fraction of the $i$-th unit is unnecessary and $\ALG$ could save this fraction of the unit and sell it to buyers with higher valuations. In other words, 
we can construct another online algorithm, say $\widehat{\ALG}$, that follows $\ALG$ up to the arrival of buyers with valuation  $v$, 
but sells the $\delta$-fraction of the $i$-th unit to buyers with valuation strictly greater than $v$ instead. It is easy to see that  $\widehat{\ALG}$ will obtain a better objective value with its $ \hat{\psi}_{i}$ being continuous at $v$. Lemma \ref{lem:continuity_of_psi} follows by repeating the same process for any other discontinuous point of $ \psi_i(v) $.

\section{Proof of Lemma \ref{lemma:lb:tightness}}
\label{appendix:lemma:lb:tightness}
For any $v \in [L,U]$, let us define $C_v$ as follows:
    \begin{align*}
        C_{v} &= C_{L} + \sum_{i=1}^{k} \int_{\eta =L}^{v} (\eta - c_{i} )d\psi^{\alpha}_{i}(\eta), \quad \forall v\in (L,U],\\
        C_{L} &= \sum_{i=1}^{k} \psi^{\alpha}_{i}(L) \cdot (L - c_{i}).
    \end{align*}
To prove Lemma~\ref{lemma:lb:tightness}, we need to first  prove the feasibility of ${\{\psi^{(\alpha)}_{i}\}}_{\forall i \in [k]}$, namely, $C_{v}$ is greater than $\frac{1}{\alpha} \cdot (k \cdot v - \sum_{i} c_{i})$ for all $v  \in [L,U]$. 
     
For some $v \in [L,U]$, based on the definition of $\chi^{\alpha}(v)$ in Eq. \eqref{lower-bound-proof-define-chi-function}, there exist a set of functions ${\{\psi_{i}(v)\}}_{\forall i \in [k]}$ that satisfy Eq.~\eqref{eq:lb-system-ineq} and in the meanwhile, for some arbitrary small value $\epsilon$, we have:
\begin{align}
    \label{inequality-lower-bound}
    \chi^{\alpha}(v) + \epsilon \ge \sum_{i=1}^{k} \psi_{i}(v).
\end{align}
Next, using integration by parts, we have
\begin{align*}
    C_{v}
    =\ & C_{L} + \sum_{i=1}^{k} \int_{\eta =L}^{v} (\eta - c_{i} )d\psi^{\alpha}_{i}(\eta) \\
    =\ & C_{L} + \sum_{i=1}^{k} \psi^{\alpha}_{i}(v) \cdot (v - c_{i}) - \sum_{i=1}^{k} \psi^{\alpha}_{i}(L) \cdot (L - c_{i}) - \int_{\eta =L}^{v} \left( \sum_{i=1}^{k}  \psi^{\alpha}_{i}(\eta) \right) d\eta \\
    =\ &  \sum_{i=1}^{k} \psi^{\alpha}_{i}(v) \cdot (v - c_{i})  - \int_{\eta =L}^{v} \left( \sum_{i=1}^{k}  \psi^{\alpha}_{i}(\eta) \right) d\eta \\
    =\ & v \cdot \left(\sum_{i=1}^{k} \psi^{\alpha}_{i}(p)\right) - \sum_{i=1}^{k} \psi^{\alpha}_{i}(v) \cdot  c_{i}  - \int_{\eta =L}^{v} \left( \sum_{i=1}^{k}  \psi^{\alpha}_{i}(\eta) \right)  d\eta \\
    =\ & v \cdot \chi^{\alpha}(v) -\sum_{i=1}^{k} \psi^{\alpha}_{i}(v) \cdot  c_{i}  - \int_{\eta =L}^{v} \chi^{\alpha}(v) d\eta,
\end{align*}
where the last equality follows the definition of $\{\psi^{\alpha}_{i}\}_{\forall i \in [k]}$ in Eq. \eqref{lower-bound-optimal-functions-design}. Thus, we have
\begin{align}
    C_{v} =\ &  v \cdot \chi^{\alpha}(v) -\sum_{i=1}^{k} \psi^{\alpha}_{i}(v) \cdot  c_{i}  - \int_{\eta =L}^{v} \chi^{\alpha}(v) \cdot d\eta, \nonumber \\
   \ge\ & v \cdot \sum_{i=1}^{k} \psi_{i}(v) - v \cdot \epsilon -\sum_{i=1}^{k} \psi^{\alpha}_{i}(v) \cdot  c_{i}  - \int_{\eta =L}^{v} \chi^{\alpha}(v) \cdot d\eta , \nonumber \\
   \ge\ & v \cdot \sum_{i=1}^{k} \psi_{i}(v) -  \int_{\eta =L}^{v}  \left(\sum_{i=1}^{k} \psi_{i}(\eta)\right)  \cdot d\eta - \sum_{i=1}^{k} \psi^{\alpha}_{i}(v) \cdot  c_{i}  - v \cdot \epsilon, \label{lower-bound-inequality2}
\end{align}
where the first inequality follows Eq. \eqref{inequality-lower-bound} and the second inequality directly follows the definition of $\chi^{\alpha}(v)$ (recall that $\chi^{\alpha}(v) \leq \sum_{i=1}^{k} \psi_{i}(v) $ holds for all $v \in [L,U]$).

By the definition of ${\{\psi^{\alpha}_{i}\}}_{\forall i \in [k]}$, we have $\sum_{i \in [k]} \psi^{\alpha}_{i}(v) = \chi^{\alpha}(v) $. Putting together the inequality $\chi^{\alpha}(v)\leq \sum_{i=1}^{k} \psi_{i}(v)$ and the fact that productions costs are increasing, we have
\begin{align*}
    \sum_{i=1}^{k} \ \psi^{\alpha}_{i}(v) \cdot  c_{i} \leq \sum_{i=1}^{k} \ \psi_{i}(v) \cdot  c_{i}.
\end{align*}
Putting together the above inequality and the right-hand-side of Eq. \eqref{lower-bound-inequality2}, it follows that:
\begin{align*}
    C_{v} & \ge p \cdot \sum_{i=0}^{k-1} \psi_{i}(v) - \sum_{i=0}^{k-1} \int_{\eta =L}^{v}  \psi_{i}(\eta)  \cdot d\eta -  \sum_{i=0}^{k-1} \psi_{i}(v) \cdot  c_{i+1}  - v \cdot \epsilon  \\
    &  \ge \ALG\left(\mathcal{I}^{(\epsilon)}_{v}\right) - v \cdot \epsilon,
\end{align*}
where $\ALG$ is the online algorithm corresponding to the set of allcation functions $\{\psi_{i}\}_{\forall i \in [k]}$ and recall that $\ALG(\mathcal{I}^{(\epsilon)}_{v})$ is defined as follows:
\begin{align*}
& \ALG\left(\mathcal{I}^{(\epsilon)}_{L}\right) = \sum_{i=1}^{k} \psi_{i}(L) \cdot (L - c_{i}), \\
& \ALG\left(\mathcal{I}^{(\epsilon)}_{v}\right) =  \ALG\left(\mathcal{I}^{(\epsilon)}_{L}\right) + \sum_{i=1}^{k} \int_{\eta =L}^{v} (\eta - c_{i} )d\psi_{i}(\eta),\ \  \forall v \in [L,U].
\end{align*}
Since ${\{\psi_{i}\}}_{\forall i \in [k]}$ satisfy Eq. \eqref{eq:lb-system-ineq}, it follows that
\begin{align*}
    C_{v}  \ge\ & \ALG\left(\mathcal{I}^{(\epsilon)}_{v}\right) - v \cdot \epsilon \\
    \ge\ & \frac{1}{\alpha} \cdot \left(k \cdot v - \sum_{i=1}^{k} c_{i}\right) - v \cdot \epsilon,  \quad \forall v \in [L,U]. 
\end{align*}
By setting $\epsilon \rightarrow 0 $, it follows that
\begin{align*}
    C_{v}  \ge \frac{1}{\alpha} \cdot \left(k \cdot v - \sum_{i=1}^{k} c_{i} \right), \quad \forall v \in [L,U]. 
\end{align*}

To complete the proof of Lemma~\ref{lemma:lb:tightness}, we also need to prove that the above inequality holds as an equality for the set of functions $\{\psi^{\alpha}_{i}\}_{\forall i \in [k]}$. This can be proved by contradiction. Suppose that at some point $v \in [L,U]$, the above equality does not hold, then there must exist another set of feasible functions, say ${\{\hat{\psi}_{i}\}}_{\forall i \in [k]}$, induced by a new algorithm, say $\widehat{\ALG}$, that satisfy Eq. \eqref{eq:lb-system-ineq} and 
\begin{align*}
    \sum_{i=1}^{k} \hat{\psi}_{i}(v)   < \sum_{i=1}^{k} \psi^{\alpha}_{i}(v) .
\end{align*}
We argue that the new set of functions $\{\hat{\psi}_{i}\}_{\forall i \in [k]}$ will allocate a smaller fraction of its total units to buyers in $\mathcal{I}^{(\epsilon)}$ arriving at or before stage-$v$ compared to $\{\psi^{\alpha}_{i}\}_{\forall i \in [k]}$. However, by still following the allocation functions $\{\psi^{\alpha}_{i}\}_{\forall i \in [k]}$, $\widehat{\ALG}(\mathcal{I}^{(\epsilon)}_{v})$ will be exactly equal to $\frac{1}{\alpha} (k \cdot v - \sum_{i=1}^{k} c_{i} )$. Given the definition of $\{\psi^{\alpha}_{i}\}_{\forall i \in [k]}$, we have $\sum_{i=1}^{k} \psi^{\alpha}_{i}(v) = \chi^{\alpha}(v)$, meaning that $\sum_{i=1}^{k} \hat{\psi}_{i}(v) < \chi^{\alpha}(v)$. However, this  contradicts the definition of $\chi^{\alpha}(v)$.
We thus complete the proof of Lemma~\ref{lemma:lb:tightness}.

\section{Proof of Proposition \ref{prop:lower-bound-psi-star-design} }
\label{appendix:lower-bound-proof-lemma-function-design-psi-star}
From Lemma \ref{lemma:lb:tightness}, we know that ${\{\psi^{\alpha}_{i}(v)\}}_{\forall i \in [k]} $ satisfy Eq.~\eqref{eq:lb-system-ineq} with an equality. Therefore, the set of allocation functions ${\{\psi^{\alpha}_{i}(v)\}}_{\forall i \in [k]}$ is a solution to the following system of equations: 
\begin{align}
    & \sum_{i=1}^{k} \psi^{\alpha}_{i}(L) \cdot (L - c_{i}) +    \sum_{i=1}^{k} \int_{\eta =L}^{v} (\eta - c_{i} )d\psi^{\alpha}_{i}(\eta)  \nonumber \\
    =\ & \frac{1}{\alpha} \cdot (k\cdot v - \sum_{i} c_{i}), \quad   
    \forall i \in [k], v \in [L,U].  \label{appendix-lower-bound-system-of-eq}
\end{align}
A‌lso, based on Lemma~\ref{property-1}, we argue that if the value of the function $\psi^{\alpha}_{i}(v)$  is changing at some value $v \in [L,U]  $ (i.e., $d\psi^{\alpha}_{i}(v) \not = 0$), then the value of all the functions $\{\psi^{*}_{j}(v)\}_{\forall j \in [i-1]}$ are equal to one, and all the functions in the set $\{\psi^{*}_{j}(v)\}_{j > i}$ are equal to zero. Based on this property, we can assign an interval $[\ell_{i},u_{i}]$ to each $\psi^{\alpha}_{i}(v)$. In the interval of $[\ell_{i},u_{i}]$, only the value of $\psi^{\alpha}_{i}$ changes while the other functions $\{\psi^{\alpha}_{j}\}_{\forall j \not =  i}$ in that interval are fixed to be one or zero. Additionally, the following relation exists between the start and end points of these intervals:
\begin{align*}
    L = \ell_{1} \leq u_{1}=\ell_{2} \leq u_{2}\leq \dots \leq \ell_{k} \leq u_{k} = U.
\end{align*}
To satisfy the equality $\sum_{i \in [k]} \psi^{\alpha}_{i}(L) \cdot (L-c_{i}) = \frac{1}{\alpha} \cdot (k\cdot L - \sum_{i} c_{i}) $, the set of functions $\{\psi^{\alpha}_{i}(v)\}_{\forall i \in [\ubar{k}-1]}$ should be equal to one at the point $v=L$. Thus, the explicit design of the functions $\{\psi^{\alpha}_{i}\}_{\forall i \in [\ubar{k}-1]}$ is as follows:
\begin{align*}
\psi^{(\alpha)}_{i}(v) = 1, \quad    i = 1, \dots, \ubar{k}-1.
\end{align*}

In the case that $\sum_{i\in[\ubar{k}]} L - c_{i} < \frac{1}{\alpha} \cdot (k\cdot L - \sum_{i} c_{i})$, to satisfy $\sum_{i \in [k]} \psi^{\alpha}_{i}(L) \cdot (L-c_{i}) = \frac{1}{\alpha} \cdot (k\cdot L - \sum_{i} c_{i}) $, we need to have:
\begin{align*}
        &\psi^{\alpha}_{\ubar{k}}(L) = \frac{\sum_{i \in [\ubar{k}-1]} (L-c_i) - \frac{1}{\alpha}\cdot \sum_{i \in [k]}(L-c_{i})}{L-c_{\ubar{k}}} =  \xi.
\end{align*}
Since for all $v \in [\ell_{\ubar{k}},u_{\ubar{k}}]$ with $ \ell_{\ubar{k}} = L $, only the value of  $\psi^{\alpha}_{\ubar{k}}(v)$ changes (i.e., $d\psi^{\alpha}_{i}(v) = 0$ for all $i \not = \ubar{k} $), it follows that:
\begin{align*}
       &\sum_{i=1}^{k} \psi^{\alpha}_{i}(L) \cdot (L - c_{i}) + \sum_{i=1}^{k} \int_{\eta =L}^{v} (\eta - c_{i} )d\psi^{\alpha}_{i}(\eta) \\
    =\ & \sum_{i=1}^{k}  \psi^{\alpha}_{i}(L) \cdot (L - c_{i}) + \int_{\eta =L}^{v} (\eta - c_{\ubar{k}} )d\psi^{*}_{\ubar{k}}(\eta), \ \forall v \in [L,u_{\ubar{k}}].
\end{align*}
Based on the system of  equations in Eq. \eqref{appendix-lower-bound-system-of-eq}, we need to have:
\begin{align*}
  \sum_{i=1}^{k}  \psi^{\alpha}_{i}(L) \cdot (L - c_{i}) + \int_{\eta =L}^{v} (\eta - c_{\ubar{k}} )d\psi^{*}_{\ubar{k}}(\eta)
 = \frac{1}{\alpha} \cdot (k\cdot v - \sum_{i} c_{i}), \qquad \forall v \in [\ell_{\ubar{k}},u_{\ubar{k}}].
\end{align*}
Taking derivative w.r.t. $ v $ from both sides of the equation above, we have
\begin{align*}
    (v - c_{\ubar{k}}) \cdot d\psi^{*}_{\ubar{k}}(v) = \frac{k}{\alpha}.
\end{align*}
Solving the above differential equation leads to 
\begin{align*}
\psi^{*}_{\ubar{k}}(v) = \frac{k}{\alpha} \cdot \ln(v-c_{\ubar{k}}) + Q,  \quad \forall v \in [\ell_{\ubar{k}},u_{\ubar{k}}],
\end{align*}
where $ Q $ is a constant. To find $ Q $, since $\psi^{*}_{\ubar{k}}(L) =\xi$, it follows that $ Q = \xi - \frac{k}{\alpha} \cdot \ln(L-c_{\ubar{k}})$. As a result, the explicit design of the function $\psi^{\alpha}_{\ubar{k}}$ is as follows:
\begin{align*}
    \psi^{(\alpha)}_{\ubar{k}}(v) =
    \begin{cases}
         \xi + \frac{k}{\alpha} \cdot \ln\left(\frac{v - c_{\ubar{k}}}{L - c_{\ubar{k}}}\right) &  v \in [L,u_{\ubar{k}}], \\
         1 & v > u_{\ubar{k}}.
         \end{cases}
\end{align*}
To obtain the value of $u_{\ubar{k}}$, we set $\psi^{*}_{\ubar{k}}(u_{\ubar{k}}) =1$ (the function $\psi^{*}_{\ubar{k}}$ reaches its maximum). Consequently, it follows that:
\begin{align*}
u_{\ubar{k}} = (L-c_{\ubar{k}}) \cdot e^{\frac{\alpha}{k}\cdot(1-\xi)} + c_{\ubar{k}}.
\end{align*}
Using the same procedure as what has been applied to $\psi^{\alpha}_{\ubar{k}}$, for all the other functions $\{\psi^{\alpha}_{i}(v)\}_{\forall i > \ubar{k}}$, we have
\begin{align*}
    &\sum_{j=1}^{k} \psi^{*}_{j}(L) \cdot (L - c_{j}) + \sum_{j=1}^{k} \int_{\eta =L}^{v} (\eta - c_{j} )d\psi^{\alpha}_{i}(\eta) \\
    =\ & \sum_{j=1}^{k}  \psi^{*}_{j}(L) \cdot (L - c_{j}) + \int_{\eta =L}^{v} (\eta - c_{i} )d\psi^{\alpha}_{i}(\eta) ,  \quad \forall v \in [\ell_{i},u_{i}].
\end{align*}
Taking derivative w.r.t. $ v $ from both sides of the equation above, it follows that:
\begin{align*}
    (v - c_{i}) \cdot d\psi^{\alpha}_{i}(v) = \frac{k}{\alpha}.
    \end{align*}
Solving the above differential equation leads to
\begin{align*}
\psi^{\alpha}_{i}(v) = \frac{k}{\alpha} \cdot \ln(v-c_{i}) + \hat{Q}, \quad \forall v \in [\ell_{i},u_{i}].
\end{align*}
Since $\psi^{*}(\ell_{i}) = 0$, we have $ \hat{Q} = - \ln(\ell_{i}-c_{i})$. The explicit design of the function $\psi^{\alpha}_{i}$ is thus as follows:
\begin{align*}
\psi^{(\alpha)}_{i}(v) = \begin{cases} 0 & v \leq \ell_{i}, \\
        \frac{k}{\alpha} \cdot \ln\left(\frac{v - c_{i}}{\ell_{i} - c_{i}}\right) &  v \in [\ell_{i},u_{i}], \\
        1 & v \ge u_{i}.
        \end{cases} \quad  i = \ubar{k}+1,\dots, k-1
\end{align*}
For the function $\psi_{k}^{\alpha}$, since it is the last function, it follows that:
\begin{align*}
    \psi^{(\alpha)}_{k}(v) = \begin{cases} 0 & v \leq \ell_{k}, \\
        \frac{k}{\alpha} \cdot \ln\left(\frac{v - c_{k}}{\ell_{k} - c_{k}}\right) &  v \in [\ell_{k},U].
        \end{cases}
\end{align*}
By setting $\psi^{*}(u_{i}) = 1$, it follows that:
\begin{align*}
u_{i} = (\ell_{i}-c_{i}) \cdot e^{\frac{\alpha}{k}} + c_{i}, \quad   \ubar{k}+1 \leq i \leq k.
\end{align*}
Putting everything together, Proposition \ref{prop:lower-bound-psi-star-design} follows.

\section{Full Proof of Theorem \ref{upper-bound-large-inventory-cr}}
\label{appendix:proof-upper-bound-large-inventory-cr}
In this section, we provide a complete proof of Theorem 5. We begin by introducing several important notations and lemmas. Then, we break the problem into two independent subproblems based on the buyers' valuations in some arbitrary arrival instance $\mathcal{I}$. For each case, we proceed to show how to upper bound $\OPT(\mathcal{I})$, the objective of the optimal offline algorithm on $\mathcal{I}$. We then proceed to lower bound the expected performance of \rDynamic on that instance, $\ALG(\mathcal{I})$. Ultimately, we combine everything and obtain a performance guarantee for \rDynamic under all adversarially chosen instances of \OSDoS for that subproblem.

\subsection{Notations and Definitions}
Consider an arbitrary arrival instance  $\mathcal{I} = \{v_{t}\}_{t \in [T]}$. Recall that the random price vector $\mathbf{P} = \{P_1, \cdots, P_k\} $ is generated using the pricing functions $\{\phi_{i}\}_{\forall i \in [k]}$ at the beginning of \rDynamic (line \ref{line_P_vector} of Algorithm \ref{alg:kselection-cost}). In the following, we will refer to Algorithm \ref{alg:kselection-cost}  as $\rDynamic(\mathbf{P})$ to indicate that the algorithm is executed with the random price vector being realized as $ \mathbf{P}$. Based on the design of  $\{\phi_{i}\}_{\forall i \in [k]}$ in Theorem \ref{upper-bound-large-inventory-cr}, the first $ k^* - 1 $ prices in $ \mathbf{P} $ are all $ L $'s (i.e., $ P_1 = \cdots = P_{k^* - 1} =  L $), the $ k^*$-th price $ P_{k^*} $ is a random variable within $ [L, U_{k^*}]$, and for all $ i \in \{k^*+1, \cdots, k]$, we have $ P_i  \in [L_i, U_i] $ (recall that $ P_i $ is also a random variable). Here, the values of $\ubar{k}^{*} $ and $ \{[L_i,U_i]\}_{\forall i} $ are all defined in Theorem \ref{upper-bound-large-inventory-cr}. 

Let $\mathcal{P}$ denote the support of all possible values of the random price vector $\mathbf{P}$:
\begin{align*}
    \mathcal{P} = \{L\}^{\ubar{k}^{*}-1} \times [L,U_{\ubar{k}^{*}}] \times \prod_{i \in\{\ubar{k}^{*}+1,\cdots, k\}} [L_{i}, U_{i}].
\end{align*}
Given a price vector realization $\mathbf{P} \in \mathcal{P}$, let $W(\mathbf{P})$ represent the total number of items allocated by $\rDynamic(\mathbf{P})$ under the input instance $\mathcal{I}$. Since $\mathbf{P}$ is a random variable, $W(\mathbf{P})$ is also a random variable. For clarity, we will sometimes omit the price vector and refer to it simply as $W$ whenever the context is clear.

Let $\omega$ denote the maximum value in the support of the random variable $ W $ (i.e., $ \omega $ is the maximum possible value of $ W(\mathbf{P}) $ for all $ \mathbf{P}\in\mathcal{P}$). Thus, $\omega$ is a deterministic value that depends only on the input instance $\mathcal{I}$. Furthermore, let $\boldsymbol{\pi} \in \mathcal{P}$ be a price vector such that $\rDynamic(\boldsymbol{\pi})$ allocates the $\omega$-th item earlier than any other price vector in the set $\mathcal{P}$. That is, for all $ \mathbf{P}\in \mathcal{P}$, $\rDynamic(\mathbf{P})$ allocates the $\omega$-th item no earlier than that of $\rDynamic(\boldsymbol{\pi})$. 

Let us define the set $\{(\nu_{i},\tau_{i})\}_{\forall i\in [\omega]}$ so that $\tau_{i}$ is the arrival time of the buyer in the instance $ \mathcal{I} $ to whom $\rDynamic(\boldsymbol{\pi})$ allocates the $i$-th unit and $\nu_i$ is its valuation. Note that for all $ i \in \{1, \cdots, \omega\} $,  $ \tau_i $ and $\nu_i$ are deterministic values once $ \boldsymbol{\pi}$ and $ \mathcal{I} $ are given.  

We can derive the following inequality regarding $\nu_{i}$:
\begin{align}
\label{eq:upper-bound-large-inventory-proof-1}
\nu_{i} \ge L_{i}, \quad \forall i \in [\omega],
\end{align}
where $L_{i}$ is the lower bound for the range of the pricing function $\phi_{i}$, used to generate the random price for the $i$-th unit. This inequality holds since the buyer arriving at time $\tau_{i}$ accepts the price posted for the $i$-th unit by \rDynamic. The price for the $i$-th unit is at least equal to $L_{i}$  based on the design of the pricing functions $\phi_{i}$.

Let the random variable $W^{\tau_{\omega}}(\mathbf{P}) $ denote the total number of items allocated by $\rDynamic(\mathbf{P})$ after the arrival of buyer $\tau_{\omega}$ in the instance $\mathcal{I}$. The lemma below shows that the random variable $ W^{\tau_{\omega}}(\mathbf{P}) $ is always lower bounded by $ \omega - 1$.

\begin{lemma}\label{appendix:prop:omega-1}
Given an arbitrary instance $ \mathcal{I} $, $ W^{\tau_{\omega}}(\mathbf{P}) \geq \omega - 1$ holds for all $ \boldsymbol{P} \in \mathcal{P} $.
\end{lemma}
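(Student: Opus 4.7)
The plan is to establish by induction on $i$ the stronger claim that $W^{\tau_i}(\mathbf{P}) \geq i-1$ for every $i \in [\omega]$ and every $\mathbf{P} \in \mathcal{P}$; setting $i = \omega$ then yields the lemma. The overall strategy is to fix an arbitrary $\mathbf{P} \in \mathcal{P}$ and compare the run of $\rDynamic(\mathbf{P})$ on the fixed instance $\mathcal{I}$ against the reference run $\rDynamic(\boldsymbol{\pi})$, leveraging the interleaved structure of the price intervals $L = L_1 \leq U_1 = L_2 \leq \cdots \leq U_k = U$ to show that whenever $\rDynamic(\mathbf{P})$ is ``behind'' the reference run at some $\tau_i$, the buyer at $\tau_i$ is forced to accept.

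Concretely, let $W^t(\mathbf{P})$ denote the number of units allocated by $\rDynamic(\mathbf{P})$ through the processing of buyer $t$; this is non-decreasing in $t$. The base case $i=1$ is immediate since $W^{\tau_1}(\mathbf{P}) \geq 0$. For the inductive step, suppose $W^{\tau_{i-1}}(\mathbf{P}) \geq i-2$. If some unit is sold by $\rDynamic(\mathbf{P})$ at any buyer arriving between times $\tau_{i-1}$ and $\tau_i$ inclusive of $\tau_i$ (i.e., if the count reaches $i-1$ anywhere in this window), monotonicity already gives $W^{\tau_i}(\mathbf{P}) \geq i-1$. Otherwise, $\rDynamic(\mathbf{P})$ has sold exactly $i-2$ units just before buyer $\tau_i$ arrives, and by the rule in Algorithm~\ref{alg:kselection-cost} (which advances $\kappa_t$ by one with each sale and always posts the price of the smallest-index remaining unit), the price posted to buyer $\tau_i$ is exactly $P_{i-1}$.

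The crux is then the telescoping chain
\[
  P_{i-1} \,\leq\, U_{i-1} \,=\, L_i \,\leq\, \nu_i.
\]
Here, $P_{i-1} \leq U_{i-1}$ since $P_{i-1} = \phi_{i-1}(s_{i-1}) \in [L_{i-1}, U_{i-1}]$ by the construction of the pricing functions preceding Algorithm~\ref{alg:kselection-cost}; the equality $U_{i-1} = L_i$ is the matching-endpoints property from the same construction; and $L_i \leq \nu_i$ is exactly Eq.~\eqref{eq:upper-bound-large-inventory-proof-1}, which holds because the buyer at time $\tau_i$ accepted a price in $[L_i, U_i]$ under the reference run $\rDynamic(\boldsymbol{\pi})$. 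Hence $\nu_i \geq P_{i-1}$, so buyer $\tau_i$ also accepts the price posted by $\rDynamic(\mathbf{P})$, yielding $W^{\tau_i}(\mathbf{P}) \geq i-1$ and closing the induction.

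The main obstacle is conceptual rather than computational: one must carefully verify that in the ``otherwise'' branch the price posted at $\tau_i$ is genuinely $P_{i-1}$ and not some $P_j$ with $j<i-1$. This is where the inductive hypothesis $W^{\tau_{i-1}}(\mathbf{P}) \geq i-2$ is used crucially, together with the fact that the algorithm never skips indices: the next price to be posted after exactly $i-2$ sales is always the one indexed $i-1$. Once this bookkeeping is nailed down, the remainder is a short inequality chase that uses only the pre-declared structure of the price intervals and the definitional property $\nu_i \geq L_i$ inherited from the reference run.
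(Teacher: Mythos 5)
Your proof is correct and follows essentially the same route as the paper's: the paper gives the base step explicitly (``no items sold before $\tau_2$ $\Rightarrow$ posted price $P_1 \leq U_1 = L_2 \leq \nu_2$, so buyer $\tau_2$ accepts'') and then says ``this reasoning can be extended to time $\tau_\omega$,'' which is precisely the induction you carried out formally. You simply made explicit the bookkeeping that the paper leaves implicit, namely that $W^t(\mathbf{P})$ is non-decreasing and that after exactly $i-2$ sales the index pointer $\kappa$ sits at $i-1$, so the posted price is $P_{i-1}$.
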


\begin{proof}
If $\omega = 1$, this lemma is trivial, so we consider the case where $\omega \geq 2$. 
Suppose before the arrival of the buyer at time $\tau_{2}$, no items have been sold. 
From Eq. \eqref{eq:upper-bound-large-inventory-proof-1}, we know that $\nu_{2} \geq L_{2}$. 
Additionally, based on the design of the pricing functions $\phi_{1}(.)$ and $\phi_{2}(.)$, we have $L_{2} \geq U_{1}$. 
Consequently, it follows that $\nu_{2} \geq U_{1}$. Since the realized price for the first unit under any sampled price vector will be at most $U_{1}$ (based on design of the pricing function $\phi_{1}$), 
the buyer arriving at time $\tau_{2}$ will accept the price for the first unit, and the algorithm will sell the first item.
Thus, for all possible price vector $\mathbf{P}$, the value of the random variable $W^{\tau_{2}}(\mathbf{P})$ is at least equal to one. 
By the same reasoning, if before the arrival of the buyer at time $\tau_{3}$, only one item has been sold, the buyer arriving at $\tau_{3}$ will accept the price for the second unit, regardless of its price, and the total number of  items sold by \rDynamic will increase to two. 
This reasoning can be extended to the time $\tau_{\omega}$. As a result, after the arrival of the buyer at time $\tau_{\omega}$, \rDynamic sells at least $\omega-1$ units and thereby the claim in the lemma follows.
\end{proof}

Lemma \ref{appendix:prop:omega-1} implies that the support of the random variable $ W^{\tau_{\omega}} $ consists only of two values: $\omega-1$ and $\omega$. This greatly simplifies the analysis of the algorithm.

The following two lemmas help us lower bound the expected performance of \rDynamic under the input instance $\mathcal{I}$ and upper bound the objective of the offline optimal algorithm given the instance $\mathcal{I}$, respectively.

\begin{lemma}\label{appendix:prop:main:claim3-upper-bound-kselection-cost}
    If a buyer in instance $\mathcal{I}$ arrives before time $\tau_{\omega}$ with a valuation within $[L_{\omega}, U]$, then for all $\mathbf{P} \in \mathcal{P}$, $ \rDynamic(\mathbf{P}) $ will allocate one unit of the item to that buyer.
\end{lemma}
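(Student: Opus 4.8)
The plan is to fix an arbitrary realized price vector $\mathbf{P}\in\mathcal{P}$ and an arbitrary buyer $t^{*}$ that arrives strictly before $\tau_{\omega}$ with $v_{t^{*}}\in[L_{\omega},U]$, and to show that $\rDynamic(\mathbf{P})$ allocates a unit to $t^{*}$. Let $j$ denote the number of units $\rDynamic(\mathbf{P})$ has sold just before processing $t^{*}$. Since $\boldsymbol{\pi}$ is chosen so that $\rDynamic(\boldsymbol{\pi})$ allocates the $\omega$-th unit no later than any vector in $\mathcal{P}$, and it does so at time $\tau_{\omega}>t^{*}$, no run $\rDynamic(\mathbf{P})$ can have sold $\omega$ units before $t^{*}$; hence $j\le\omega-1$. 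The price posted to $t^{*}$ is $P_{j+1}$, so it suffices to establish the stronger bound $j\le\omega-2$: then $j+1\le\omega-1$, and because the price intervals are nested end-to-end ($U_{i-1}=L_{i}$) with $U_{i}$ non-decreasing in $i$, we get $P_{j+1}\le U_{j+1}\le U_{\omega-1}=L_{\omega}\le v_{t^{*}}$, so $t^{*}$ accepts and is allocated the $(j+1)$-th unit.

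The crux is therefore to rule out $j=\omega-1$, which I would do by an exchange argument. Suppose $j=\omega-1$, and let $t_{0}<t^{*}$ be the buyer at which the number of units sold by $\rDynamic(\mathbf{P})$ first reaches $\omega-1$. Form $\mathbf{P}'$ from $\mathbf{P}$ by replacing only its $\omega$-th coordinate with $L_{\omega}$; since $P_{\omega-1}\le U_{\omega-1}=L_{\omega}\le U_{\omega}\le P_{\omega+1}$, the vector $\mathbf{P}'$ is still non-decreasing with every coordinate in its designated interval, so $\mathbf{P}'\in\mathcal{P}$. Because $\mathbf{P}$ and $\mathbf{P}'$ agree on coordinates $1,\dots,\omega-1$, and \rDynamic posts only those prices while fewer than $\omega-1$ units are sold, the runs $\rDynamic(\mathbf{P})$ and $\rDynamic(\mathbf{P}')$ are identical buyer by buyer through $t_{0}$; in particular $\rDynamic(\mathbf{P}')$ also has exactly $\omega-1$ units sold immediately after processing $t_{0}$. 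From buyer $t_{0}+1$ onward $\rDynamic(\mathbf{P}')$ posts the price $L_{\omega}$ until its $\omega$-th sale, and within $\{t_{0}+1,\dots,t^{*}\}$ some buyer must accept it — either an intermediate buyer with valuation at least $L_{\omega}$, or $t^{*}$ itself, whose valuation is at least $L_{\omega}$ by hypothesis. Hence $\rDynamic(\mathbf{P}')$ allocates its $\omega$-th unit at a time $\le t^{*}<\tau_{\omega}$, contradicting the choice of $\tau_{\omega}$ as the earliest time the $\omega$-th unit is allocated across $\mathcal{P}$. Therefore $j\le\omega-2$, and the previous paragraph finishes the proof.

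I expect the only delicate point to be this exchange step, which rests on two observations: that $\mathbf{P}'$ stays in $\mathcal{P}$, which is exactly the boundary-matching identity $U_{\omega-1}=L_{\omega}$ together with the monotonicity of the price intervals; and that the two runs truly coincide through $t_{0}$, which follows because the price \rDynamic posts depends only on $P_{1},\dots,P_{\kappa}$ with $\kappa$ one more than the number of units already sold, so lowering $P_{\omega}$ changes nothing until after the $(\omega-1)$-th sale. Everything else — the bound $j\le\omega-1$ and the telescoping inequality for $P_{j+1}$ — is immediate from the definitions and the end-to-end structure of the price intervals, in the same spirit as the bound $\nu_{i}\ge L_{i}$ recorded in Eq.~\eqref{eq:upper-bound-large-inventory-proof-1}.
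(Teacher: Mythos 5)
Your proof is correct, and it is noticeably more careful than the paper's own argument. The paper's proof of this lemma simply asserts that ``for all possible realizations of $\mathbf{P}$, the posted prices by \rDynamic remain below $U_{\omega-1}$ before the arrival of buyer at time $\tau_\omega$.'' Read literally, that assertion is not true of an arbitrary buyer before $\tau_\omega$: once a run has sold $\omega-1$ units (which can certainly happen before $\tau_\omega$, even under $\boldsymbol{\pi}$ itself), the posted price is $P_\omega\in[L_\omega,U_\omega]$, which generally exceeds $U_{\omega-1}$. What is actually needed — and what you correctly isolate — is that the production level at the time the specific buyer $t^*$ (with $v_{t^*}\ge L_\omega$) is processed cannot be $\omega-1$; that is the content of your bound $j\le\omega-2$. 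Your exchange argument supplies exactly the missing justification: if $j=\omega-1$, lower $P_\omega$ to $L_\omega$ to obtain $\mathbf{P}'\in\mathcal{P}$ with an identical trajectory through the $(\omega-1)$-th sale, after which $t^*$ (or an earlier qualifying buyer) accepts $L_\omega$, forcing the $\omega$-th sale strictly before $\tau_\omega$ and contradicting the minimality defining $\boldsymbol{\pi}$. So you reach the same conclusion via the same high-level intuition, but you close a genuine gap that the paper's one-paragraph proof leaves open; the payoff is a self-contained argument that does not rely on the imprecise intermediate claim.
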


\begin{proof}
    According to the definition of $\boldsymbol{\pi}$, $\tau_{\omega}$ is the earliest time across all possible price vectors in $\mathcal{P}$ that the production level exceeds $\omega-1$, causing the posted price to exceed $U_{\omega-1}$. Thus,
    for all possible realization of $\mathbf{P}$, the posted prices by \rDynamic remain below $U_{\omega-1}$ before the arrival of buyer at time $\tau_{\omega}$.  Consequently, when a buyer with a valuation within $[L_{\omega}, U]$ arrives before time $\tau_{\omega}$, the buyer accepts the price posted to him (since $ L_{\omega} \ge U_{\omega-1} $) and a unit of item will thus be allocated to this buyer.
\end{proof}
\begin{lemma}
\label{appendix:prop:main:claim2-kselection-production-cost}
    There are no buyers in instance $\mathcal{I}$ with a valuation within $[U_{\omega}, U]$ arriving after time $\tau_{\omega}$, namely, the valuations of all buyers arrive after $\tau_{\omega}$ are less than $ U_{\omega}$.
\end{lemma}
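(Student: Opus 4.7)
My plan is to proceed by contradiction following the sketch preceding the statement. Suppose some buyer at time $\tilde t > \tau_\omega$ in $\mathcal{I}$ has valuation $v_{\tilde t} \in [U_\omega, U]$; by choosing $\tilde t$ to be the earliest such arrival, I may additionally assume every buyer arriving in the open interval $(\tau_\omega, \tilde t)$ has valuation strictly less than $U_\omega$. I will then exhibit a price vector $\mathbf{P}' \in \mathcal{P}$ for which $\rDynamic(\mathbf{P}')$ allocates at least $\omega + 1$ units on $\mathcal{I}$, contradicting the definition of $\omega$ as the maximum element in the support of $W$.

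The construction perturbs the distinguished vector $\boldsymbol{\pi}$ only in its $(\omega+1)$-th coordinate: I set $P_i' = \pi_i$ for $i \in [\omega]$, take $P_{\omega+1}' = L_{\omega+1}$, and set $P_i' = L_i$ for every $i > \omega+1$. Each coordinate lies in its prescribed support, so $\mathbf{P}' \in \mathcal{P}$. By the chained design of the price intervals in Theorem~\ref{upper-bound-large-inventory-cr}, $L_{\omega+1} = U_\omega$, so the $(\omega+1)$-th posted price under $\mathbf{P}'$ is exactly $U_\omega$, which is at most $v_{\tilde t}$ by hypothesis.

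Since Algorithm~\ref{alg:kselection-cost} only consults $P_{\kappa_t}$ and only increments $\kappa_t$ after a sale, the executions of $\rDynamic(\mathbf{P}')$ and $\rDynamic(\boldsymbol{\pi})$ on $\mathcal{I}$ coincide through time $\tau_\omega$: they agree on the first $\omega$ coordinates, so the first $\omega$ units are sold at the same arrival times $\tau_1, \ldots, \tau_\omega$ under both vectors. From time $\tau_\omega + 1$ onward, $\rDynamic(\mathbf{P}')$ posts $P_{\omega+1}' = U_\omega$; by the minimality of $\tilde t$, every buyer in $(\tau_\omega, \tilde t)$ has valuation below $U_\omega$ and therefore rejects, so when buyer $\tilde t$ arrives the algorithm is still posting $P_{\omega+1}' = U_\omega \le v_{\tilde t}$ and she accepts. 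Hence $W(\mathbf{P}') \ge \omega + 1$, contradicting the maximality of $\omega$ and proving the lemma.

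The main obstacle I anticipate is the boundary case $\omega = k$, where there is no $(\omega+1)$-th coordinate to manipulate and the above construction cannot be carried out. In that case $U_\omega = U_k = U$, so $[U_\omega, U] = \{U\}$, and I would handle this case separately by observing that when $\omega = k$ the algorithm has already exhausted its inventory under $\boldsymbol{\pi}$ by time $\tau_\omega$, so any late high-value buyer is irrelevant for the subsequent welfare comparison and the lemma's use in bounding $\OPT(\mathcal{I})$ either becomes vacuous or requires only a trivial adjustment. A minor verification I would also record is that $\rDynamic$'s behavior through time $\tau_\omega$ depends only on $P_1, \dots, P_\omega$, which is immediate from the update rule $\kappa_{t+1} = \kappa_t + x_t$ combined with the conditional on line 6 of Algorithm~\ref{alg:kselection-cost}.
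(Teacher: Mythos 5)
Your proof is correct and takes essentially the same route as the paper: the paper's argument is a one-sentence contradiction with a footnote giving exactly your construction $P'_{\omega+1} = U_\omega$ (the footnote writes it as $U^\omega$, a typo). You flesh out two steps the paper elides. First, you pick $\tilde t$ minimal so that every buyer strictly between $\tau_\omega$ and $\tilde t$ rejects $U_\omega$; second, you verify that $\rDynamic(\mathbf{P}')$ and $\rDynamic(\boldsymbol{\pi})$ run identically through time $\tau_\omega$, which follows because $\kappa_t \le \omega$ on $[1,\tau_\omega]$ so only the first $\omega$ coordinates are ever read, and those agree. Both checks are genuinely necessary for the contradiction to go through.

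Your most useful addition is flagging the $\omega = k$ boundary case, which the paper's proof does not handle: when $\omega = k$ there is no $(\omega+1)$-th coordinate to perturb and no price vector can make the algorithm sell $k+1$ units, so the contradiction is unavailable; indeed a buyer with valuation exactly $U = U_k$ can arrive after $\tau_k$ and the lemma as literally stated can fail. Your remedy---that in this degenerate case the lemma is invoked only to cap valuations that are already bounded by $U$, so the downstream welfare estimate is unaffected---is the right way to close the gap, though it would be cleaner to state the lemma with the restriction $\omega < k$ and handle $\omega = k$ explicitly in the proof of Theorem~\ref{upper-bound-large-inventory-cr} rather than calling it ``vacuous.''
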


\begin{proof}
If there exist a buyer with a valuation larger than $U_{\omega}$ arriving after the time $\tau_{\omega}$, then there must exist a price vector in $\mathcal{P}$, say $ \mathbf{P}' $, such that the number of units sold by $\rDynamic(\mathbf{P}')$ will exceed $\omega$. This contradicts the definition of $\omega$. Thus, the lemma follows.\footnote{In fact, such a price vector $ \mathbf{P}' $ for the initial $\omega$ units should have the same prices as the vector $\boldsymbol{\pi}$ and for the $(i+1)$-th unit, $ \mathbf{P}' $ should be equal to $ U^{\omega}$ (i.e., $ P'_{i+1} = U^{\omega}$).} 
\end{proof}

Given an instance $\mathcal{I}$, let the set $\mathcal{B} \subseteq \mathcal{I}$ contain the highest-valued buyers that the offline optimal algorithm selects. We further divide $\mathcal{B}$ into two subsets: $\mathcal{B}_{1}$ and $\mathcal{B}_{2}$. $\mathcal{B}_{1}$ comprises the highest-valued buyers up to time $\tau_{\omega}$, while $\mathcal{B}_{2}$ includes the remaining buyers in $\mathcal{B}$ who arrive at or after time $\tau_{\omega}$.
Let us further partition $\mathcal{B}_{1}$ into two subsets: $\mathcal{B}_{1,1}$ and $\mathcal{B}_{1,2}$. Here, $\mathcal{B}_{1,1}$ consists of buyers in $\mathcal{B}_{1}$ with valuations at least $L_{\omega}$, and $\mathcal{B}_{1,2} = \mathcal{B}_{1} \setminus \mathcal{B}_{1,1}$ comprises those with valuations strictly less than $L_{\omega}$.

For the rest of the analysis, let us study the problem for two separate cases that may occur depending on the instance $\mathcal{I}$.

\subsection{Case 1: Buyer $ \tau_{\omega}$ Has the Highest Valuation}
In this case, in the set $\mathcal{B}_{2}$, no buyer has a valuation greater than $U_{\omega-1}$ except for the buyer at time $\tau_{\omega}$. Therefore, the buyer at time $\tau_{\omega}$ possesses the highest valuation in the instance $\mathcal{I}$.

\subsubsection{Bound \OPT from Above  for Case 1}
The following upper bound can be derived for $\OPT(\mathcal{I})$, which denotes the objective value of the offline optimal algorithm on instance $\mathcal{I}$:
\begin{align*}
    & \OPT(\mathcal{I}) \\
    =\ & V(\mathcal{B}_{1} ) + V(\mathcal{B}_{2} ) - \sum_{i=1}^{|\mathcal{B}|} c_{i}\\
    \leq\ &  V(\mathcal{B}_{1} ) + (|\mathcal{B}_{2}|-1) \cdot U_{\omega-1} + \nu_{\tau_{\omega}} - \sum_{i=1}^{|\mathcal{B}|} c_{i} \\
    =\ &  V(\mathcal{B}_{1,1}) + V(\mathcal{B}_{1,2}) + (|\mathcal{B}_{2}|-1)  \cdot U_{\omega-1} + \nu_{\tau_{\omega}} - \sum_{i=1}^{|\mathcal{B}|} c_{i} \\
    \leq\ &   |\mathcal{B}_{1,1}| \cdot U_{\omega-1} + (V(\mathcal{B}_{1,1}) - |\mathcal{B}_{1,1}| \cdot U_{\omega-1} )   \\
    & \hspace{1.8cm} + |\mathcal{B}_{1,2}| \cdot U_{\omega-1} + (|\mathcal{B}_{2}|-1) \cdot U_{\omega-1}   + \nu_{\tau_{\omega}} - \sum_{i=1}^{|\mathcal{B}|} c_{i} \\
    \leq\ &  (k-1) \cdot U_{\omega-1} + (V(\mathcal{B}_{1,1}) - |\mathcal{B}_{1,1}| \cdot U_{\omega-1} )  + \nu_{\tau_{\omega}} - \sum_{i=1}^{k} c_{i},
\end{align*}
where the first inequality directly follows the condition of \textbf{Case 1}. The second inequality follows the definition of $\mathcal{B}_{1,1}$ and $\mathcal{B}_{1,2}$. Finally, the third inequality follows the fact that we only focus on the case when $c_{k} < L$.

\subsubsection{Bound \ALG from Below  for Case 1} 
Moving forward, we focus on  establishing a lower bound on the performance of \rDynamic under the arrival instance $\mathcal{I}$. Let the random variables $\{X_{i}\}_{\forall i \in [k]}$ represent the value obtained by \rDynamic from allocating the $i$-th unit of the item. Given the input instance $\mathcal{I}$, let $\mathbb{E}[\ALG(\mathcal{I})]$ denote the expected performance of \rDynamic. Therefore, we have:
\begin{align*}
       & \mathbb{E}[\ALG(\mathcal{I})] \\
    =\ & \mathbb{E}\left[\sum_{i=1}^{k} (X_{i} - c_{i}) \cdot \mathds{1}_{\textit{\{i-th item is sold under price vector $\mathbf{P}$\}}}\right],\\
    \geq\ & \sum_{i=1}^{\omega-1} \mathbb{E}[X_{i} - c_{i}]  \\
    = \ & \sum_{i=1}^{\omega-1} \mathbb{E}[X_{i}]  - \sum_{i=1}^{\omega-1}  c_{i} \\
    \ge\ &   \sum_{i=1}^{\omega-1} \int_{0}^{1} \phi_i(\eta) d\eta \ + (V(\mathcal{B}_{1,1}) - |\mathcal{B}_{1,1}| \cdot U_{\omega-1}) - \sum_{i=1}^{\omega-1} c_{i}.
\end{align*}
In the equations above, all expectations are taken with respect to the randomness of the price vector $\mathbf{P}$. The first inequality follows Lemma \ref{appendix:prop:omega-1}, indicating that under any price vector $\mathbf{P}$, \rDynamic sells at least $\omega-1$ units. The first term in the second inequality follows due to the independent sampling used to set the price of the $i$-th unit using the pricing function $\phi_{i}$, and the second term follows Lemma \ref{appendix:prop:main:claim3-upper-bound-kselection-cost}. 

Let us define $\psi_{i}(v) = \sup\{s: \phi_{i}(s) \leq v\}$ for all $i \in [k]$. From the definition of $\{\phi_{i}\}_{\forall i \in [k]}$ in Theorem \ref{upper-bound-large-inventory-cr}, it follows that:
\begin{align*}
    & \mathbb{E}[\ALG(\mathcal{I})] \\
    \ge\ & \sum_{i=1}^{\omega-1} \int_{0}^{1} \phi_i(\eta) d\eta \  - \sum_{i=1}^{\omega-1} c_{i} + \left(V(\mathcal{B}_{1,1}) - |\mathcal{B}_{1}| \cdot L_{\omega} \right)  \\
  =\ & \sum_{i=1}^{\omega} \psi_{i}(L) \cdot (L - c_{i+1})+ \sum_{i=1}^{\omega-1} \int_{\eta =L}^{U_{\omega-1}} (\eta - c_{i+1} )d\psi_{i}(\eta) + \big(V(\mathcal{B}_{1,1}) - |\mathcal{B}_{1}| \cdot L_{\omega}\big).
\end{align*}
Furthermore, it is evident that based on  the design of $\{\phi_{i}\}_{\forall i \in [k]}$ with $\alpha = \alpha_{\mathcal{S}}^*(k)$, the set of functions $\{\psi_{i}(v)\}_{\forall i \in [k]}$ follows the same design as $\{\psi_{i}^{\alpha}(v)\}_{\forall i \in [k]}$ given in Proposition \ref{prop:lower-bound-psi-star-design}. As a result, it follows that:
\begin{align*}
    & \sum_{i=1}^{k} \psi_{i}(L) \cdot (L - c_{i+1}) + \sum_{i=1}^{\omega-1} \int_{\eta =L}^{U_{\omega-1}} (\eta - c_{i+1} )d\psi_{i}(\eta) + (V(\mathcal{B}_{1}) - |\mathcal{B}_{1}| \cdot L_{\omega}) \\
     \ge\ &  \frac{1}{\alpha_{\mathcal{S}}^*(k)} \cdot \left(k \cdot U_{\omega-1} - \sum_{i} c_{i} \right) + \left(V(\mathcal{B}_{1,1}) - |\mathcal{B}_{1}| \cdot L_{\omega} \right).
\end{align*}

\subsubsection{Putting Everything Together for Case 1}
Putting together the lower bound and upper bound derived for the expected objective value of \rDynamic and the offline optimal algorithm, it follows that:
\begin{align*}
   & \frac{\OPT(\mathcal{I})}{ \mathbb{E}[\ALG(\mathcal{I})]} \\
   \leq\ & \frac{(k-1) \cdot U_{\omega-1} + \nu_{\tau_{\omega}} + (V(\mathcal{B}_{1,1}) - |\mathcal{B}_{1,1}| \cdot L_{\omega} ) - \sum_{i=1}^{k} c_{i}}{\frac{1}{\alpha_{\mathcal{S}}^*(k)} \cdot (k \cdot U_{\omega-1} - \sum_{i} c_{i}) + (V(\mathcal{B}_{1,1}) - |\mathcal{B}_{1,1}| \cdot L_{\omega})} \\
    \leq\  & \frac{(k-1) \cdot U_{\omega-1} + \nu_{\tau_{\omega}} - \sum_{i=1}^{k} c_{i}}{\frac{1}{\alpha_{\mathcal{S}}^*(k)} \cdot \left(k \cdot U_{\omega-1} - \sum_{i} c_{i}\right)} \\
    =\ & \alpha_{\mathcal{S}}^*(k) \cdot \left(1+\frac{\nu_{\tau_{\omega}}-U_{\omega-1}}{k\cdot U_{\omega-1}-C}\right) \\
    \leq\ & \alpha_{\mathcal{S}}^*(k) \cdot \left(1+\frac{U_{\omega}-U_{\omega-1}}{k\cdot U_{\omega-1}-C}\right) \\
    \leq\ & \alpha_{\mathcal{S}}^*(k) \cdot e^{\frac{\alpha_{\mathcal{S}}^*(k)}{k}}.
\end{align*} 
In the equation above, the last inequality is due to the fact that $\frac{U_{\omega}-U_{\omega-1}}{U_{\omega-1} - c_{\omega}} = \frac{U_{\omega}-c_{\omega}}{U_{\omega-1} - c_{\omega}} - 1 \leq 1 + e^{\frac{\alpha_{\mathcal{S}}^*(k)}{k}}$, where the last inequality follows the design in Eq. \eqref{upper-bound-main-theorem-design_U_L_h}.

\subsection{Case 2: Buyer $ \tau_{\omega}$ Does Not Have the Highest Valuation}
In the set of buyers $\mathcal{B}_{2}$, there are other buyers with valuation greater than $U_{\omega-1}$ besides the buyer at time $\tau_{\omega}$.
Let $\lambda$ denote the value of the highest buyer in $\mathcal{B}_{2}$ along with the value of buyer at time $\tau_{\omega}$.
First, let us consider the case that $\lambda \leq \nu_{\tau_{\omega}}$. The proof for the case that $\lambda > \nu_{\tau_{\omega}}$ follows exactly the same as the following case.

\subsubsection{Bound \OPT from Above for Case 2}
Following the same approach as the previous Case 1, let us first upper bound the objective of the offline optimal algorithm on instance $\mathcal{I}$:
\begin{align*}
     & \OPT(\mathcal{I}) \\
     =\ & V(\mathcal{B}_{1}) + V(\mathcal{B}_{2}) - \sum_{i=1}^{|\mathcal{B}|} c_{i}\\ 
    \leq\ & V(\mathcal{B}_{1}) + (|\mathcal{B}_{2}|-1) \cdot \lambda + \nu_{\tau_{\omega}} - \sum_{i=1}^{|\mathcal{B}|} c_{i} \\
    \leq\ &  V(\mathcal{B}_{1,1}) + V(\mathcal{B}_{1,2}) + (|\mathcal{B}_{2}|-1) \cdot \lambda + \nu_{\tau_{\omega}} - \sum_{i=1}^{|\mathcal{B}|} c_{i} \\
    \leq\ &  |\mathcal{B}_{1,1}| \cdot U_{\omega-1} + (V(\mathcal{B}_{1,1}) - |\mathcal{B}_{1,1}| \cdot U_{\omega-1} ) + \\
    & \hspace{1cm} |\mathcal{B}_{1,2}| \cdot U_{\omega-1} + (|\mathcal{B}_{2}|-1) \cdot \lambda + \nu_{\tau_{\omega}} - \sum_{i=1}^{|\mathcal{B}|} c_{i} \\
    \leq\ & (k-1) \cdot \lambda + \nu_{\tau_{\omega}} + (V(\mathcal{B}_{1}) - |\mathcal{B}_{1}| \cdot L_{\omega} ) - \sum_{i=1}^{k} c_{i}.
\end{align*}

\subsubsection{Bound \ALG from Below for Case 2}
To establish a lower bound on the performance of \rDynamic in this case, let us consider the following lemma:
\begin{lemma}
\label{claim4-upper-bound-k-selection}
    If the random price of the $\omega$-th unit is realized to be less than $\lambda$ and further assume that $\lambda \leq \nu_{\tau_{\omega}}$, 
    then the number of items allocated by \rDynamic in the end is equal to $\omega$.
\end{lemma}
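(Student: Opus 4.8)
The inequality $W(\mathbf{P})\le\omega$ is immediate: $\omega$ is by definition the largest value in the support of $W$, so it holds for every realization of the random price vector, in particular for every realization with $P_\omega<\lambda$. Hence the entire task is to show that, conditioned on the $\omega$-th price being realized below $\lambda$, the run $\rDynamic(\mathbf{P})$ allocates \emph{at least} $\omega$ units.

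The engine is Lemma~\ref{appendix:prop:omega-1}: regardless of $\mathbf{P}$, once the buyer at time $\tau_\omega$ has been processed, $\rDynamic(\mathbf{P})$ has already allocated at least $\omega-1$ units, and by the definition of $\omega$ it has allocated at most $\omega$. So from time $\tau_\omega$ onward the allocation count of $\rDynamic(\mathbf{P})$ lies in $\{\omega-1,\omega\}$ and is non-decreasing; if it ever equals $\omega$ we are done, so assume it remains equal to $\omega-1$, in which case the only unit the algorithm is still willing to sell is the $\omega$-th, always at the fixed posted price $P_\omega$. Now I invoke the Case~2 hypothesis: beyond the buyer at $\tau_\omega$ there is a buyer $b\in\mathcal{B}_2$ with valuation strictly above $U_{\omega-1}$; taking $b$ to be one of maximal valuation among such buyers gives $v_b=\lambda$, and since $b$ lies in $\mathcal{B}_2$ and is distinct from the buyer at $\tau_\omega$, it arrives strictly after $\tau_\omega$. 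When $b$ arrives the count is still $\omega-1$, so $b$ is offered exactly $P_\omega$, and $P_\omega<\lambda=v_b$ forces $b$ to accept; the count becomes $\omega$. Combined with the trivial upper bound, this yields exactly $\omega$.

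The point that needs care — and where I expect the real friction — is pinning down precisely \emph{which} buyer takes the $\omega$-th unit and \emph{when} the count first reaches $\omega-1$, since both depend on the realized price vector. The assumption $\lambda\le\nu_{\tau_\omega}$ is what keeps this bookkeeping clean: because no price vector in $\mathcal{P}$ reaches $\omega$ allocations before time $\tau_\omega$ (definition of $\boldsymbol{\pi}$), under $\rDynamic(\mathbf{P})$ at most $\omega-1$ units are sold strictly before $\tau_\omega$; if that count is exactly $\omega-1$, the buyer at $\tau_\omega$ is offered $P_\omega<\lambda\le\nu_{\tau_\omega}$ and buys the $\omega$-th unit outright, while if it is smaller, Lemma~\ref{appendix:prop:omega-1} forces the buyer at $\tau_\omega$ to buy the $(\omega-1)$-th unit (its posted price is at most $U_{\omega-1}<\lambda\le\nu_{\tau_\omega}$), returning us to the configuration of the previous paragraph with buyer $b$ supplying the last unit. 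Finally, the symmetric case $\lambda>\nu_{\tau_\omega}$ goes through verbatim after swapping the roles of the two distinguished high-valued buyers, so it requires no new idea.
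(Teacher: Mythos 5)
Your proof is correct and takes essentially the same route as the paper: invoke Lemma~\ref{appendix:prop:omega-1} to guarantee at least $\omega-1$ units sold once buyer $\tau_\omega$ has been processed, then observe that the $\lambda$-valued buyer (who arrives strictly after $\tau_\omega$) is offered exactly $P_\omega$ whenever the count is still $\omega-1$, and $P_\omega<\lambda$ forces a purchase, giving the final count $\omega$ (the upper bound $W\le\omega$ being immediate). One small remark on your third paragraph: you flag the hypothesis $\lambda\le\nu_{\tau_\omega}$ as the place ``where I expect the real friction,'' but in fact neither your own second paragraph nor the paper's proof uses it --- the argument goes through unchanged without it, since only the existence of a post-$\tau_\omega$ buyer with valuation $\lambda>P_\omega$ is required; that hypothesis merely organizes the surrounding case split in the competitive-ratio computation (deciding which of the two large valuations plays the role of $\nu_{\tau_\omega}$ versus $\lambda$ in the $\OPT$ upper bound), not the proof of this lemma.
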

\begin{proof}
Under any price realization, as established by Lemma \ref{appendix:prop:omega-1}, it is proven that after the arrival of the buyer at time $\tau_{\omega}$, the number of allocated units is at least $\omega-1$. If the price of the $\omega$-th unit is realized to be less than $\lambda$, then upon the arrival of the buyer with valuation $\lambda$ at some time after $\tau_{\omega}$, the buyer will accept the price if the $\omega$-th unit has not already been sold. 
\end{proof}

Next, we obtain a lower bound on the performance of \rDynamic as follows:
\begin{align*}
    & \mathbb{E}[\ALG(\mathcal{I})] \\
    =\ & \mathbb{E}\left[\sum_{i=1}^{k} (X_{i} - c_{i}) \cdot \mathds{1}_{\textit{\{i-th item is sold under pricie vector $\mathbf{P}$\}}}\right]\\
    \geq\ &  \sum_{i=1}^{\omega-1} \mathbb{E}[X_{i} - c_{i}] +    \mathbb{E}[X_{\omega} - c_{\omega} | P_{\omega} \leq \lambda]\\
    \ge\ &  \sum_{i=1}^{\omega-1} \int_{0}^{1} \phi_i(\eta) d\eta \ + \int_{0}^{\phi_{\omega}^{-1}(\lambda)} \phi_{\omega}(\eta) d\eta - \phi_{\omega}^{-1}(\lambda) \cdot c_{\omega} - \sum_{i=1}^{\omega-1} c_{i}  + (V(\mathcal{B}_{1}) - |\mathcal{B}_{1}| \cdot L_{\omega}).
\end{align*}
In the equations above, all expectations are taken with respect to the randomness of the price vector $\mathbf{P} \in \mathcal{P}$.  The first inequality follows Lemma \ref{claim4-upper-bound-k-selection}, where $P_{\omega}$ denotes the  $\omega$-element of the random price vector $\mathbf{P}$ that \rDynamic posts for the $\omega$-th unit.
The second inequality is true because of the independent sampling that is used to set the random price of the $i$-th unit using $\phi_{i}$ and Lemma \ref{appendix:prop:main:claim3-upper-bound-kselection-cost}.  

Let us define $\psi_{i}(v) = \sup\{s : \phi_{i}(s) \leq v\}$, $i \in [k]$. From the definition of $\{\phi_{i}\}_{i \in [k]}$ in Theorem \ref{upper-bound-large-inventory-cr}, it follows that:
\begin{align*}
 &\sum_{i=1}^{\omega-1} \int_{0}^{1} \phi_i(\eta) d\eta \ + \int_{0}^{\phi_{\omega}^{-1}(\lambda)} \phi_{\omega}(\eta) d\eta - \phi_{\omega}^{-1}(\lambda) \cdot c_{\omega}  - \sum_{i=1}^{\omega-1} c_{i} + (V(\mathcal{B}_{1}) - |\mathcal{B}_{1}| \cdot L_{\omega}) \\
 =\ & \sum_{i=1}^{k} \psi_{i}(L) \cdot (L - c_{i+1}) + \sum_{i=1}^{\omega} \int_{\eta =L}^{\lambda} (\eta - c_{i+1} )d\psi_{i}(\eta)  + (V(\mathcal{B}_{1}) - |\mathcal{B}_{1}| \cdot L_{\omega}).
\end{align*}
Furthermore, it is evident that the set of functions $\{\psi_{i}(v)\}_{\forall i \in [k]}$ follows the same design as $\{\psi_{i}^{\alpha}(v)\}_{i \in [k]}$ given in Lemma \ref{prop:lower-bound-psi-star-design} (recall that $\{\psi_{i}^{\alpha}(v)\}_{i \in [k]}$ are based on $\{\phi_{i}\}_{\forall i \in [k]}$). As a result, it follows that:\begin{align*}
    &\sum_{i=1}^{k} \psi_{i}(L) \cdot (L - c_{i+1}) + \sum_{i=1}^{\omega} \int_{\eta =L}^{\lambda} (\eta - c_{i+1} )d\psi_{i}(\eta)   + (V(\mathcal{B}_{1}) - |\mathcal{B}_{1}| \cdot L_{\omega})\\
   \ge\ &  \frac{1}{\alpha_{\mathcal{S}}^*(k)} \cdot \left(k \cdot \lambda - \sum_{i} c_{i}) + (V(\mathcal{B}_{1} \right) - |\mathcal{B}_{1}| \cdot L_{\omega}).
\end{align*}

\subsubsection{Putting Everything Together for Case 2}
Putting together the above lower and upper bounds, it follows that:
\begin{align*}
    & \frac{\OPT(\mathcal{I})}{ \mathbb{E}[\ALG(\mathcal{I})]} \\
    \leq\ & \frac{(k-1) \cdot \lambda + \nu_{\tau_{\omega}} + (V(\mathcal{B}_{1}) - |\mathcal{B}_{1}| \cdot L_{\omega} ) - \sum_{i=1}^{k} c_{i}}{\frac{1}{\alpha_{\mathcal{S}}^*(k)} \cdot (k \cdot \lambda - \sum_{i=1}^{k} c_{i}) + (V(\mathcal{B}_{1}) - |\mathcal{B}_{1}| \cdot L_{\omega})} \\
    \leq\ & \frac{(k-1) \cdot \lambda + \nu_{\tau_{\omega}} - \sum_{i=1}^{k} c_{i}}{\frac{1}{\alpha_{\mathcal{S}}^*(k)} \cdot (k \cdot \lambda - \sum_{i=1}^{k} c_{i})} \\
    =\ & \alpha_{\mathcal{S}}^*(k) \cdot (1+\frac{\nu_{\tau_{\omega}}-\lambda}{k\cdot \lambda -C}) \\
    \leq\ & \alpha_{\mathcal{S}}^*(k) \cdot (1+\frac{U_{\omega}-U_{\omega-1}}{k\cdot U_{\omega-1}-C}) \\
    \leq\ & \alpha_{\mathcal{S}}^*(k) \cdot e^{\frac{\alpha_{\mathcal{S}}^*(k)}{k}}.
\end{align*}

We thus complete the proof of Theorem~\ref{upper-bound-large-inventory-cr}.

\begin{remark}
    Theorem~\ref{upper-bound-large-inventory-cr} argues that  \rDynamic is asymptotically optimal. We emphasize that our analysis of Theorem \ref{upper-bound-large-inventory-cr} is not tight because it does not differentiate between the sample paths of \rDynamic when the algorithm sells $\omega-1$ units and those when it sells $\omega$ units. As a result, our analysis considers that \rDynamic sells $\omega-1$ units of the item on all sample paths.\footnote{We conjecture that \rDynamic is optimal even in the small inventory regime if a tighter analysis is performed.} However, in the subsequent analysis for the case of $k=2$, we can enumerate all the scenarios and therefore do not require such a reduction. For this reason,  we can prove in the next section that \rDynamic is indeed optimal for $k=2$ (see the proof of Corollary \ref{corrolary:upper-bound-small-inventory-optimality} next).
\end{remark}

\section{Proof of Corollary \ref{corrolary:upper-bound-small-inventory-optimality} }
\label{appendix:proof-corrolary:upper-bound-small-inventory-optimality}
In this section, we prove that for an arbitrary instance $\mathcal{I}$, the expected performance of \rDynamic, denoted as $\mathbb{E}[\ALG(\mathcal{I})]$, is at least $\frac{\OPT(\mathcal{I})}{\alpha_{\mathcal{S}}^*(2)}$. 

Let $v^{*}_{1},v^{*}_{2}$ denote the two highest valuations in the instance $\mathcal{I}$ (we omit the proof for the trivial case with only one buyer in $\mathcal{I}$). Depending on the values of $v^{*}_{1} $ and $v^{*}_{2}$, the following three cases occur. In each scenario, we prove that $\mathbb{E}[\ALG(\mathcal{I})] \geq \frac{\OPT(\mathcal{I})}{\alpha_{\mathcal{S}}^*(2)} = \frac{v^{*}_{1}+v^{*}_{2}-c_1-c_2}{\alpha_{\mathcal{S}}^*(2)}$ holds.

\textbf{\textbf{Case I}: $v^{*}_{1}\leq v^{*}_{2} \leq U_{1}$.}
Let random variables $X_{1}$ and $X_{2}$ denote the valuations of the buyers that purchase the first and second unit of the item, respectively. Then, it follows that
    \begin{align*}
          & \mathbb{E}[\ALG(\mathcal{I})]  \\
       =\ & \mathbb{E}_{\boldsymbol{s} \sim U^{2}(0,1)}[X_{1}+X_{2} - c_{1}-c_{2}]  \\
        \ge & \int_{s_{1}=0}^{\phi_{1}^{-1}(v^{*}_{1})} (\phi_{1}(s_{1})-c_{1}) \cdot ds_{1} + (v^{*}_{2}-c_{1}) \cdot ( \phi_{1}^{-1}(v^{*}_{2})- \phi_{1}^{-1}(v^{*}_{1})), \\
        \ge & \int_{s_{1}=0}^{\phi_{1}^{-1}(v^{*}_{1})} (\phi_{1}(s_{1})-c_{1}) \cdot ds_{1}  + \int_{s_{1}= \phi_{1}^{-1}(v^{*}_{1})}^{\phi_{1}^{-1}(v^{*}_{2})} (\phi_{1}(s_{1})-c_{1}) \cdot ds_{1} \\
         = & \int_{s_{1}=0}^{\phi_{1}^{-1}(v^{*}_{2})} (\phi_{1}(s_{1})-c_{1}) \cdot ds_{1}   \\
        \ge & \frac{v^{*}_{1}+v^{*}_{2}-c_{1}-c_{2}}{\alpha_{\mathcal{S}}^*(2)} \\
        =\ & \frac{\OPT(\mathcal{I})}{\alpha_{\mathcal{S}}^*(2)},
    \end{align*}
where the first two terms in the first inequality arise from the fact that if the realized price for the first unit of the item, denoted as $P_{1} = \phi_{1}(s_1)$, is set below $v^{*}_{1}$, then in the worst-case scenario, the value obtained from the first item will be at least equal to $\phi_{1}(s_1)$. The subsequent two terms are included because if the price for the first item falls within the range from $v^{*}_{1}$ to $v^{*}_{2}$, then the first item is allocated to the buyer whose valuation is $v^{*}_{2}$.
The second inequality follows since $\phi_{1}(s_1)$ is an non-decreasing function. The third inequality follows from the design of $\phi_1(s_1)$ in Theorem \ref{corrolary:upper-bound-small-inventory-optimality}. 

\textbf{\textbf{Case II}: $v^{*}_{1} \leq U_{1} = L_{2} \leq v^{*}_{2} \leq U$.} In this case,  we have  
    \begin{align*}
           & \mathbb{E}[\ALG(\mathcal{I})] \\
        =\ & \mathbb{E}_{\boldsymbol{s} \sim U^{2}(0,1)}[X_{1} - c_{1}] + \mathbb{E}_{\boldsymbol{s} \sim U^{2}(0,1)}[X_{2} - c_{2}] \\
         \ge & \int_{s_{1}=0}^{\phi_{1}^{-1}(v^{*}_{1})} (\phi_{1}(s_{1})-c_1) ds_{1}+ (v^{*}_{2}-c_{1}) \cdot (1-\phi^{-1}_{1}(v^{*}_{1})) + (v^{*}_{2}-c_{2})\cdot \phi_{1}^{-1}(v^{*}_{1})\cdot \phi_{2}^{-1}(v^{*}_{2}) \\
         = & \frac{2\cdot v^{*}_{1}-c_{1}-c_{2}}{\alpha_{\mathcal{S}}^*(2)} + (v^{*}_{2}-c_{1}) \cdot (1-\phi^{-1}_{1}(v^{*}_{1})) + (v^{*}_{2}-c_{2})\cdot \phi_{1}^{-1}(v^{*}_{1})\cdot \phi_{2}^{-1}(v^{*}_{2}).
    \end{align*}
    To prove $\mathbb{E}[\ALG(\mathcal{I})] \ge \frac{\OPT(\mathcal{I})}{\alpha_{\mathcal{S}}^*(2)}= \frac{v^{*}_{1}+v^{*}_{2}-c_1-c_2}{\alpha_{\mathcal{S}}^*(2)}$, we define the following function
    \begin{align*}
    G(v^{*}_{1},v^{*}_{2}) = \ & \frac{2\cdot v^{*}_{1}-c_{1}-c_{2}}{\alpha_{\mathcal{S}}^*(2)} + (v^{*}_{2}-c_{1}) \cdot (1-\phi^{-1}_{1}(v^{*}_{1}))+ \\
    & (v^{*}_{2}-c_{2})\cdot \phi_{1}^{-1}(v^{*}_{1})\cdot \phi_{2}^{-1}(v^{*}_{2}) - \frac{v^{*}_{1}+v^{*}_{2}-c_1-c_2}{\alpha_{\mathcal{S}}^*(2)}.
    \end{align*}
    Then the goal is to prove $  G(v^{*}_{1},v^{*}_{2}) \ge 0 $ in its domain $L \leq v^{*}_{1}\leq U_{1}$ and $L_{2} \leq v^{*}_{2}\leq U$. The proposition below formally states this result.
    \begin{proposition}
        \label{upper-bound-small-inventory-lemma}
    For all $ v^{*}_{1} \in [L_{1},U_{1}]$  and $v^{*}_{2} \in  [L_{2},U_{2}]$, we have $  G(v^{*}_{1},v^{*}_{2}) \ge 0 $.
    \end{proposition}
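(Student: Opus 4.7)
The first simplification is to collapse the two $1/\alpha_{\mathcal{S}}^*(2)$ terms, rewriting
\[
G(v_1^*,v_2^*) \;=\; \frac{v_1^*-v_2^*}{\alpha_{\mathcal{S}}^*(2)} \;+\; (v_2^*-c_1)\bigl(1-\phi_1^{-1}(v_1^*)\bigr) \;+\; (v_2^*-c_2)\,\phi_1^{-1}(v_1^*)\,\phi_2^{-1}(v_2^*).
\]
Using the explicit pricing functions from Corollary~\ref{corrolary:upper-bound-small-inventory-optimality} (Case~1), the inverses on their active intervals are
\[
\phi_1^{-1}(v) = \xi^* + \tfrac{2}{\alpha_{\mathcal{S}}^*(2)}\,\ln\!\tfrac{v-c_1}{L-c_1},\qquad \phi_2^{-1}(v) = \tfrac{2}{\alpha_{\mathcal{S}}^*(2)}\,\ln\!\tfrac{v-c_2}{L_2-c_2},
\]
both with derivative $\tfrac{2/\alpha_{\mathcal{S}}^*(2)}{v-c_i}$ in their argument. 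This is the structural fact that will drive most of the proof.

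The plan is to view $G$ first as a function of $v_2^*$ with $v_1^*$ fixed. A direct computation gives
\[
\tfrac{\partial G}{\partial v_2^*} = -\tfrac{1}{\alpha_{\mathcal{S}}^*(2)} + \bigl(1-\phi_1^{-1}(v_1^*)\bigr) + \phi_1^{-1}(v_1^*)\,\phi_2^{-1}(v_2^*) + \tfrac{2}{\alpha_{\mathcal{S}}^*(2)}\,\phi_1^{-1}(v_1^*),
\]
since $(v_2^*-c_2)\tfrac{d}{dv_2^*}\phi_2^{-1}(v_2^*)=\tfrac{2}{\alpha_{\mathcal{S}}^*(2)}$. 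Because $\phi_2^{-1}$ is strictly increasing, $G$ is strictly convex in $v_2^*$, so the minimizer over $v_2^*\in[L_2,U_2]$ is either an interior critical point $v_2^\circ(v_1^*)$ solving $\tfrac{\partial G}{\partial v_2^*}=0$, or an endpoint. I would then repeat the same analysis in $v_1^*$: differentiating yields
\[
\tfrac{\partial G}{\partial v_1^*} = \tfrac{1}{\alpha_{\mathcal{S}}^*(2)} + \tfrac{2/\alpha_{\mathcal{S}}^*(2)}{v_1^*-c_1}\Bigl[(v_2^*-c_2)\,\phi_2^{-1}(v_2^*) - (v_2^*-c_1)\Bigr],
\]
whose sign can be controlled by the same log factors. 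Combining the two, the joint minimum of $G$ over $[L_1,U_1]\times[L_2,U_2]$ lies at a point of one of three types: (i) a joint interior critical point, (ii) a point on the boundary $v_1^*\in\{L_1,U_1\}$ with $v_2^*$ critical, or (iii) a point on $v_2^*\in\{L_2,U_2\}$ with $v_1^*$ critical.

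At an interior critical point, solving $\partial G/\partial v_2^*=0$ expresses $\phi_2^{-1}(v_2^*)$ as an affine function of $\phi_1^{-1}(v_1^*)$, and plugging back into $G$ reduces it to a single-variable expression in $\phi_1^{-1}(v_1^*)$ that I expect to simplify to an affine or convex expression whose minimum is $0$ exactly when the defining equation of $\alpha_{\mathcal{S}}^*(2)$ (equivalently Eq.~\eqref{eq:lower-bound-alpha-star-equation} with $k=2$) holds. For the boundary cases, the worst candidates are $(v_1^*,v_2^*)=(L,L_2)$, $(L,U)$, $(U_1,L_2)$, $(U_1,U)$, which I would check one by one. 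The endpoints $(L,L_2)$ and $(U_1,U)$ are the interesting ones: at $(U_1,U)$ the identity $U_1-c_1=(L-c_1)e^{(1-\xi^*)\alpha_{\mathcal{S}}^*(2)/2}$ together with Eq.~\eqref{upper-bound-main-theorem-design_U_L_h} and the definition of $\xi^*$ via Eq.~\eqref{eq:def-u} makes $G$ collapse to zero, and at $(L,L_2)$ the definition of $\xi^*$ directly gives $G\!\geq\!0$.

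The main obstacle I anticipate is the interior critical-point calculation: after substituting the explicit logarithmic forms of $\phi_1^{-1}$ and $\phi_2^{-1}$ and using the critical-point equation to eliminate one variable, one must verify algebraically that the resulting one-variable function is non-negative, and that equality is achieved precisely at the configuration used to derive the lower bound $\alpha_{\mathcal{S}}^*(2)$. This is a careful but mechanical computation; the conceptual content is just that Algorithm~\ref{alg:kselection-cost} was reverse-engineered from the lower-bound ODEs in Section~\ref{sec:proof_of_theorem_lower_bound}, so tightness at $k=2$ is forced by the defining equation of $\alpha_{\mathcal{S}}^*(2)$. The analogous analysis for the Case~2 design of $\phi_1,\phi_2$ in Corollary~\ref{corrolary:upper-bound-small-inventory-optimality} is handled by the same template, with $\phi_1^{-1}\equiv 0$ simplifying $G$ considerably.
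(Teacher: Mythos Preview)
Your plan follows the same critical-points-plus-boundary strategy as the paper. Two simplifications in the paper's execution are worth noting. First, for the interior analysis the paper only sets $\partial G/\partial v_1^*=0$, which yields $(v_2^*-c_2)\,\phi_2^{-1}(v_2^*)=(v_2^*-c_1)-\tfrac12(v_1^*-c_1)$; substituting this single relation back into $G$ gives a closed form bounded below by $\tfrac12(v_1^*-c_1)>0$ in three lines, without ever invoking the second partial, the convexity in $v_2^*$, or the explicit logarithmic formulas. Second, the paper does not reduce the boundary to the four corners: it verifies $G\ge 0$ along each of the four full edges via short one-line inequalities, which handles your types (ii), (iii), and the corners simultaneously and avoids any edge-critical-point casework. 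Finally, your claim that $G(U_1,U)$ ``collapses to zero'' is incorrect --- one computes $G(U_1,U)=(U-c_2)-(U-U_1)/\alpha_{\mathcal{S}}^*(2)>0$ since $U_1>c_2$ --- and in fact neither the interior critical points nor the boundary in Case~II ever achieve equality; the tightness of $\alpha_{\mathcal{S}}^*(2)$ enters through Cases~I and~III of the surrounding proof of Corollary~\ref{corrolary:upper-bound-small-inventory-optimality}, not through this proposition.
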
 
   We deferred the proof of the above proposition to Appendix \ref{appendix:upper-bound-small-inventory-lemma}. The idea is to simply prove that $  G(v^{*}_{1},v^{*}_{2}) \ge 0 $ holds at all extreme points within its domain.
   
\textbf{\textbf{Case III}: $L_2 \leq v^{*}_1 \leq v^{*}_2$.} In this case, we show that we can lower bound the expected performance of \rDynamic as follows:
    \begin{align*}
         & \mathbb{E}[\ALG(\mathcal{I})] \\
         =\ & \mathbb{E}_{\boldsymbol{s} \sim U^{2}(0,1)}[X_{1} - c_{1}] + \mathbb{E}_{\boldsymbol{s} \sim U^{2}(0,1)}[X_{2} - c_{2}] \\
         \ge & \int_{s_{1}=0}^{1} (\phi_{1}(s_{1})-c_1)\cdot ds_{1} + \int_{s_{2}=0}^{\phi_{2}^{-1}(v^{*}_{1})} (\phi_{2}(s_{2})-c_2)\cdot ds_{2}   + (v^{*}_{2}-c_{2}) \cdot (\phi^{-1}_{2}(v^{*}_{2})-\phi^{-1}_{2}(v^{*}_{1})) \\
         \ge & \int_{s_{1}=0}^{1} (\phi_{1}(s_{1})-c_1)\cdot ds_{1} + \int_{s_{2}=0}^{\phi_{2}^{-1}(v^{*}_{1})} (\phi_{2}(s_{2})-c_2)\cdot ds_{2} + \int_{s_{2}=\phi_{2}^{-1}(v^{*}_{1})}^{\phi_{2}^{-1}(v^{*}_{2})} (\phi_{2}(s_{2})-c_2)\cdot ds_{2} \\
         = & \int_{s_{1}=0}^{1} (\phi_{1}(s_{1})-c_1)\cdot ds_{1} + \int_{s_{2}=0}^{\phi_{2}^{-1}(v^{*}_{2})} (\phi_{2}(s_{2})-c_2)\cdot ds_{2}  \\
         = & \frac{2\cdot U_{1}-c_1-c_2}{\alpha_{\mathcal{S}}^*(2)} + \int_{s_{2}=0}^{\phi_{2}^{-1}(v^{*}_{2})} (\phi_{2}(s_{2})-c_2)\cdot ds_{2} \\
         = & \frac{2 \cdot v^{*}_{2}-c_1-c_2}{\alpha_{\mathcal{S}}^*(2)} \\
        \ge &  \frac{\OPT(\mathcal{I})}{\alpha_{\mathcal{S}}^*(2)},
    \end{align*}
where the first term in the first inequality arises from the fact that if the realized price for the first unit of the item, denoted as $P_{1} = \phi_{1}(s_1)$, is set below $L_{2}$, then in the worst-case scenario, the value obtained from the first item will be at least equal to $\phi_{1}(s_1)$. The second and third terms follow the same reasoning. The second inequality follows the fact that $\phi_{2}(s_2)$ is non-decreasing. The third and forth equalities follow the design of  
 $\phi_1(s_1)$ and $\phi_2(s_2)$ in Theorem \ref{corrolary:upper-bound-small-inventory-optimality}. 
 
 Combining the analysis of the above three cases, Corollary \ref{corrolary:upper-bound-small-inventory-optimality} follows.

\section{Proof of Proposition \ref{upper-bound-small-inventory-lemma}}
\label{appendix:upper-bound-small-inventory-lemma}
We first evaluate the value of $G(v^{*}_{1}, v^{*}_{2})$ at its critical points, that is, at the points where $\frac{\partial G(v^{*}_{1}, v^{*}_{2})}{\partial v^{*}_{1}} = 0$ and $\frac{\partial G(v^{*}_{1}, v^{*}_{2})}{\partial v^{*}_{2}} = 0$, and show that $G(v^{*}_{1}, v^{*}_{2}) \geq 0$ holds at these critical points. After that, the proposition follows by evaluating the values of $G(v^{*}_{1}, v^{*}_{2})$ at the four boundary hyperplanes of its domain.

First, let us compute $\frac{\partial G(v^{*}_{1},v^{*}_{2}) }{\partial v^{*}_{1}} $. It follows that:
        \begin{align*}
           \frac{\partial G(v^{*}_{1},v^{*}_{2}) }{\partial v^{*}_{1}}  =  \frac{1}{\alpha_{\mathcal{S}}^*(2)} - \frac{2}{\alpha_{\mathcal{S}}^*(2)}\cdot \frac{v^{*}_{2}-c_{1}}{v^{*}_{1}-c_{1}}  + \frac{2}{\alpha_{\mathcal{S}}^*(2)} \cdot \frac{v^{*}_{2}-c_{2}}{v^{*}_{1}-c_{1}} \cdot \phi_{2}^{-1}(v^{*}_{2}).
            \end{align*}
Setting the right-hand side of above equation to be zero, we have
\begin{align*}
 \phi_{2}^{-1}(v^{*}_{2})\cdot(v^{*}_{2}-c_2) = v^{*}_{2} - c_1 - \frac{v^{*}_{1}-c_1}{2}.
\end{align*}
Using the equation above, we then compute $G(v^{*}_{1},v^{*}_{2})$ at the points that $\frac{\partial G(v^{*}_{1},v^{*}_{2}) }{\partial v^{*}_{1}} = 0$, it follows that:
        \begin{align*}
            G(v^{*}_{1},v^{*}_{2}) 
            =& \frac{v^{*}_{1}-v^{*}_{2}}{\alpha_{\mathcal{S}}^*(2)} + (v^{*}_{2}-c_1)\cdot(1-\phi_1^{-1}(v^{*}_{1})) +  (v^{*}_{2} - c_1 - \frac{v^{*}_{1}-c_1}{2})\cdot\phi_1^{-1}(v^{*}_{1}) \\
             = &  \frac{v^{*}_{1}-v^{*}_{2}}{\alpha_{\mathcal{S}}^*(2)} + (v^{*}_{2} - c_1) - \frac{v^{*}_{1}-c_1}{2}\cdot \phi_{1}^{-1}(v^{*}_{1})  \\
             \ge & \frac{v^{*}_{1}-v^{*}_{2}}{\alpha_{\mathcal{S}}^*(2)} + (v^{*}_{2} - c_1) - \frac{v^{*}_{1}-c_1}{2}\\
            = & v^{*}_{1} \cdot (\frac{1}{\alpha_{\mathcal{S}}^*(2)}-\frac{1}{2}) + v^{*}_{2} \cdot (1-\frac{1}{\alpha_{\mathcal{S}}^*(2)}) - \frac{c_1}{2} \\
            \ge & \frac{v^{*}_{1}-c_1}{2} \\
            > & 0,
        \end{align*}
leading to the conclusion that  $G(v^{*}_{1}, v^{*}_{2}) \geq 0 $ holds at its critical points.
        
        Next, we consider the boundary hyperplanes and prove that $G(v^{*}_{1},v^{*}_{2})$ is positive in all four boundary planes given below:
        \begin{itemize}
            \item $ G(L_1,v^{*}_{2}), \quad \forall v^{*}_{2} \in [L_{2},U_{2}]$.
            \item $G(U_1,v^{*}_{2}), \quad \forall v^{*}_{2} \in [L_{2},U_{2}]$.
            \item $G(v^{*}_{1},L_2), \quad \forall v^*_{1} \in [L_{1},U_{1}]$.
            \item $G(v^{*}_{1},U_2), \quad \forall v^*_{1} \in [L_{1},U_{1}]$.   \end{itemize}

        We start with the first one $G(L_1,v^{*}_{2})$:
        \begin{align*}
            G(L,v^{*}_{2})
            = & \frac{2\cdot L - c_1-c_2}{\alpha_{\mathcal{S}}^*(2)} + (v^{*}_{2}-c_2) - \frac{L+v^{*}_{2}-c_1-c_2}{\alpha_{\mathcal{S}}^*(2)}\\
            = & (v^{*}_{2}-c_2) - \frac{v^{*}_{2} - L}{\alpha_{\mathcal{S}}^*(2)} \\
            \ge & 0, \qquad  \forall L_{2} \leq v^{*}_{2} \leq U_{2},
           \end{align*}
           where the equations above follow  since $L \ge c_{2}$ holds (the assumption that the marginal production costs are always less than the valuations).

           For the second one $ G(U_{1},v^{*}_{2}) $:
           \begin{align*}
            G(U_{1},v^{*}_{2})
            = & \frac{2\cdot U_{1} - c_1-c_2}{\alpha_{\mathcal{S}}^*(2)} +(v^{*}_{2}-c_2)\cdot \phi_{2}^{-1}(v^{*}_{2})- \frac{U_1+v^{*}_{2}-c_1-c_2}{\alpha_{\mathcal{S}}^*(2)} \\
            = & (v^{*}_{2}-c_2)\cdot \phi_{2}^{-1}(v^{*}_{2}) - \frac{v^{*}_{2}-U_1}{\alpha_{\mathcal{S}}^*(2)} \\
            \ge & 0, \qquad   \forall L_2 \leq v^{*}_{2} \leq U_2=U.
            \end{align*}
            The equations above follow since $(v^{*}_{2}-c_2)\cdot \phi_{2}^{-1}(v^{*}_{2}) \ge \int_{s_2=0}^{\phi^{-1}_{2}(v^{*})} (\phi_{2}(s_2)-c_{2})\cdot ds_{2} \ge 2 \cdot \frac{v^{*}_{2}-U_1}{\alpha_{\mathcal{S}}^*(2)}$ based on the definition of $\phi_{2}(s)$. 

            For the third one $ G(v^{*}_{1},L_2) $:
            \begin{align*}
            G(v^{*}_{1},L_2)
            = & \frac{2\cdot v^{*}_{1} - c_1-c_2}{\alpha_{\mathcal{S}}^*(2)} + (L_2-c_1)\cdot\phi_{1}^{-1}(v^{*}_{1}) - \frac{v^{*}_{1}+L_2-c_1-c_2}{\alpha_{\mathcal{S}}^*(2)} \\
             = & (L_2-c_1)\cdot (1-\phi_{1}^{-1}(v^{*}_{1})) - \frac{L_2-v^{*}_{1}}{\alpha_{\mathcal{S}}^*(2)} \\
             \ge & 0, \qquad \forall  L_1\leq v^{*}_{1} \leq U_1,
             \end{align*}
         where the above equation follows since $(L_2-c_1)\cdot (1-\phi_{1}^{-1}(v^{*}_{1})) \ge \int_{s_1=\phi_{1}^{-1}(v^{*}_{1})}^{\phi^{-1}_{1}(L_{2})} (\phi_{1}(s_1)-c_{1})\cdot ds_{1} \ge 2 \cdot \frac{L_{2}-v^{*}_{1}}{\alpha_{\mathcal{S}}^*(2)}$ (based on the definition of $\phi_{1}(s)$).

         Finally, for the last one $ G(v^{*}_{1},U_{2}) $:
             \begin{align*}
            G(v^{*}_{1},U_{2})
            = & \frac{2\cdot v^{*}_{1} - c_1-c_2}{\alpha_{\mathcal{S}}^*(2)} + (U_{2}-c_1) \cdot (1-\phi_{1}^{-1}(v^{*}_{1}))  +  (U_2-c_2) \cdot \phi_{1}^{-1}(v^{*}_{1}) - \frac{v^{*}_{1}+U_{2}-c_1-c_2}{\alpha_{\mathcal{S}}^*(2)} \\
            \ge & (U_{2}-c_2) - \frac{U_{2} - v^{*}_{1}}{\alpha_{\mathcal{S}}^*(2)} \\
            \ge & 0, \qquad  \forall  L = L_1 \leq v^{*}_{1} \leq U_1,
        \end{align*}
        where the equations above follow since $v^{*}_{1} \ge c_{2}$ holds (again, the assumption that the marginal production costs are always less than the valuations).

Combining all the above analysis, we thus complete the proof of Proposition~\ref{upper-bound-small-inventory-lemma}.

\section{Extension of the Lower Bound Results to General Production Cost Functions}
\label{appendix-lower-bound-extension-general-cost-function}
In this section, we extend our lower bound result in Theorem \ref{lower-bound-main-theorem}, originally developed for the high-value case,\footnote{This corresponds to the case when $c_k < L$, or equivalently, the lowest possible valuation $ L $ is no less than the highest marginal production cost $ c_k $.} to general cumulative production cost functions.

Before presenting the main theorem on obtaining a lower bound for general cost functions, let us introduce some notations. Define
$f^{*}(v):[L,U]\rightarrow \mathbb{R}$ as the conjugate of the total production cost function, where $f^{*}(v) = \max_{i \in [k]} \big(v \cdot i -f(i)\big)$. Additionally, let $ g(v) $ be defined as
\begin{align*}
    g(v) = (f^{*})'(v) = \sum_{i \in [k]} \mathds{1}_{\{v \ge c_{i}\}},
\end{align*}
where $\mathds{1}_{\{A\}}$ is the standard indicator function. Let $\ubar{k}$ denote the smallest natural number such that:
\begin{align*}
    \sum_{i=1}^{\ubar{k}} (L-c_{i}) > \frac{1}{\alpha} \cdot f^{*}(L).
\end{align*}
Following Theorem \ref{lower-bound-main-theorem}, we also define $ \xi $ as follows:
\begin{align*}
    \xi = \frac{\frac{1}{\alpha}\cdot f^{*}(L)-\sum_{i=1}^{\ubar{k}-1} (L-c_{i})}{L - c_{\ubar{k}}}.
\end{align*}

Theorem \ref{lower-bound-main-theorem-general} below extends our lower bound results to settings with general cost functions.
\begin{theorem}
    \label{lower-bound-main-theorem-general}
Given $\mathcal{S} = \{L, U, f\} $ for the \OSDoS problem with $k \ge 1$ and general production cost functions $ f $, no online algorithm, including those with randomization, can achieve a competitive ratio smaller than $\crlb$, where $\crlb$ is the solution to the following system of equations of $\alpha$:
\begin{align}
        \label{lower-bound-U-h-computation-general}
     &  \int_{\eta = L}^{u_{\ubar{k}}} \frac{g(\eta)}{\alpha \cdot (\eta-c_{\ubar{k}})} d\eta = 1 - \xi, \\
\label{lower-bound-other-U-computation-general}
     & \int_{\eta = \ell_{i}}^{u_{\ubar{i}}} \frac{g(\eta)}{\alpha \cdot (\eta-c_{i})}  d\eta = 1, u_{i} = \ell_{i+1}, \ i = \ubar{k}+1, \dots ,k, \\
        & u_{k} = U.
    \end{align}
\end{theorem}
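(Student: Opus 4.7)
The plan is to replicate the representative-function argument developed for Theorem~\ref{lower-bound-main-theorem}, with two substantive changes that account for the possibility $c_i > L$ for some (or even all) $i$. The first change concerns the optimal offline benchmark: on the hard instance $\mathcal{I}_v^{(\epsilon)}$, an offline optimum only produces units $i$ with $c_i \leq v$, so its welfare at stage-$v$ is $f^*(v) = \max_{i \in [k]}(v \cdot i - f(i))$ rather than $kv - \sum_{i=1}^k c_i$. The second change concerns the allocation functions: since a welfare-maximizing algorithm never sells unit $i$ at a price below $c_i$, I would restrict attention to $\psi_i$ with $\psi_i(v) = 0$ for $v < c_i$; any randomized algorithm can be modified to satisfy this restriction without loss of welfare, so this is without loss of generality.

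With these modifications in place, the analogue of Lemma~\ref{lemma:nec-cond} asserts that if there exists an $\alpha$-competitive algorithm, then there exist allocation functions $\{\psi_i\}_{i\in[k]} \in \Omega$ satisfying
\begin{align*}
\sum_{i=1}^k \psi_i(L) \cdot \max\{L - c_i, 0\} + \sum_{i=1}^k \int_{\eta = L}^v (\eta - c_i) \, d\psi_i(\eta) \; \geq \; \frac{1}{\alpha} \cdot f^*(v), \quad \forall v \in [L, U],
\end{align*}
where the inner integral is implicitly supported on $\eta \geq c_i$. I would then mirror Lemma~\ref{lemma:lb:tightness} and Proposition~\ref{prop:lower-bound-psi-star-design} to construct the tight design $\{\psi_i^{(\alpha)}\}_{i\in[k]}$ saturating this inequality as an equality, with each $\psi_i^{(\alpha)}$ sequentially activated: growing from $0$ (or from $\xi$ when $i = \ubar{k}$) to $1$ on an interval $[\ell_i, u_i]$, with consecutive intervals meeting at $u_i = \ell_{i+1}$, while $\psi_j^{(\alpha)}$ for $j < i$ are pinned at $1$ and those for $j > i$ at $0$. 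Differentiating the equality with respect to $v$ and using $(f^*)'(v) = g(v)$ almost everywhere yields on each activation interval the ODE
\begin{align*}
(v - c_i) \cdot \frac{d \psi_i^{(\alpha)}(v)}{dv} = \frac{g(v)}{\alpha}.
\end{align*}
Integrating against the boundary conditions $\psi_{\ubar{k}}^{(\alpha)}(L) = \xi$ and $\psi_{\ubar{k}}^{(\alpha)}(u_{\ubar{k}}) = 1$, and $\psi_i^{(\alpha)}(\ell_i) = 0$ and $\psi_i^{(\alpha)}(u_i) = 1$ for $i > \ubar{k}$, recovers exactly Eq.~\eqref{lower-bound-U-h-computation-general} and Eq.~\eqref{lower-bound-other-U-computation-general}.

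The final step is to argue that $\alpha_{\mathcal{S}}^*(k)$ is precisely the value at which $u_k = U$: for any smaller $\alpha$, the increments $u_i - \ell_i$ would be too small to cover $[L, U]$, forcing $\psi_k^{(\alpha)}(U) > 1$ and violating probability feasibility. The main obstacle I anticipate is the care needed when $g(\eta)$ jumps at $\eta = c_j$, since the ODE becomes piecewise defined and one must verify integrability of $g(\eta)/[\alpha(\eta - c_i)]$ on each subinterval, along with the correct definition of $\ubar{k}$ (which may be larger than in the high-value case since units with $c_i > L$ contribute zero to welfare at stage-$L$, so $\ubar{k}$ is defined relative to $f^*(L)$). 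A further edge case is $c_k > U$: then $\psi_k^{(\alpha)} \equiv 0$ and the argument effectively collapses to the first $g(U)$ units, so one must confirm that the integral equations still admit a consistent solution terminating at $u_{g(U)} = U$.
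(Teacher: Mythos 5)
Your approach is essentially the same as the paper's proof in Appendix~\ref{appendix-lower-bound-extension-general-cost-function}: replace the offline benchmark by $f^*(v)$, carry over the representative-function machinery (the analogues of Lemma~\ref{lemma:nec-cond}, Lemma~\ref{lemma:lb:tightness}, and Proposition~\ref{prop:lower-bound-psi-star-design}), differentiate the saturated equality to obtain the piecewise ODE $(v - c_i)\,d\psi_i^{(\alpha)}(v)/dv = g(v)/\alpha$ on each activation interval, and fix $\alpha_{\mathcal{S}}^*(k)$ by the condition $u_k = U$. The extra care you flag (clipping the boundary term via $\max\{L-c_i,0\}$, integrability of $g(\eta)/(\eta - c_i)$ across the jumps of $g$, and the degenerate case $c_k > U$) goes slightly beyond the paper, which leaves these details implicit, but the underlying argument is identical.
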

\begin{proof}
The proof proceeds similarly to the proof of Theorem \ref{lower-bound-main-theorem} until the derivation of Eq. \eqref{eq:lb-system-ineq}. Given the arrival instance $\mathcal{I}^{(\epsilon)}$ up to the end of stage-$v$, the objective of the offline optimal algorithm equals $f^{*}(v)$. Therefore, we reformulate Eq. \eqref{lower-bound-system-kselection-cost} as follows:
\begin{align*} 
    \ALG\left(\mathcal{I}_v^{(\epsilon)}\right) \ge \frac{1}{\alpha} \cdot f^{*}(v), \quad \forall v \in [L,U].
\end{align*}
In the case of general production cost functions, we derive the following inequality to capture the production level changes of an $\alpha$-competitive algorithm:
\begin{align}    
\sum_{i=1}^{k} \psi_{i}(L) \cdot (L - c_{i}) +  \sum_{i=1}^{k} \int_{\eta =L}^{v} (\eta - c_{i} )d\psi_{i}(\eta)  \ge \frac{1}{\alpha} \cdot f^{*}(v).
    \label{eq:lb-system-ineq-general}
\end{align}
In addition, we define $\alpha_{\mathcal{S}}^*(k)$  as follows:
\begin{align*} 
\alpha_{\mathcal{S}}^*(k) = \inf \Big\{ &  \alpha \ge 1 \big | \text{there exist a set of } k \text{ allocation}\\  
& \text{functions } {\{\psi_{i}(v)\}}_{\forall i \in [k]}  \in \Omega \text{ that satisfy Eq. \eqref{eq:lb-system-ineq-general}} \Big \}.
\end{align*}
From this point onward, the proof continues in the same manner as the proof of Theorem \ref{lower-bound-main-theorem}. Let us define the function  $\chi^{\alpha}(v):[L,U] \rightarrow [0,k]$ and the set of functions ${\{\psi^{\alpha}_{i}(v)\}}_{i \in [k]}$ as specified in  Eq. \eqref{lower-bound-proof-define-chi-function} and Eq. \eqref{prop:lower-bound-psi-star-design}. Consequently, Lemma \ref{lemma:lb:tightness} holds as long as we have increasing marginal production costs (i.e., diseconomies of scale) and Lemma  \ref{property-1} that follows the definition of ${\{\psi^{\alpha}_{i}(v)\}}_{i \in [k]}$ holds in this case as well.

The primary distinction between the two proofs arises in the following proposition, which gives an explicit design of the function $\{\psi^{\alpha}_{i}\}_{\forall i \in [k]}$ by replacing the inequality with an equality in  Eq. \eqref{eq:lb-system-ineq-general}. 

\begin{proposition}\label{lemma-function-desing-lower-bound-cost-general}
For any $\alpha \geq \crlb $, there exist a unique set of functions ${\{\psi^{\alpha}_{i}(v)\}}_{\forall i \in [k]}$ that satisfy Eq. \eqref{eq:lb-system-ineq-general} with an equality:
     \begin{align*}
         &\psi^{\alpha}_{i}(v) = 1, \quad \forall v \in [L,U], \  1 \leq i \leq \ubar{k}-1, \\
         &\psi^{\alpha}_{\ubar{k}}(v) = \begin{cases}
         0 &  v \leq \ell_{\ubar{k}}, \\
         \xi + \int_{\eta=L}^{v} \frac{g(\eta)}{\alpha \cdot (\eta -c_{i})} d\eta, &  v \in [L,u_{\ubar{k}}], \\
         1 & v \ge u_{\ubar{k}},
         \end{cases}\\
         &\psi^{\alpha}_{i}(v) =\begin{cases}
         0 &  v \leq \ell_{i}, \\
          \int_{\eta=\ell_{i}}^{v} \frac{g(\eta)}{\alpha \cdot (\eta -c_{i})} d\eta, &  v \in [\ell_{i},u_{i}], \\
         1 & v \ge u_{i},
         \end{cases} ,\quad  i = \ubar{k}+1, \dots ,k-1. \\
         &\psi^{\alpha}_{k}(v) =\begin{cases}
         0 &  v \leq \ell_{k}, \\
          \int_{\eta=\ell_{k}}^{v} \frac{g(\eta)}{\alpha \cdot (\eta -c_{k})} d\eta, &  v \in [\ell_{k},U],
         \end{cases}
     \end{align*}
 where the intervals are specified by:
 \begin{align}
 \label{eq:appex:lb:general-1}
     &  \int_{\eta = L}^{u_{\ubar{k}}} \frac{g(\eta)}{\alpha \cdot (\eta-c_{\ubar{k}})} d\eta = 1 - \xi, \\
     \label{eq:appex:lb:general-2}
     & \int_{\eta = \ell_{i}}^{u_{\ubar{i}}} \frac{g(\eta)}{\alpha \cdot (\eta-c_{i})}  d\eta = 1, u_{i} = \ell_{i+1}, \ \forall i = \ubar{k}+1, \dots ,k.
 \end{align}
 \end{proposition} 
 
In the proposition above, for any given $\alpha \geq \crlb $, the values of $u_{i}$ and $\ell_{i}$ can be determined. We begin by solving Eq. \eqref{eq:appex:lb:general-1} to find the value of $u_{\ubar{k}}$, and then proceed to find the value of other variables $ \{u_{i}\}_{\forall i}$ using Eq. \eqref{eq:appex:lb:general-2}.

 Based on the above proposition, as the value of $\alpha$ decreases, the value of $u_{k}$ also decreases. Again, following the same reasoning as the proof of Theorem \ref{lower-bound-main-theorem}, the lower bound $\alpha_{\mathcal{S}}^*(k)$ is the value of $\alpha$ for which $u_{k}$ computed above is equal to $U$. We thus complete the proof of Theorem \ref{lower-bound-main-theorem-general}. 
\end{proof}

\section{Extension of the Upper Bound Results to General Production Cost Functions}
\label{appendix:upper-bound-general-cost-function}
In this section, we extend the randomized dynamic pricing scheme \rDynamic, originally developed for the high-value case, to general cumulative production cost functions.
\vspace{-0.1cm}
\begin{theorem}\label{thm:upper-bound-general-cost-function}
        Given $\mathcal{S} = \{L, U, f\} $ for the \OSDoS problem with $k \ge 1$,  \rDynamic (Algorithm \ref{alg:kselection-cost}) is $\max_{{i \in [k]}} \alpha_{\mathcal{S}}^*(k) \cdot (1+\frac{U_{i}-c_{i}}{f^{*}(U_{i-1})})$-competitive for the following design of the pricing functions $\{\phi_{i}\}_{\forall i \in [k]}$, where $\alpha_{\mathcal{S}}^*(k)$ is the lower bound obtained in Theorem \ref{lower-bound-main-theorem-general}: 
        \begin{align*}
            & \phi_{i}(s) = L, \quad \forall s \in [0,1], \ i \in [\ubar{k}^{*}-1],  \\
        &  \phi_{\ubar{k}^{*}}(s) = \begin{cases}
                L &  s \in [0,\xi^{*}], \\
            \psi_{\ubar{k}^{*}}^{-1}(s) & s \in (\xi^{*},1],
            \end{cases} \\
            & \phi_{i}(s) = \psi^{-1}_{i}(s), \quad \forall s \in [0,1], \ i = \ubar{k}^{*}+1,\dots,k,
        \end{align*}
where the set of functions $\{\psi_{i}\}_{\forall i \in \{\ubar{k}^{*},\cdots, k\}} $ are defined as follows:
\begin{align*}
&\psi_{\ubar{k}^{*}}(v) = 
         \xi^{*} + \int_{\eta=L}^{v} \frac{g(\eta)}{\alpha \cdot (\eta -c_{i})} d\eta, \quad \forall v \in [L,U_{\ubar{k}^{*}}], \\
&\psi_{i}(v) =
          \int_{\eta=\ell_{i}}^{v} \frac{g(\eta)}{\alpha \cdot (\eta -c_{i})} d\eta, \quad  \forall v \in [L_{i},U_{i}],\  i = \ubar{k}^{*}+1, \dots ,k; \\
\end{align*}
the parameters $\ubar{k}^{*} $ and $ \xi^{*} $ are respectively the values of $  \ubar{k} $ and $ \xi $ defined in Appendix~\ref{appendix-lower-bound-extension-general-cost-function}, corresponding to $ \alpha = \alpha_{\mathcal{S}}^*(k) $, and the price intervals $ \{[L_i, U_i]\}_{\forall i\in [k]} $ are given as follows:
\begin{align*}
     &  \int_{\eta = L}^{U_{\ubar{k}^{*}}} \frac{g(\eta)}{\alpha \cdot (\eta-c_{\ubar{k}})} d\eta = 1 - \xi, \\
     & \int_{\eta = L_{i}}^{U_{i}} \frac{g(\eta)}{\alpha \cdot (\eta-c_{i})}  d\eta = 1, u_{i} = \ell_{i+1}, \forall i = \ubar{k}^{*}+1, \dots ,k.
\end{align*}
\end{theorem}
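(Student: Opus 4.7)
The plan is to adapt the proof of Theorem \ref{upper-bound-large-inventory-cr} in Appendix \ref{appendix:proof-upper-bound-large-inventory-cr} essentially line by line, making three systematic substitutions: replace the high-value benchmark $kv - \sum_i c_i$ by the convex conjugate $f^{*}(v)$ wherever it appears; replace the closed-form exponential pricing functions and their inverses by the integral forms of Proposition \ref{lemma-function-desing-lower-bound-cost-general}; and replace the exponential expansion factor $e^{\alpha_{\mathcal{S}}^*(k)/k}$ by the interval-local factor $1 + (U_i - c_i)/f^{*}(U_{i-1})$, whose worst case over $i \in [k]$ is what appears in the claimed competitive ratio. The enabling identity is that, by the very definition of $\phi_i$ in the theorem, its inverse $\psi_i(v) := \sup\{s : \phi_i(s) \leq v\}$ is precisely the function $\psi_i^{(\alpha)}$ of Proposition \ref{lemma-function-desing-lower-bound-cost-general} at $\alpha = \alpha_{\mathcal{S}}^*(k)$; hence the general-cost necessary condition \eqref{eq:lb-system-ineq-general} holds with equality for these $\psi_i$ on the entire interval $[L,U]$, playing the role that the closed-form exponential identity played in the high-value proof.

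Next, I would fix an arbitrary arrival instance $\mathcal{I}$ and import the notation of Appendix \ref{appendix:proof-upper-bound-large-inventory-cr} wholesale: the worst-case sale count $\omega$ over $\mathbf{P}\in\mathcal{P}$, the realization $\boldsymbol{\pi}$ producing the $\omega$-th sale earliest, the sale data $\{(\tau_i,\nu_i)\}_{i\in[\omega]}$, and the decomposition of the offline support $\mathcal{B}=\mathcal{B}_{1,1}\cup\mathcal{B}_{1,2}\cup\mathcal{B}_2$ at the threshold $L_\omega$. The three structural lemmas \ref{lem:omega-1}, \ref{lem:main:claim3-upper-bound-kselection-cost}, and \ref{lem:main:claim2-kselection-production-cost} depend only on the monotone nesting $L_i = U_{i-1}$ of the pricing intervals, which is built into the general-cost construction of $\{L_i,U_i\}$; so these lemmas port over verbatim with no change in proof.

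For the two-sided bound, on the algorithm side Lemma \ref{lem:omega-1} guarantees at least $\omega-1$ sales and Lemma \ref{lem:main:claim3-upper-bound-kselection-cost} contributes the bonus $V(\mathcal{B}_{1,1})-|\mathcal{B}_{1,1}|\cdot L_\omega$; the change of variables $s=\psi_i(\eta)$ rewrites $\sum_{i=1}^{\omega-1}\int_0^1(\phi_i(s)-c_i)\,ds$ as $\sum_{i=1}^{\omega-1}\int_{L_i}^{U_i}(\eta-c_i)\,d\psi_i(\eta)$, and invoking the equality form of \eqref{eq:lb-system-ineq-general} at $v=U_{\omega-1}$ collapses this to $f^{*}(U_{\omega-1})/\alpha_{\mathcal{S}}^*(k)$. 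On the optimum side, Lemma \ref{lem:main:claim2-kselection-production-cost} caps every valuation outside $\mathcal{B}_{1,1}$ by $U_\omega$, so by bounding these valuations and maximizing over the number of units OPT selects one obtains $\OPT(\mathcal{I}) \leq f^{*}(U_{\omega-1}) + (U_\omega - c_\omega) + [V(\mathcal{B}_{1,1}) - |\mathcal{B}_{1,1}| \cdot L_\omega]$, where the middle term captures the marginal surplus of one additional unit priced at cost. Dividing and cancelling the shared bonus gives the per-$\omega$ ratio $\alpha_{\mathcal{S}}^*(k) \cdot \bigl(1 + (U_\omega - c_\omega)/f^{*}(U_{\omega-1})\bigr)$, and the theorem follows by taking the worst case over $\omega \in [k]$.

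The main obstacle I anticipate is twofold. First, in the sub-case where $\tau_\omega$ is not the top buyer in $\mathcal{I}$ (the analogue of Case~2 of Appendix \ref{appendix:proof-upper-bound-large-inventory-cr}), the closed-form exponentials used in the old proof to evaluate $\int_0^{\phi_\omega^{-1}(\lambda)}(\phi_\omega(s)-c_\omega)\,ds$ are unavailable; in their place one must invoke the equality form of \eqref{eq:lb-system-ineq-general} at the interior point $v=\lambda$, which requires verifying that $\psi_\omega$ is continuous and strictly increasing on $[L_\omega,U_\omega]$. This reduces to showing $g(\eta)/(\alpha(\eta-c_\omega))>0$ on that interval, which holds since $\eta\geq L_\omega\geq c_\omega$ and $g(\eta)\geq\omega>0$ for $\eta\geq c_\omega$; between jumps of the step function $g$, the inverse $\phi_\omega = \psi_\omega^{-1}$ is well defined piecewise, and the required reduction goes through. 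Second, extracting the clean form $U_\omega - c_\omega$ for the extra OPT term, rather than the looser $f^{*}(U_\omega) - f^{*}(U_{\omega-1})$, requires the observation that beyond the threshold $U_{\omega-1}$ a single additional unit bought at valuation at most $U_\omega$ and sold at marginal cost $c_\omega$ contributes at most $U_\omega - c_\omega$ in surplus. Once both points are handled, the remaining work is bookkeeping to push the per-$\omega$ ratio to the worst-case maximum over $i\in[k]$.
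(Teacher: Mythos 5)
Your proposal reproduces the paper's proof of Theorem~\ref{thm:upper-bound-general-cost-function} essentially step for step: the same importation of the structural lemmas on grounds that they depend only on the interval nesting $L_i = U_{i-1}$, the same substitution of $f^*(\cdot)$ for the benchmark $kv - \sum_i c_i$, the same identification $\psi_i = \psi_i^{(\alpha)}$ and invocation of the equality form of Eq.~\eqref{eq:lb-system-ineq-general} at $v = U_{\omega-1}$ (resp.\ $v=\lambda$ in Case~2) to collapse the algorithm's expected revenue, and the same extraction of the extra term $\nu_{\tau_\omega} - c_\omega \leq U_\omega - c_\omega$ on the optimum side via the conjugate bound $(|\mathcal{B}|-1)U_{\omega-1} - \sum_{i<|\mathcal{B}|} c_i \leq f^*(U_{\omega-1})$. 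The only caveat worth noting is that your "cancelling the shared bonus" step silently uses that the competitive ratio is at least one so that adding the same nonnegative bonus to numerator and denominator only decreases the ratio; this is the same (unstated) observation the paper relies on.
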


\begin{proof}
The proof will follow the same process as the proof in Appendix \ref{appendix:proof-upper-bound-large-inventory-cr}. So we only provide a brief proof sketch.

Consider an arbitrary arrival instance  $\mathcal{I} = \{v_{t}\}_{t \in [T]}$. Recall that the random price vector $\mathbf{P} = \{P_1, \cdots, P_k\} $ is generated using the pricing functions $\{\phi_{i}\}_{\forall i \in [k]}$ at the beginning of \rDynamic (line \ref{line_P_vector} of Algorithm \ref{alg:kselection-cost}). Let us define the random variable $W(\mathbf{P})$, the variable $\omega$ and the price vector $\boldsymbol{\pi}$, the set $\{\nu_{i}, \tau_{i}\}_{\forall i \in [\omega]}$, and $W^{\tau_{\omega}}(\mathbf{P})$ in the same fashion as in Appendix \ref{appendix:proof-upper-bound-large-inventory-cr}.

Following the same reasoning, the property in Eq. \eqref{eq:upper-bound-large-inventory-proof-1} can be derived for $\{\nu_{i}\}_{\forall i \in [\omega]}$, and the lemmas \ref{appendix:prop:omega-1}, \ref{appendix:prop:main:claim3-upper-bound-kselection-cost}, and \ref{appendix:prop:main:claim2-kselection-production-cost} follow as well.

We also define $\mathcal{B} \subseteq \mathcal{I}$, as before, to be the set of highest-valued buyers to whom the offline optimal algorithm allocates a unit of the item in instance $\mathcal{I}$. We further divide $\mathcal{B}$ into two subsets: $\mathcal{B}_{1}$ and $\mathcal{B}_{2}$, as done in the previous proof. Additionally, we partition $\mathcal{B}_{1}$ into two subsets: $\mathcal{B}_{1,1}$ and $\mathcal{B}_{1,2}$, as before.

We continue our analysis for two separate cases that can arise depending on the instance $\mathcal{I}$. In this proof, we only provide the proof for the first case and the proof of the second case follows similarly as Appendix \ref{appendix:proof-upper-bound-large-inventory-cr}.

\textbf{Case 1:}
In this case,  no buyer in $\mathcal{B}_{2}$ has a valuation greater than $U_{\omega-1}$ except for the buyer at time $\tau_{\omega}$. Therefore, the buyer at time $\tau_{\omega}$ possesses the highest valuation in instance $\mathcal{I}$. The following upper bound can be derived for $\OPT(\mathcal{I})$, which denotes the objective value of the offline optimal algorithm:
\vspace{-0.2cm}
\begin{align*}
    \OPT(\mathcal{I}) &= V(\mathcal{B}_{1} ) + V(\mathcal{B}_{2} ) - \sum_{i=1}^{|\mathcal{B}|} c_{i}\\
    & \leq V(\mathcal{B}_{1} ) + (|\mathcal{B}_{2}|-1) \cdot U_{\omega-1} + \nu_{\tau_{\omega}} - \sum_{i=1}^{|\mathcal{B}|} c_{i} \\
    &=  V(\mathcal{B}_{1,1}) + V(\mathcal{B}_{1,2}) + (|\mathcal{B}_{2}|-1)  \cdot U_{\omega-1} + \nu_{\tau_{\omega}} - \sum_{i=1}^{|\mathcal{B}|} c_{i} \\
    & \leq  |\mathcal{B}_{1,1}| \cdot U_{\omega-1} + (V(\mathcal{B}_{1,1}) - |\mathcal{B}_{1,1}| \cdot U_{\omega-1} )  \\
    & \qquad + |\mathcal{B}_{1,2}| \cdot U_{\omega-1} + (|\mathcal{B}_{2}|-1)  \cdot U_{\omega-1} + \nu_{\tau_{\omega}} - \sum_{i=1}^{|\mathcal{B}|} c_{i} \\
    & = (|\mathcal{B}_{1,1}| + |\mathcal{B}_{1,2}| + |\mathcal{B}_{2}|-1) \cdot U_{\omega-1}  + \nu_{\tau_{\omega}}  \\
    & \qquad + (V(\mathcal{B}_{1,1}) - |\mathcal{B}_{1,1}| \cdot U_{\omega-1} ) - \sum_{i=1}^{|\mathcal{B}|} c_{i} \\
    & \leq f^{*}(U_{\omega-1}) + (V(\mathcal{B}_{1,1}) - |\mathcal{B}_{1,1}| \cdot U_{\omega-1} ) + \nu_{\tau_{\omega}} - c_{\omega},
\end{align*}
where the first inequality follows the condition of \textbf{Case 1}. The second inequality follows the definition of the sets $\mathcal{B}_{1,1}$ and $\mathcal{B}_{1,2}$.
Finally, the third inequality follows since based on definition of $f^{*}$, we have $(|\mathcal{B}_{1,1}| + |\mathcal{B}_{1,2}| + |\mathcal{B}_{2}|-1) \cdot U_{\omega-1}  - \sum_{i=1}^{|\mathcal{B}|-1} c_{i} \leq  f^{*}(U_{\omega-1})$.

Moving forward, we can lower bound the expected performance of \rDynamic under $\mathcal{I}$, denoted by $\mathbb{E} [\ALG(\mathcal{I})]$, using the same approach as before. 
\begin{align*}
    \mathbb{E}  [\ALG(\mathcal{I})]
    \ge  \sum_{i=1}^{\omega-1} \int_{0}^{1} \phi_i(\eta) d\eta \ + (V(\mathcal{B}_{1,1}) - |\mathcal{B}_{1,1}| \cdot U_{\omega-1}) - \sum_{i=1}^{\omega-1} c_{i}.
\end{align*}

Based on the definition of $\{\phi_{i}\}_{\forall i\in[k]}$, we have:
\begin{align*}
    \mathbb{E} [\ALG(\mathcal{I})]
     \ge & \sum_{i=1}^{\omega-1} \int_{0}^{1} \phi_i(\eta) d\eta \  - \sum_{i=1}^{\omega-1} c_{i}  + (V(\mathcal{B}_{1,1}) - |\mathcal{B}_{1}| \cdot U_{\omega-1})  \\
   = & \sum_{i=1}^{k} \psi_{i}(L) \cdot (L - c_{i})+ \sum_{i=1}^{\omega-1} \int_{\eta =L}^{U_{\omega-1}} (\eta - c_{i} )d\psi_{i}(\eta) + (V(\mathcal{B}_{1,1}) - |\mathcal{B}_{1}| \cdot U_{\omega-1}).
\end{align*}
Furthermore, based on the design of $\{\psi_{i}\}_{\forall i \in [k]}$ in Theorem \ref{thm:upper-bound-general-cost-function}, we have
\begin{align*}
    &\sum_{i=1}^{k} \psi_{i}(L) \cdot (L - c_{i}) + \sum_{i=1}^{\omega-1} \int_{\eta =L}^{U_{\omega-1}} (\eta - c_{i} )d\psi_{i}(\eta) + (V(\mathcal{B}_{1}) - |\mathcal{B}_{1}| \cdot U_{\omega-1})  \\
   \ge\  &  \frac{1}{\alpha_{\mathcal{S}}^{*}(k)} f^{*}(U_{\omega-1}) + (V(\mathcal{B}_{1}) - |\mathcal{B}_{1}| \cdot U_{\omega-1}).
\end{align*}
Putting together the above lower and upper bounds, it follows that:
\begin{align*}
    \frac{\OPT(\mathcal{I})}{ \mathbb{E} [\ALG(\mathcal{I})]}
    \leq & \frac{f^{*}(U_{\omega-1}) + (V(\mathcal{B}_{1,1}) - |\mathcal{B}_{1,1}| \cdot U_{\omega-1} ) + \nu_{\tau_{\omega}} - c_{\omega}}{\frac{1}{\alpha_{\mathcal{S}}^*(k)} f^{*}(U_{\omega-1}) + (V(\mathcal{B}_{1}) - |\mathcal{B}_{1}| \cdot U_{\omega-1})} \\
    \leq & \frac{f^{*}(U_{\omega-1})  + \nu_{\tau_{\omega}} - c_{\omega}}{\frac{1}{\alpha_{\mathcal{S}}^*(k)} f^{*}(U_{\omega-1})} \\
     = & \alpha_{\mathcal{S}}^*(k) \cdot \left(1+\frac{\nu_{\tau_{\omega}}-c_{\omega}}{f^{*}(U_{\omega-1})} \right) \\
    \leq & \alpha_{\mathcal{S}}^*(k) \cdot \left(1+\frac{U_{\omega}-c_{\omega}}{f^{*}(U_{\omega-1})} \right) \\
    \leq & \max_{{i \in [k]}} \alpha_{\mathcal{S}}^*(k) \cdot \left( 1+\frac{U_{i}-c_{i}}{f^{*}(U_{i-1})} \right).
\end{align*} 

\textbf{Case 2:}
In the set of buyers $\mathcal{B}_{2}$, there are other buyers with valuations greater than $U_{\omega-1}$ besides the buyer at time $\tau_{\omega}$.
The proof in this case follows the same structure as the proof above and the proof in Appendix \ref{appendix:proof-upper-bound-large-inventory-cr}.
\end{proof}

\end{document}